\newcommand{\subs}[2]{\ensuremath{#1{\mapsto}#2}}
\newcommand{\wrt}{w.r.t.\xspace}
\newcommand{\aka}{a.k.a.\xspace}
\newcommand{\eg}{e.g.\xspace}
\newcommand{\ie}{i.e.\xspace}
\newcommand{\cf}{cf.\xspace}
\newcommand*{\secref}[1]{Sec.~\ref{sec:#1}}
\newcommand*{\defref}[1]{Def.~\ref{def:#1}}
\newcommand{\as}[1]{}
\newcommand{\ron}[1]{}
\newcommand{\rui}[1]{}
\providecommand{\tabularnewline}{\\}
\newenvironment{lyxlist}[1]
	{\begin{list}{}
		{\settowidth{\labelwidth}{#1}
		 \setlength{\leftmargin}{\labelwidth}
		 \addtolength{\leftmargin}{\labelsep}
		 }}
	{\end{list}}
\newtheorem{idea}{Idea}
\begin{document}
\title{A Formal Model to Prove Instantiation Termination for E-matching-Based
Axiomatisations\\{\normalsize (Extended Version)}}
% \title{A Formal Model to Prove Instantiation Termination for E-matching-Based
% Axiomatisations}
\titlerunning{A Formal Model to Prove Instantiation Termination of E-matching}
% If the paper title is too long for the running head, you can set
% an abbreviated paper title here
\author{Rui Ge\textsuperscript{\Envelope}\orcidID{0000-0003-1049-8132} \and Ronald Garcia\orcidID{0000-0002-0982-1118} \and Alexander J. Summers\orcidID{0000-0001-5554-9381}}
\authorrunning{R. Ge et al.}
\institute{Department of Computer Science\\
University of British Columbia, Vancouver, BC, Canada\\
\email{\{rge, rxg\}@cs.ubc.ca \qquad{} alex.summers@ubc.ca}}

\maketitle
\normalem % use italics for emph, overriding ulem

\global\long\def\FuncFont#1{\textit{#1}}%

\global\long\def\Member{\FuncFont{member}}%

\global\long\def\Subset{\FuncFont{subset}}%

\global\long\def\Union{\FuncFont{union}}%

\global\long\def\Inter{\FuncFont{inter}}%

\global\long\def\Diff{\FuncFont{diff}}%

\global\long\def\Add{\FuncFont{add}}%

\global\long\def\Remove{\FuncFont{remove}}%

\global\long\def\IsEmpty{\FuncFont{isEmpty}}%

\global\long\def\Empty{\FuncFont{empty}}%

\global\long\def\Card{\FuncFont{card}}%

\global\long\def\SSet{\FuncFont{Set}}%

\global\long\def\BBool{\FuncFont{Bool}}%

\global\long\def\NNat{\FuncFont{Nat}}%

\global\long\def\Skolem{\FuncFont{Sk}}%

\global\long\def\Singleton{\FuncFont{singleton}}%

\global\long\def\Disjoint{\FuncFont{disjoint}}%

\global\long\def\Equal{\FuncFont{equal}}%

\begin{abstract}
  SMT-based program analysis and verification often involve reasoning about program features that have been specified using quantifiers; incorporating quantifiers into SMT-based reasoning is, however, known to be challenging.
  If quantifier instantiation is not carefully controlled, then runtime and  outcomes can be brittle and hard to predict.
  In particular, uncontrolled quantifier instantiation can lead to unexpected incompleteness and even non-termination.
  E-matching is the most widely-used approach for controlling quantifier instantiation, but when axiomatisations are complex, even experts cannot tell if their use of E-matching guarantees completeness or termination.

  This paper presents a new formal model
  that facilitates the proof, once and for all, that giving a complex E-matching-based axiomatisation to an SMT solver, such as Z3 or cvc5, will not cause non-termination.
  Key to our technique is an operational semantics for solver behaviour that models how the E-matching rules common to most solvers are used to determine when quantifier instantiations are enabled, but abstracts over irrelevant details of individual solvers.
  We demonstrate the effectiveness of our technique by presenting a termination proof for a set theory axiomatisation adapted from those used in the Dafny and Viper verifiers.
\end{abstract}

\keywords{SMT solving \and Quantifiers \and Termination proofs \and E-matching.}

\section{Introduction}
\label{sec:introduction}

SMT-based program analysis and verification have advanced dramatically in the past two decades.
These advances have been partly fuelled by major improvements in SAT and SMT solving techniques, as well as their implementations in \linebreak state-of-the-art solvers such as Z3~\cite{Z3} and cvc5~\cite{cvc5}.
Leveraging these advances in SMT, a huge number of program analysis and verification tools have been based on SMT, including for example Dafny~\cite{Leino2010-Dafny}, Why3~\cite{Filliatre2013-Why3} and Viper~\cite{Mueller2016-Viper}.

Such tools must translate a wide range of program features into SMT queries that model these domain-specific concerns.
While some theories relevant to problem features (\eg linear arithmetic~\cite{Z3}) are natively supported by SMT solvers, most problem features must be modelled by \emph{axiomatisation}.

Axiomatising problem features involves introducing uninterpreted sorts, uninterpreted functions on these sorts, and (crucially) \emph{quantified axioms} that define the intended meaning of these features.
For instance, one can model sets of integers by introducing a sort $\textit{Set}$ for sets, uninterpreted functions $\Member$ and $\Diff$ to represent set membership and set difference respectively, and quantified axioms such as
$\forall s_{1},s_{2}:\SSet,\,x:\textit{Int}.\;\Member(x,s_{2})\rightarrow\neg\Member(x,\Diff(s_{1},s_{2}))$.%,

Such modelling to SMT is expressive, but makes heavy use of quantifiers that must be instantiated during SMT solving.
But quantifier instantiation in SMT notoriously presents notable challenges, potentially causing slow performance and even non-termination, as well as unexpectedly-failing proofs \cite{Becker2019-Axiom-Profiler,Moskal2009-Programming-with-Triggers}. Worse still, latent quantifier instantiation issues may not surface on all runs, but cause a ``butterfly effect''~\cite{Leino2016-Trigger-Selection-Dafny}, meaning that unrelated changes to an input problem may lead to substantial changes in solver behaviour along these lines.

To manage these issues, solvers allow quantifiers to be annotated with instantiation \emph{triggers} (\aka{} instantiation \emph{patterns}).
Triggers specify (possibly multiple) shapes of ground terms that must be \emph{known} (occur in the current proof context, modulo known equalities) to enable a quantifier instantiation.
This method of guiding quantifier instantiation is referred to as \emph{E-matching}~\cite{Simplify,Nelson1980-PV} and is supported by virtually all modern SMT solvers.

However, selecting appropriate triggers is an art. 
The choice requires expertise in managing a fine balance: 
not too restrictive, to avoid insufficient quantifier instantiations, 
and not too permissive, to prevent excessive instantiations. 
Subtle issues can easily lead to the same hard-to-debug issues even for the most talented of SMT artists~\cite{Leino2016-Trigger-Selection-Dafny,Moskal2009-Programming-with-Triggers}, 
and even when successful it is unclear how one can \emph{know} that the chosen triggers are guaranteed to work in future.

The ideal aim is to achieve both instantiation completeness and instantiation termination.
\emph{Instantiation completeness} means that all necessary quantifier instantiations for a proof can be made by the solver.
\emph{Instantiation termination} means that the solver will never endlessly explore infinitely many quantifier instantiations.
In this paper, we focus on instantiation termination.\footnote{Instantiation termination can be trivially achieved by pathological trigger choices that prevent all instantitions (similar to proving a function terminating under a false precondition). However, such axiomatisations are not useful (or used) in practice.} 

Failures of instantiation termination stem from \emph{matching loops}:
the problematic scenario of a quantifier instantiation (possibly indirectly) leading to learning new terms that cause further instantiations of the same quantifier, leading to a potentially endless loop.
Matching loops \emph{can} cause non-termination, but (problematically, for debugging) may only do so on some runs (in case heuristics in the solver arrive at the necessary facts ``in time'').

Our paper enables proving that matching loops have been avoided altogether.
We present a high-level formal model of E-matching-based quantifier instantiation that suffices to prove \emph{once and for all} that a given set of trigger-annotated quantifiers, when combined with 
\emph{any possible} ground facts, guarantees instantiation termination, thereby ensuring the absence of matching loops. 
Our model is designed to be broadly applicable because it models the core E-matching rules common to most solvers, but abstracts over implementation details where individual solvers make different choices. 
Our model enables a new kind of termination proof, allowing axiomatisation users to independently construct these proofs and confidently pursue terminating responses to ground theory queries. 

Our main technical contributions are as follows:
\begin{enumerate}
\item
  We develop a formal model for reasoning about instantiation termination in E-matching-based axiomatisations.
  The model abstracts from solver implementation details but accounts for essential features for termination proofs.

\item We validate the practical utility of our formal model by using it to prove instantiation termination of a challenging set theory axiomatisation adapted from the cores of those used in the Dafny and Viper verifiers.

\item
  We outline a methodology for constructing instantiation termination proofs using our model.
  Our methodology involves classifying quantifiers according to certain characteristics, using these to incrementally define and refine a progress measure that eventually supports the whole axiomatisation.
\end{enumerate}

Our research draws inspiration from Dross et al.'s~\cite{Dross2016} prior formalisms for quantifier instantiation via E-matching.
To the best of our knowledge, their work represents the sole formal attempt in this space before ours.
However, we find their formalism incompatible with our goals: we elaborate on
this point in~\secref{related-work}.

% Full details and supporting proofs are available in our \textcolor{red}{technical report (TR hereafter)~[...]. }

\section{Problem Statement }
We begin with a basic grounding in E-matching, and use this to lay out the most important challenges a formal model needs to address to be useful in practice.

\subsection{Quantifier Instantiation via E-matching }
Quantifiers\footnote{We use the term \emph{quantifier} (also) as a synonym for quantified formula, and \emph{quantifier body} to refer to the subformula that falls within the scope of a quantifier.} are crucial for effectively modelling external problem features as
an SMT problem.
However, when determining whether such a first-order problem is satisfiable, an SMT solver must contend with quantifiers ranging over infinite sorts.
A successful proof will (and need) only involve finitely many instantiations of the quantifiers, but selecting these is in general undecidable.
Most solvers provide \emph{E-matching} as the main means of guiding instantiation.

E-matching requires each quantifier to be associated with instantiation \emph{triggers} (\aka{} instantiation \emph{patterns}). 
Triggers consist of terms containing the quantified variables, and prescribe that instantiations should only be made when ground terms of matching shape(s) arise in the current proof search.

During a proof search, SMT solvers maintain and update the currently-known ground terms and (dis)equalities on them in an efficient congruence-closure data structure called an \emph{E-graph}.
This information enables \emph{E-matching}~\cite{deMoura2007-Efficient-E-matching,Nelson1980-PV}---matching modulo currently-known equalities---of known terms against quantifier triggers,
which enables new instantiations, and of potential instantiations against previous ones, which prevents redundant instantiations.

\begin{example}
  \label{exa:example-original}Consider the set theory axiom presented early in \secref{introduction}, now annotated with triggers (written comma-separated inside square brackets)%
  \footnote{For brevity, sorts on quantified variables are omitted in this example and hereafter.}:
\[
\forall s_{1},s_{2},x.\left[\Diff(s_{1},s_{2}),\Member(x,s_{2})\right]\Member(x,s_{2})\rightarrow\neg\Member(x,\Diff(s_{1},s_{2}))
\]
The trigger consists of two terms, $\Diff(s_{1},s_{2})$ and $\Member(x,s_{2})$; a multi-term trigger prescribes that terms matching \emph{all} (here, both) patterns must be known for some instantiation of the quantified variables. If so, the corresponding instantiation of the quantifier \emph{itself} will be made: the instantiated quantifier body will be treated as a newly-derived fact (typically, a \emph{clause}), and the solver will also record that this instantiation has been made (to avoid doing so again).

Suppose that an E-graph represents the congruence closure of the facts: $\Member(t,a){=}\top$,
$\Diff(b,c){\neq}b$ and $a{=}c$. E-matching will find a successful match against the trigger above; although it might seem that there is no consistent pair of terms here, the equality $a=c$ means that (modulo equalities) we can consider the terms $\Member(t,a)$ and $\Diff(b,a)$ as known in the E-graph, which match the triggers under the instantiation $\subs{s_{1}}{b}$, $\subs{s_{2}}{a}$ and $\subs{x}{t}$. The corresponding instantiation of the quantifier body yields $\neg\Member(t,a)\vee\neg\Member(t,\Diff(b,a))$. Subsequently, the same quantifier cannot be instantiated with \eg{} $\subs{s_{1}}{b}$, $\subs{s_{2}}{c}$ and $\subs{x}{t}$ since, again modulo equalities, this is an equivalent instantiation.
\end{example}

\begin{example}
\label{exa:example-matching-loop}
Consider the same quantifier, with a different trigger, 
within the context of a different E-graph that represents the congruence closure of the facts: $\Member(t,a)=\top$ and $\Member(t,b)=\top$.
\[
\begin{array}{l}
\forall s_{1},s_{2},x.\left[\Member(x,s_{1}),\Member(x,s_{2})\right]\\
\quad\quad\quad \Member(x,s_{2})\rightarrow\neg\Member(x,\Diff(s_{1},s_{2}))
\end{array}
\]
Now four instantiations are enabled: one for each pair of $\Member$ applications in our current model (and E-graph): \eg{} instantiating $\subs{s_{1}}{a}$, $\subs{s_{2}}{b}$ and $\subs{x}{t}$ or $\subs{s_{1}}{b}$, $\subs{s_{2}}{a}$ and $\subs{x}{t}$. All four will be made: they are different choices since we don't know that $a=b$. The second, for example, causes the new clause (rewritten as a disjunction) $\neg\Member(t,a)\vee\neg\Member(t,\Diff(b,a))$ to be assumed. This doesn't change the E-graph (which is populated only by assumed \emph{literals}); clauses are kept separately in the prover state. However, case-splitting on this clause may lead to the literal $\neg\Member(t,\Diff(a,b))$ being added. At this point, five \emph{new} quantifier instantiations will be enabled; the number of pairs of $\Member$ applications has increased. In fact, by alternately instantiating this quantifier and case-splitting on newly-learned clauses, we can uncover new instantiations indefinitely, in a so-called \emph{matching loop}.
\end{example}

These first examples show that the choice of triggers affects instantiation behaviour, and that modelling instantiations requires considering not only initial terms, but also facts learned during proof search and case-splitting choices.

\begin{example}
\label{exa:example-nested-quantifier}Consider the following ``subset elimination'' axiom (also from the set theory we tackle later) with nested quantifiers:
\[
\begin{array}{l}
\forall s_{1},s_{2}.\left[\Subset(s_{1},s_{2})\right]\;\Subset(s_{1},s_{2})\rightarrow\\
\quad\quad\left(\forall x.[\Member(x,s_{1})][\Member(x,s_{2})]\;\Member(x,s_{1})\rightarrow\Member(x,s_{2})\right)
\end{array}
\]
The inner quantifier has \emph{two} triggers, defining \emph{alternative} conditions for instantiation (a term of either shape is sufficient). Note that these triggers depend on the outer-quantified variables $s_1$ and $s_2$, and thus their instantiations. 

Instantiating an outer quantifier expands the current quantifiers for instantiations. In this example, instantiating the outer quantifier ($\forall s_1, s_2.\dots$) results in a clause that includes a copy of the inner quantifier ($\forall x.\dots$); case-splitting on this clause can cause the copy to be assumed, effectively adding one more quantifier for future instantiations. As such, the instantiation of nested quantifiers \emph{dynamically} introduces new quantifiers, adding complexity to establishing termination arguments---one must be able to identify and predict the quantifiers that will be dynamically introduced. 

% The current quantifiers introduced by instantiations of a nested quantifier may be equivalent. Imagine that in this example the outer quantifier is instantiated with $\subs{s_{1}}{a}$ and $\subs{s_{2}}{b}$, and later with $\subs{s_{1}}{a'}$ and $\subs{s_{2}}{b'}$. The equalities $a=a'$ and $b=b'$ are only learnt after both instantiations, not before---otherwise the second instantiation is disallowed. Consequently, the two copies of the inner quantifier produced should be deemed equivalent. The onus of handling equivalent quantifiers rests on the individual solver implementations. 

\end{example}

\subsection{Objectives for a Formal Model of E-Matching \label{subsec:objectives}}

Given the difficulty of choosing quantifier triggers and \emph{knowing} that their instantiations can \emph{never} continue forever, our objective is to provide formal and usable means of proving such E-matching \emph{termination proofs} once-and-for-all. Rather than attempt to capture the precise behaviour of a specific solver and its configuration, we want a model that abstracts over the behaviours of \emph{any} reasonable implementation of E-matching, while still being sufficiently precise for the proofs to work and be reasonable to construct in practice.

The design of a model for E-matching must address multiple challenges:
\begin{enumerate}
\item How should (intermediate) solver states and the transitions between them be modelled, avoiding over-fitting to specific solver choices while retaining clear and pertinent information suitable for understandable proofs?
\item How should equality-related information and reasoning be captured, given their central nature (for defining enabled matches) but the complexities of the data structures employed in real implementations?
\item How can nested quantifiers (\cf{} Example \ref{exa:example-nested-quantifier}), whose instantiation can introduce new quantifiers on the fly, be supported?
\item How can we make the model extensible to more-complex future applications (\eg{} axiomatisations whose termination depends on theory reasoning)? 
\item How can a formal model enable formal proofs with manageable complexity?
\end{enumerate}

We present our model, designed to address these challenges in the next section; we demonstrate its applicability for termination proofs in \secref{proving}.

\section{An Operational Semantics for E-matching }
We develop our formal model in the style of a \emph{small-step operational
  semantics}, a popular choice for programming languages. In this
operational style, states represent intermediate points of a proof search,
while transitions represent solver steps; non-determinism abstracts over
choices specific solvers make. With this design, our desired notion of instantiation termination can be recast as a familiar style of termination proof, albeit against a semantics with novel core details.

\global\long\def\multi#1{\overrightarrow{#1}}%

\global\long\def\unitriMulti#1#2#3{\forall\multi{#1}.\multi{\left[#2\right]}#3}%

\global\long\def\unitriFlat#1#2#3{\forall#1.\left[#2\right]#3}%

\global\long\def\Tag{\sharp}%

\global\long\def\taggen#1{\widehat{\mathrm{tag}_{\Tag}}\left(#1\right)}%

\global\long\def\taggenOne#1{\mathrm{tag}_{\Tag}\left(#1\right)}%

\global\long\def\taggenOneVar#1#2{\mathrm{tag}_{\Tag}^{#2}\left(#1\right)}%

\global\long\def\filter{\mathrm{filter}}%

\global\long\def\filterUniOp{\filter_{\forall}}%

\global\long\def\filterUni#1{\filterUniOp\left(#1\right)}%

\global\long\def\filterLit#1{\filterLitOp\left(#1\right)}%

\global\long\def\filterLitOp{\filter_{\mathrm{lit}}}%

\global\long\def\state#1#2#3{\left\langle #1,#2,#3\right\rangle }%

\global\long\def\egraph#1{#1^{\mathrm{I}}}%

\global\long\def\ehistory#1{#1^{\mathrm{H}}}%

\global\long\def\tran{\longrightarrow}%

\global\long\def\otran{\tran_{\mathrm{\vee}}}%

\global\long\def\ptran{\tran_{\mathrm{\forall}}}%

\global\long\def\updateEgraph#1#2{#1\triangleleft#2}%

\global\long\def\Egraph#1{#1^{\mathrm{I}}}%

\global\long\def\Ehistory#1{#1^{\mathrm{H}}}%

\global\long\def\Entail#1#2{#1\Vdash#2}%

\global\long\def\NotEntail#1#2{#1\not\Vdash#2}%

\global\long\def\Known#1#2{#1\Vdash_{\mathrm{known}}#2}%

\global\long\def\Class#1#2#3{#1\Vdash_{\mathrm{class}}#2\in#3}%

\global\long\def\Equiv#1#2{#1\sim#2}%

\global\long\def\NotEquiv#1#2{#1\not\sim#2}%

\global\long\def\Amount#1{\left\Vert #1\right\Vert }%

\global\long\def\AmountClass#1{\left\Vert #1\right\Vert _{\mathrm{class}}}%

\global\long\def\AmountLeft#1{\left\Vert #1\right.}%

\global\long\def\AmountRight#1{\left.#1\right\Vert }%

\global\long\def\Inst#1#2{#1\Vdash_{\mathrm{inst}}#2}%

\global\long\def\NotInst#1#2{#1\not\Vdash_{\mathrm{inst}}#2}%

\global\long\def\Select{\vdash_{\mathrm{select}}}%

\global\long\def\Ematching{\vdash_{\mathrm{match}}}%

\global\long\def\MatchSep{\sphericalangle}%

\subsection{Preliminaries \label{subsec:Preliminaries}}

Our syntax for formulas is based around a generalisation of conjunctive normal form, used internally in SMT algorithms; we assume all formulas are pre-converted to this form (existential quantifiers are eliminated by Skolemisation).
\begin{definition}[Formula Syntax]
We assume a pre-defined set of \emph{atoms}\footnote{The pre-defined atoms come from the first-order signature of the problem in question.}, including equalities on terms $t_{1}=t_{2}$. A \emph{literal} $l$ is either an atom or its negation. The grammars of \emph{extended literals} $\phi$, \emph{extended clauses} $C$ and \emph{extended conjunctive normal form
(ECNF) formulas} $A$ are as follows: \[
\begin{array}[t]{rrlrrlrrl}
\phi & \Coloneqq & l \mid (\unitriMulti xTA)^{\Tag\alpha} & \quad\quad C & \Coloneqq & \phi \mid C\vee C & \quad\quad A & \Coloneqq & C \mid A\wedge A
\end{array}
\]
Here, $(\unitriMulti xTA)^{\Tag\alpha}$ denotes a \emph{tagged quantifier}: the (possibly-multiple) variables $\multi x$ are bound, the (possibly multiple) trigger sets $\multi T$ are each marked with square brackets and positioned before the quantifier body $A$, and $\Tag\alpha$ is a \emph{tag} used to identify this particular quantifier.
\end{definition}

As presented in Example \ref{exa:example-original}, a trigger set $T$ is a (non-empty) set of terms, written comma-separated. There are additional requirements: each trigger set must contain each quantified variable at least once, and each term must contain at least one quantified variable. Furthermore, each term must contain at least one uninterpreted function application and no interpreted function symbols such as equalities. These restrictions are common for SMT solvers.

When quantifier tags are not relevant, we omit them for brevity.

\subsection{States \label{subsec:States}}

As illustrated in Examples \ref{exa:example-original} and \ref{exa:example-matching-loop}, both case-splitting and quantifier instantiation steps are crucial to our problem; we define our semantics around these two kinds of transitions. Furthermore, we must abstractly capture information relevant for deciding E-matching questions, tracking in particular which terms and equalities are known (modulo currently known equalities), and which quantifier instantiations have already been made.
\begin{definition}[States]\label{def:states}
States $s\in\textsc{State}$ are defined as follows:
\[
s\Coloneqq\state WAE\mid\lozenge\mid\bot
\]
where $\lozenge$ and $\bot$ are distinguished symbols for \emph{saturated} and \emph{inconsistent} states, $W$ (the \emph{current quantifiers}) is a set of tagged quantifiers, $A$ (the \emph{current clauses}) is a set of extended clauses, and $E$ (the \emph{current E-state}) is explained below.
\end{definition}
For simple applications of our semantics, the set of current quantifiers remains fixed, but for problems with nested quantifiers (\eg{} Example \ref{exa:example-nested-quantifier}), it may grow as a solver runs. As we show, which instantiations are immediately enabled is definable in terms of both the current quantifiers and the current E-state. The current clauses, on the other hand,  generate new literals for the E-state via case-splitting; new extended \emph{clauses} may be added as a consequence of quantifier instantiations.

The inconsistent and saturated states represent two different termination conditions for traces in our semantics: the former due to logical inconsistency, and the latter due to all quantifier instantiations having been exhausted.

\subsection{E-interfaces \label{subsec:E-interfaces}}

Each solver maintains its own implementation of E-graphs to efficiently represent and query the currently-known ground terms modulo congruences and known equalities. Rather than formalising such an implementation,
we devise an abstraction called an \emph{E-interface}, capturing the operations and expected mathematical properties of E-graph implementations.

\begin{definition}[E-interface Judgements] \label{def:E-interface-judgement}An E-interface $\Egraph E$ is a set of equalities and disequalities on terms.\footnote{A positive or negative non-equational literal, $P$, is added to the E-interface via $P=\top$ or $P=\bot$, respectively; $\top \neq \bot$ is preloaded into all E-interfaces.} We write $\Known{\Egraph E}t$ to express that the ground term $t$ is \emph{known} in the E-interface $\Egraph E$; we write $\Entail{\Egraph E}{\Equiv{t_{1}}{t_{2}}}$ to express that the ground terms $t_1$ and $t_2$ are \emph{known equal} in $\Egraph E$. These two judgements are (mutually recursively) defined by (the least fixed-point of) the derivation rules:
\begin{center}
$\dfrac{\Equiv{t_{1}}{t_{2}}\in\Egraph E}{\Entail{\Egraph E}{\Equiv{t_{1}}{t_{2}}}}\textsc{(eq-in)}$\qquad{}$\dfrac{\Entail{\Egraph E}{\Equiv{t_{2}}{t_{1}}}}{\Entail{\Egraph E}{\Equiv{t_{1}}{t_{2}}}}\textsc{(eq-sym)}$\qquad{}$\dfrac{\begin{array}{cc}
\Entail{\Egraph E}{\Equiv{t_{1}}{t_{2}}}\; & \;\Entail{\Egraph E}{\Equiv{t_{2}}{t_{3}}}\end{array}}{\Entail{\Egraph E}{\Equiv{t_{1}}{t_{3}}}}\textsc{(eq-tran)}$\qquad{}$\dfrac{\Known{\Egraph E}t}{\Entail{\Egraph E}{\Equiv tt}}\textsc{(eq-kn-refl)}$\qquad{}$\dfrac{\begin{array}{cc}
\Entail{\Egraph E}{\Equiv{t_{i}}{t_{i}^{\prime}}}\; & \;\Known{\Egraph E}{g\left(t_{1},\dots,t_{i},\dots,t_{n}\right)}\end{array}}{\Entail{\Egraph E}{\Equiv{g\left(t_{1},\dots,t_{i},\dots,t_{n}\right)}{g\left(t_{1},\dots,t_{i}^{\prime},\dots,t_{n}\right)}}}\textsc{(eq-kn-sub)}$\qquad{}$\dfrac{\Entail{\Egraph E}{\Equiv{t_{1}}{t_{2}}}}{\Known{\Egraph E}{t_{1}}}\textsc{(kn-eq)}$\qquad{}$\dfrac{\Known{\Egraph E}{g\left(\dots,t_{i},\dots\right)}}{\Known{\Egraph E}{t_{i}}}\textsc{(kn-sub)}$
\par\end{center}
The judgement $\Entail{\Egraph E}{\NotEquiv{t_{1}}{t_{2}}}$ represents $t_1$ and $t_2$ being \emph{known disequal} in $\Egraph E$; the judgement $\Entail{\Egraph E}{\bot}$ represents that $\Egraph E$ is \emph{inconsistent} (in the logical sense); \cf{} Appx. \ref{sec:appx-formal-model-for-e-matching}.
\end{definition}
E-interfaces are equivalent if they agree on these judgements in all cases. When a proof step adds new literals, we must be able to extend our E-interfaces.

\begin{definition}[E-interface Extension]
For a set of equality and disequality literals $L$, the \emph{update of an E-interface $\Egraph E$ with $L$}, denoted  $\updateEgraph{\Egraph E}{L}$, is a minimal E-interface which satisfies all E-interface judgements that $\Egraph E$ does, while also satisfying $\Entail{\Egraph E}{l}$ for all $l\in L$.
\end{definition}

We call a set of terms a \emph{basis} of $\Egraph E$ if each element is a representative of a different equivalence class\footnote{What we refer to as an \emph{equivalence class} in this paper is known as a \emph{congruence class} in the literature.} induced by the $\Entail{\Egraph E}{\Equiv{t_{1}}{t_{2}}}$ relation on the terms known in $\Egraph E$. As we shall see in the next section, equivalence classes are relevant for defining which quantifier instantiations can be made after which.

\subsection{E-histories, E-states, E-matching  \label{subsec:E-history,-enabled-matches}}

As illustrated in Example \ref{exa:example-original},
E-matching against triggers does not suffice to determine whether a quantifier instantiation should be considered \emph{enabled}; we must also determine whether the instantiation is considered redundant given \emph{previous} ones. We record previous instantiations using our next formal ingredient:

\begin{definition}[E-histories and E-states]
An \emph{E-history} $\Ehistory E$ is a set of pairs (each denoted $(\Tag\alpha:\multi r)$) in our formalism: the first element is a tag (identifying a quantifier), and the second is a vector of ground terms (representing an instantiation of the corresponding quantifier).

An \emph{E-state} (\cf{} \defref{states}) $E$ is a pair $(\Egraph E, \Ehistory E)$ of E-interface and E-history.
\end{definition}

Recall that E-states are a component of the states in our formalism. The E-interface captures the current known terms and equality information, while the E-history represents sufficient information to reject redundant instantiations.

\begin{definition}[History-Enabled E-matches]\label{def:history-enabled-ematches}
Given a candidate pair $\left(\Tag\alpha:\multi r\right)$ %ALEX: no macro for this?
(of tag $\Tag\alpha$ and vector of terms $\multi r$), the \emph{E-state $E$ enables $\left(\Tag\alpha:\multi r\right)$}, written $\Inst E{\left(\Tag\alpha:\multi r\right)}$, if: for every pair $(\Tag\alpha:\multi{r^{\prime}})\in\Ehistory E$, at least one of the pointwise equalities $\multi {\Equiv{r_i}{r_i^\prime}}$ is \emph{not} known in $\Egraph E$.
\end{definition}

\begin{example}
Revisiting Example \ref{exa:example-original}, suppose the tag
of the quantifier is $\Tag\tau$ and $E$ is the E-state containing the example literals. The first instantiation $\subs{s_{1}}{b}$, $\subs{s_{2}}{a}$ and $\subs{x}{t}$ is represented in our formal model by adding $\left(\Tag\tau:\left(b,a,t\right)\right)$ to the E-history, resulting in a new E-state, say $E'$. The second candidate match $\subs{s_{1}}{b}$, $\subs{s_{2}}{c}$ and $\subs{x}{t}$ is not enabled in $E'$ since the three pointwise equalities between instantiated terms are all known in $E'$.
\end{example}

With the help of the above ingredients, we formally characterise E-matching:  

\begin{definition}[E-matching]
For a given state $\state WAE$, 
the judgement \linebreak $\state WAE\Ematching (\unitriMulti xT{A^{\prime}})^{\Tag\alpha}\MatchSep\multi r$ defines
which instantiations (using terms $\multi r$) of which quantifiers $(\unitriMulti xT{A^{\prime}})^{\Tag\alpha}$ are enabled by E-matching rules, as follows: 
\[
\dfrac{\begin{array}{c}
\begin{array}{cc}
(\unitriMulti xT{A^{\prime}})^{\Tag\alpha}\in W \quad & \quad \multi t \text{ is one trigger set of } \multi {[T]} \\
\Known{\Egraph E}{\multi t\left[\nicefrac{\multi r}{\multi x}\right]} \quad & \quad \Inst E{\left(\Tag\alpha:\multi r\right)} 
\end{array}
\end{array}}{\state WAE\Ematching(\unitriMulti xT{A^{\prime}})^{\Tag\alpha}\MatchSep\multi r}
\]
We write $\state WAE\not\Ematching$ to mean \emph{no} instantiations are enabled in this state.
\end{definition}
E-matching $\Ematching$ requires (1) a quantifier in the current state, 
(2) the trigger set $\multi t$ with replacement terms $\multi r$ for quantified variables $\multi x$ to be known in $\Egraph E$, 
and (3) that this potential match is enabled in the E-state $E$. 
Note that (2) implies the terms $\multi r$ to match against the quantified variables of one trigger set $\multi t$ to be known in the current E-interface $\Egraph E$.
% via the \textsc{(kn-sub)} rule of $\Known{}{}$ (\cf Def.~\ref{def:E-interface-judgement}) 

\subsection{State Transitions \label{subsec:State-transit}}

The last main ingredient of our formal model is the definition of state transitions. 

\global\long\def\Verify#1#2{#1\Vdash_{\mathrm{verify}}#2}%

\global\long\def\NotVerify#1#2{#1\not\Vdash_{\mathrm{verify}}#2}%

\global\long\def\updateW#1#2{#1\cup#2}%

\begin{definition}
[State Transitions] The \emph{(single step) state transition relation} $\tran\,\subseteq\textsc{State}\times\textsc{State}$ is defined by the union of the following cases: 
\[
\dfrac{\begin{array}{c}
\emptyset\subset\Phi\subseteq\left\{ \phi_{i}\mid C\in A;\;\NotVerify{W_{1},\Egraph{E_{1}}}C;\;C\text{ is }\cdots\vee\phi_{i}\vee\cdots\right\} \\
W_{2}={W_{1}}\cup{\filterUni{\Phi}}
\quad\Egraph{E_{2}}=\updateEgraph{\Egraph{E_{1}}}{\filterLit{\Phi}}\quad\Ehistory{E_{2}}=\ehistory{E_{1}}
\end{array}}{\state{W_{1}}A{E_{1}}\tran\state{W_{2}}A{E_{2}}}\textsc{(split)}
\]
\[
\dfrac{\Entail{\Egraph E}{\bot}}{\state WAE\tran\bot}\textsc{(bot)}
\]
\[
\dfrac{\begin{array}{ccc}
\NotEntail{\Egraph E}{\bot}\quad & \quad\Verify{W,\Egraph E}C\text{ for every }C\in A\quad & \quad\state WAE\not\Ematching\end{array}}{\state WAE\tran\lozenge}\textsc{(sat)}
\]
\[
\dfrac{\begin{array}{c}
\state{W_{1}}{A_{1}}{E_{1}}\Ematching(\unitriMulti xT{A_{11}})^{\Tag\alpha}\MatchSep\multi r\\
\begin{array}{cc}
A_{12}=A_{11}\left[\nicefrac{\multi r}{\multi x}\right] & A_{12}^{\prime}=\filterUni{A_{12}}\cup\filterLit{A_{12}}\\
A_{2}=A_{1}\cup\left(A_{12}\backslash A_{12}^{\prime}\right) & W_{2}=\updateW{W_{1}}{\filterUni{A_{12}}}{\Egraph{E_{1}}}\\
\Egraph{E_{2}}=\Egraph{E_{1}}\triangleleft\filterLit{A_{12}} & \Ehistory{E_{2}}=\Ehistory{E_{1}}\triangleleft\left(\Tag\alpha:\multi r\right)
\end{array}
\end{array}}{\state{W_{1}}{A_{1}}{E_{1}}\tran\state{W_{2}}{A_{2}}{E_{2}}}\textsc{(inst)}
\]
where $\filterUniOp$ and $\filterLitOp$ filter sets of extended literals into only those which are quantifiers or only those which are simple literals, respectively; 
the judgement $\Verify{W,\Egraph E}{C}$ holds if: for some disjunct $\phi_{i}$ of $C$, either $\phi_{i}$ is a tagged quantifier from $W$, or $\phi_{i}$ is a literal that $\Egraph E$ knows. 
\end{definition}

Our transition relation $\tran$ consists of case-splitting steps, steps that deduce inconsistent states, steps that deduce saturated states, and quantifier instantiation steps, corresponding to rules \textsc{(split)}, \textsc{(bot)}, \textsc{(sat)} and \textsc{(inst)} respectively. 

We allow a case-splitting transition to non-deterministically select \emph{any} non-empty subset of the disjuncts in the \emph{unverified} current clauses---those that have not been proved true yet.
A case-splitting transition must make progress towards satisfying the clauses. 
We do not impose restrictions on the order in which unverified clauses are chosen, nor on the number of disjuncts assumed within a clause, provided that progress is being made.\footnote{Our model  allows simulating efficient propagation-based restrictions of case-splitting, but does not require it; restricting to this case would be possible if needed.}

We model case-splitting as non-deterministic. Recall Example
\ref{exa:example-matching-loop}, where the clause $\neg\Member(t,a)\vee\neg\Member(t,\Diff(a,b))$ is learnt. Subsequently, the solver can choose to assume
either one or both of the disjuncts; generally, it can choose to assume neither disjunct as long as it selects at least one disjunct from some other unsatisfied clause. 
Here, the disjuncts are ground terms (which are added to the E-state); in general, some could be new quantifiers to record.

Our $\Verify{}{}$ judgement checks if a provided clause is satisfied (\ie at least one disjunct is assumed in the current state). 
If all current clauses are satisfied, and the E-interface is not inconsistent, and there are no candidate E-matches, the \textsc{(sat)} rule applies and transitions to the saturated state ($\lozenge$). Conversely, if the current E-interface is inconsistent, the \textsc{(bot)} rule transitions to the inconsistent state ($\bot$); if there are candidate E-matches, the \textsc{(inst)} rule applies. 

The instantiation rule \textsc{(inst)} relies on the $\Ematching$ judgement to select a candidate E-match. The effect of an instantiation transition involves adding quantifiers and literals occurring as unit clauses in the quantifier body to the current quantifiers $W_1$ and E-interface $\Egraph{E_1}$, respectively; 
any remaining non-unit clauses are added to the current clauses $A_{1}$.
Finally, the E-history $\Ehistory {E_1}$ is updated to record this instantiation. 

In practice, common SMT solvers such as cvc5 \cite{cvc5} perform quantifier instantiation both (1) up-front and (2) in phases interleaved with other solver steps. In particular, the latter is essential for many applications: most quantifier instantiations lead to e.g. clauses requiring context-aware case-splitting via DPLL/CDCL. Our model effectively capture both processes through its unrestricted interleavings of quantifier instantiation and case-splitting steps.

In retrospect, Sec. \ref{subsec:States} to \ref{subsec:State-transit} have tackled design challenges \#1 and \#2 (\cf{} Sec. \ref{subsec:objectives}). We address \#3 and \#4 in the next two subsections, respectively.  

\subsection{Nested Quantifiers \label{subsec:Nested-quantifiers}}

Example \ref{exa:example-nested-quantifier} demonstrates that instantiating nested quantifiers can introduce new quantifiers on the fly. 
To effectively argue for termination regarding these instantiations (as will be discussed in Sec. \ref{sec:proving}), one must be able to identify and predict these dynamically introduced quantifiers.  
To facilitate this, we employ a tagging system that is capable of handling nested structures (\cf{} Appx. \ref{sec:appx-formal-model-for-e-matching} for details). 
Each quantifier in an axiomatisation is labelled with a distinct tag. 
The tag for any non-nested quantifier or the outermost quantifier of any nested quantifier is not parameterised. 
An inner quantifier that occurs in a nested quantifier has its tag parameterised by all of its outer-quantified variables. 
Instantiating an outer quantifier produces a copy of the quantifier body in which (among other changes) tags of all inner quantifiers that are parameterised by this outer-quantifier are updated to reflect this instantiation.  In Example \ref{exa:example-nested-quantifier}, we label the outer and inner quantifiers with tags  
$\Tag \text{union-elim}$ and $\Tag \text{union-elim}(s_1, s_2)$, respectively. 
When the outer quantifier is instantiated with $\subs{s_{1}}{a}$ and $\subs{s_{2}}{b}$, 
a copy of the quantifier body in which the inner quantifier is tagged with $\Tag \text{union-elim}(a, b)$ is introduced. 

To further mitigate redundancy in quantifier instantiation, our semantics supports two additional optimisations. 
First, a quantifier is only permitted to join the current quantifiers $W$ if its tag is known to be \emph{distinct} from the tags of existing quantifiers in $W$, \emph{modulo equivalence on the parameters of the tags}, as assessed in the current E-interface. 
This criterion prevents adding redundant quantifiers into $W$. 
Second, the relation of history-enabled E-matches $\Inst {} {}$ leverages the current E-interface to verify the uniqueness of tags---once again, modulo equivalence on tag parameters---before enabling an E-match.  
An E-match is enabled only if no quantifier with an equivalent tag has been instantiated with an equivalent match previously (\cf{} Appx. \ref{sec:appx-formal-model-for-e-matching} for related definitions).

\subsection{Theory-Specific Reasoning \label{subsec:Theory-specific-reasoning}}

Although our rules do not yet account for (interpreted) theory reasoning (as performed by theory solvers in a typical SMT solver design). Our small-step semantics is intentionally chosen to easily accommodate future extensions: ``hot-plugging'' new kinds of primitive transitions is straightforward, and will not disturb the existing formal rules (\eg{} for quantifier instantiations or case-splitting). Similarly to our E-interfaces for abstracting of E-graph details, we plan to do this in a way which abstracts over the \emph{effects} of theory deduction steps, without exposing the solver-specific internals. For example, we can add deduction steps which extend the E-interface with new terms and/or (dis)equalities, based on a valid deduction within, say, an integer theory.

Just as for quantifier instantiations, it may be necessary for some applications to guarantee that theory reasoning is performed under some fairness conditions (\eg{} that inconsistencies detectable by a theory solver are not infinitely postponed). Imposing custom fairness constraints on the traces of our semantics for specific examples can be achieved in a standard way for small-step semantics.

While it is clear that extensions to theory solving will be straightforward, we chose the case study for this paper to be a complex and practically-relevant axiomatisation which nonetheless does not rely on external theory solvers.

\section{Proving Instantiation Termination for E-matching}\label{sec:proving}

We now apply our model to prove instantiation termination for a practical E-matching-based axiomatisation.  First, we briefly present our set theory axiomatisation, adapted from Dafny and Viper.  We then demonstrate our methodology for constructing instantiation termination proofs using our model.

\subsection{Axiomatisation for Set Theory }

To assess our formal model, we tackle formal proofs of instantiation termination for axiomatisations currently employed by state-of-the-art verification tools, specifically targeting set theory in this paper. 
Set theory, despite the known challenges associated with its quantifier instantiation, is extensively used in verifiers. 

Drawing from the axioms used by Dafny~\cite{Dafny2014-Set} and Viper~\cite{Viper2021-Set}, we aim to construct an axiomatisation that (1) faithfully models the core of set theory, (2) supports various encodings of set theory used by verifiers, and (3) strives to maintain a balance on triggers to ensure instantiation termination without harming instantiation completeness. 

Our axiomatisation involves 12 uninterpreted functions, representing a broader range of set operations than our counterparts of Dafny and Viper. Cardinality constraints are entirely removed due to their dependency on external linear arithmetic solvers (\cf{} Sec. \ref{subsec:Theory-specific-reasoning} for explanation). Refer to Appx. \ref{subsec:appx-our-axiomatisation-set-theory} and \ref{subsec:appx-comparison-axiomatisations-set-theory} for a full presentation of our axiomatisation and comparison with theirs.  

Dafny and Viper typically use complex ``iff'' statements to define set operations, restricting trigger flexibility as they must apply in both directions of ``iff''. Inspired by proof systems for formal logic, we redefine set operations using duals of introduction and elimination axioms, allowing for independent triggers for each axiom of the same set operation, thereby enhancing trigger flexibility.

\begin{example} \label{exa:example-union-elim}
Below is our elimination rule for set union, named (union-elim), allowing for more alternative triggers than the counterpart from Dafny and Viper.
\begin{center}
$
\begin{array}{l}
\forall s_{1},s_{2},x.\left[\Member(x,\Union(s_{1},s_{2}))\right]\\
\left[\Union(s_{1},s_{2}),\Member(x,s_{1})\right]
\left[\Union(s_{1},s_{2}),\Member(x,s_{2})\right]\\
\;\Member\left(x,\Union\left(s_{1},s_{2}\right)\right)\rightarrow\Member(x,s_{1})\vee\Member(x,s_{2})
\end{array}
$
\end{center}

Our axiomatisation overall has more permissive triggers,
which provides more flexibility for instantiation, but also increases the risk of non-termination.
That instantiation termination holds for our axiomatisation means that
Dafny and Viper's  more restrictive triggers are not necessary to ensure termination.
\end{example}

\subsection{Progress Measure}

\global\long\def\filterT{\filter_{T}}%

\global\long\def\filterSet{\filter_{Set(T)}}%

% \global\long\def\Filter#1{\mathrm{F}_{#1}}%

% \global\long\def\FilterT{\Filter T}%
\global\long\def\FilterT{O_{2}}%

% \global\long\def\FilterSet{\Filter{\SSet(T)}}%
\global\long\def\FilterSet{O_{1}}%

To prove \emph{instantiation termination} for an axiomatisation, it suffices to prove that querying \emph{any} set of ground literals on the axiomatisation cannot lead to an infinite trace in our formal semantics. The proof argument is parametric with respect to the ground literals\footnote{In fact, the termination argument can be generalised to the ground \emph{clauses} in the initial state.} in the initial state. Drawing inspiration from program reasoning~\cite{Cook2011-CACM-Termination,Turing1949-Checking-a-large-routine}, we identify a suitable measure on solver states and then establish its decrease at appropriate steps in a well-founded manner. 
% Additionally, maintaining invariants related to reachable solver states from specific initial configurations proves essential, further exemplifying the adaptation of established program reasoning principles to this novel domain.

This method leverages the specific features of the axioms under consideration. 
We analyse our set theory axioms and classify them by two criteria: (1) Would instantiating the axiom potentially generate new equivalence class of terms, \ie new terms modulo equalities? (2) Does the axiom have nested quantifiers?

%For each category of axioms, we discuss how to define a fitting progress measure, focusing on adapting the measure from a previous category. 
%This iterative process aims to develop a progress measure effective to the entire axiomatisation.

\paragraph{Non-generative quantifiers.}

We call a quantifier \emph{non-generative} if its instantiations yield neither new quantifiers nor new equivalence classes of terms. 
The majority of our set theory axioms are non-generative. 

For instance, the (union-elim) axiom from Example \ref{exa:example-union-elim}, when instantiated with $\subs{s_{1}}{a}$, $\subs{s_{2}}{b}$ and $\subs{x}{t}$, yields $\neg\Member\left(t,\Union\left(a,b\right)\right) \vee \Member(t,a) \vee \Member(t,b)$, without introducing new equivalence classes of terms. This is because all of $t$, $a$, $b$ and $\Union(a,b)$ are subterms of the matched trigger and hence known. $\BBool$-sorted terms never add new equivalence classes (\cf{} \defref{E-interface-judgement}).

Instantiating a non-generative axiom reduces the amount of enabled E-matches by at least one because, on the one hand, history-enabled E-matches (\cf{} Def. \ref{def:history-enabled-ematches}) prevent instantiating the same quantifier with equivalent matches; on the other hand, instantiating a non-generative axiom does not introduce new quantifiers or new equivalence classes, thereby not expanding the match pool. 
This suggests: 

\begin{idea}
Define the progress measure to be about the amount of enabled E-matches.\label{idea-enabled-e-matches}
\end{idea}

\paragraph{Generative quantifiers.}

A quantifier is \emph{generative} if its instantiations may introduce new equivalence classes of terms. Four of our set theory axioms are generative: each may create new applications of Skolem functions on instantiation. For instance, the following (subset-intro) axiom, when instantiated, may create a new term $\Skolem_{\textit{ss}}(s_{1},s_{2})$ for some sets $s_1$ and $s_2$:
% This Skolem term is considered as new unless an equivalent $\Skolem_{\textit{ss}}$-Skolem term  is already known. 
% a term $\Skolem_{\textit{ss}}(s_{1}^{\prime},s_{2}^{\prime})$ is already known and satisfies both $s_{1}=s_{1}^{\prime}$ and $s_{2}=s_{2}^{\prime}$.
\begin{center}
$
\begin{array}{r}
\forall s_{1},s_{2}.\left[\Subset(s_{1},s_{2})\right]
\left(\Subset(s_{1},s_{2})\vee\Member(\Skolem_{\textit{ss}}(s_{1},s_{2}),s_{1})\right)\wedge\\
\left(\Subset(s_{1},s_{2})\vee\neg\Member(\Skolem_{\textit{ss}}(s_{1},s_{2}),s_{2})\right)
\end{array}
$
\end{center}
Similarly, axioms for introducing extensional quality on sets, establishing set disjointness, and introducing a predicate to check if a provided set is empty---namely (equal-sets-intro), (disjoint-intro), and (isEmpty-intro-1), respectively---can each produce new applications of Skolem functions: $\Skolem_{\textit{eq}}(s_{1},s_{2})$, $\Skolem_{\textit{dj}}(s_{1},s_{2})$, and $\Skolem_{ie}(s)$, respectively (\cf Appx.~\ref{subsec:appx-our-axiomatisation-set-theory} for details).

Generative axioms, by introducing new equivalence classes of terms, may expand the pool of E-matches, including those enabled. 
We thereby suggest: 
\begin{idea}
Predict new equivalence classes of terms introduced by instantiating generative axioms; incorporate these forecasts %into the known classes of terms 
to estimate enabled E-matches. 
\label{idea-forecast-terms}
\end{idea}

\paragraph{Nested quantifiers. }

The third category, nested quantifiers, consists of axioms with nested quantifiers. This category includes three axioms, namely (subset-elim) from Example \ref{exa:example-nested-quantifier}, an axiom named (disjoint-elim) for eliminating set disjointness, and an axiom named (isEmpty-elim-1) for eliminating the predicate that checks if a provided set is empty (\cf Appx. \ref{subsec:appx-our-axiomatisation-set-theory} for their definitions). 

Although nested quantifiers do not introduce new equivalence classes of ground terms, their instantiations can create new quantifiers.
These new quantifiers can each have their own set of enabled E-matches, effectively raising the total amount of enabled E-matches.
To tackle this issue, we suggest: 

\begin{idea}
Anticipate quantifiers that could emerge from instantiating nested quantifiers; include these forecasts to refine the estimation of enabled E-matches. 
\label{idea-forecast-axioms}
\end{idea}

In practice, provided that these ideas are respected, one can often define simpler termination measures via \emph{over-approximations} of these candidate instantiations (provided this over-approximation remains finite and decreasing).

\paragraph{Formalising a practical progress measure.}

A basis of an E-interface is a representation of the known equivalence classes. We define its overapproximation to include potential new equivalence classes introduced by generative axioms.

\begin{definition}
[Overapproximation of Basis for Set Theory] Suppose $B$ is a basis of an E-interface.
The functions $\FilterSet(B)$ and $\FilterT(B)$ denote overapproximations
for the $\SSet(T)$-sorted and $T$-sorted elements within basis $B$,
respectively, to accommodate new expected equivalence classes of terms. 
\begin{align*}
\FilterSet(B) & =\filterSet(B)\\
\FilterT(B) & =\filterT(B)\cup\widehat{\Skolem_{\textit{ss}}}(\FilterSet(B),\FilterSet(B))\cup\widehat{\Skolem_{\textit{eq}}}(\FilterSet(B),\FilterSet(B))\\
 & \quad\cup\widehat{\Skolem_{\textit{dj}}}(\FilterSet(B),\FilterSet(B))\cup\widehat{\Skolem_{\textit{ie}}}(\FilterSet(B))
\end{align*}
Here $\filterSet$ and $\filterT$ take a basis and select its $\SSet(T)$-sorted
and $T$-sorted elements, respectively; each $\widehat{\Skolem}$ is
lifted from the corresponding $\Skolem$ to support sets. 
\end{definition}
The potential new terms introduced by generative axioms are all $T$-sorted Skolem terms. Thus predictions are solely performed by $\FilterT(B)$, not by $\FilterSet(B)$. 

Note that the results of these two overapproximations are guaranteed to be finite. E-interface bases always remain finite: elements are added (at most) for the new terms introduced in a step. Since our construction filters and \eg{} maps Skolem functions over these finite sets, its results are finite.
Leveraging this overapproximation of equivalence classes, we estimate enabled E-matches.

\begin{definition}[Overestimation of Enabled E-matches for Set Theory]
Consider an arbitrary state $\state WAE$. Let $B$ be a basis of the E-interface $\Egraph{E}$. Define an overestimation of the enabled E-matches for $s$ from $B$ as follows:
\begin{center}
    $P(\state WAE, B)=\{ \dots p_{\Tag\tau_{i}},\dots,p_{\Tag\tau_{j}(\multi r)},\dots\}$
\end{center}
where $p_{\Tag\tau_{i}}$ and $p_{\Tag\tau_{j}(\multi r)}$ each denote a set of tuples that overapproximate the enabled E-matches from the basis $B$ to the quantifiers with tags $\Tag\tau_{i}$ and $\Tag\tau_{j}(\multi r)$, respectively; each tag $\Tag\tau_{i}$ identifies an original axiom within $W$, and each $\Tag\tau_{j}(\multi r)$ identifies a quantifier introduced by instantiating the (outer) quantifier of an original axiom $\Tag\tau_{j}$ with terms $\multi r$ from the approximations $\FilterSet(B)$ or $\FilterT(B)$. 

To clarify, examples for each category are presented as follows; the remaining quantifiers shall adhere to the same pattern. 
\begin{itemize}
\item A non-generative axiom:

$p_{\Tag\text{union-elim}}=\left\{  
\left(s_{1},s_{2},x\right)
\;
\middle\vert\;\begin{array}{l}
s_{1},s_{2}\in\FilterSet(B), \,
x\in\FilterT(B),\\
\Inst E{\left(\Tag\text{union-elim}:(s_{1},s_{2},x)\right)}
\end{array}\right\} $

\item A generative axiom:

$p_{\Tag\text{subset-intro}}=\left\{  
\left(s_{1},s_{2}\right)
\;
\middle\vert\;
s_{1},s_{2}\in\FilterSet(B);\;
\Inst E{\left(\Tag\text{subset-intro}:\left(s_{1},s_{2}\right)\right)}
\right\} $

\item A nested quantifier:

$p_{\Tag\text{subset-elim}}=\left\{  
\left(s_{1},s_{2}\right)
\;
\middle\vert\;
s_{1},s_{2}\in\FilterSet(B);\;
\Inst E{\left(\Tag\text{subset-elim}:\left(s_{1},s_{2}\right)\right)}
\right\} $

\item A quantifier introduced by instantiating a nested quantifier: 

$p_{\Tag\text{subset-elim(\ensuremath{a,b})}}=\left\{ x\;\middle\vert\;
x\in\FilterT(B);\;
\Inst E{\left(\Tag\text{subset-elim(\ensuremath{a,b})}:x\right)}
\right\}$

where $a,b\in\FilterSet(B)$. 
\end{itemize}
\end{definition}

We define a progress measure for our set theory axiomatisation. The first and foremost ingredient of our progress measure is an overestimation on the amount of enabled E-matches. We anticipate that this overestimation strictly descents after each instantiation step and does not ascend after each case-splitting step. The second ingredient is the amount of unverified current clauses, which we expect to descent by at least one after each case-splitting step. The result of the progress measure is a lexicographically ordered pair of the above two ingredients.

\begin{definition}[Progress Measure for Set Theory] 
\label{def:progress-measure}
We define the progress measure $M : \textsc{State} \longrightarrow (\mathbb{N}\cup\{-1\})^{2}$, as follows, where $\Amount \cdot$ denotes cardinality. 
\[
M\left(s\right)=\begin{cases}
\left( \underset{p\in P(\state WAE,B)}{\sum}\Amount{p},\; 
\Amount{\left\{ C\in A\;|\;\NotVerify{W,\Egraph E}C\right\} } \right)  &\mathllap{\text{if}}\text{ }s=\state WAE\\
&\mathllap{\text{and }B\text{ is}}\text{ a basis for }\Egraph E\\
\Bigl( -1,\;-1 \Bigr) &\mathllap{\text{if}}\text{ }s=\bot \text{ or } \lozenge
\end{cases}
\]
\end{definition}
Inconsistent or saturated states are assigned (the smallest) measures $(-1, -1)$. The order on $(\mathbb{N}\cup\{-1\})$ is the natural extension of that on $\mathbb{N}$.

\subsection{Invariants and Termination Theorem}

Drawing on program reasoning, we anticipate classical techniques such as induction variants can be employed to termination proofs. 
We maintain two kinds of induction variants: general-purpose and problem-specific invariants. 

General-purpose invariants uphold the integrity of our formal semantics, remaining valid across all applications. 
For example, the E-history $\Ehistory E$ of an arbitrary state $s=\state WAE$ must be up to date \wrt the current quantifiers $W$ and E-interface $\Egraph E$. 
That is, for every pair $\left(\Tag\tau:\multi r\right)$  from $\Ehistory E$, 
there exists a quantifier $\unitriMulti {x}{T}{A}$ from $W$ whose tag is $\Tag\tau$, the dimension of $\multi x$ is equal to that of $\multi r$, $\Known{\Egraph E}{\multi r}$, and $\Known{\Egraph E}{\multi t\left[\nicefrac{\multi r}{\multi x}\right]}$ for some trigger set $\multi t$ from $\multi {[T]}$. 
(\cf Appx.~\ref{sec:appx-formal-model-for-e-matching} for more invariants.)

Problem-specific invariants are tailored to the distinct features of each problem,
focusing on preserving properties of solver states that are reachable from certain initial setups,
and tracing the origins of terms in intermediate states, crucial for complex axiomatisations like set theory. 
For example, consider an arbitrary intermediate state $\state WAE$, 
for each extended clause in $A$ with the form of $\neg\Member\left(t,\Union\left(a,b\right)\right)\vee\Member(t,a)\vee\Member(t,b)$,
$\left(\Tag\text{union-elim}:(a,b,t)\right)\in\Ehistory E$ holds, where axiom (union-elim) is defined as per Example \ref{exa:example-union-elim}.
This invariant concerns the origins of the extended clauses in the current clauses $A$.
Case-splitting on a current clause (e.g. the one above) may seem to introduce a new term, but this invariant indicates that this term is not new---it is equal to a known term that triggered a prior instantiation, as tracked by the E-history $\Ehistory E$.
This ensures a traceable lineage for each clause, linking it back to a specific quantifier in the E-history.
(\cf Appx.~\ref{sec:appx-proving-instantiation-termination-set-theory} for more invariants.)

We finally define the instantiation termination theorem for set theory, proven by induction on traces leveraging both general-purpose and set-theory-specific invariants. 
Note that termination is proved against an \emph{arbitrary} set of ground literals---this works because our progress measure and invariants are defined parametrically with the current state. 
Given these right invariants and termination measure, the proof is straightforward %and accessible to non-logicians 
(\cf Appx.~\ref{sec:appx-proving-instantiation-termination-set-theory}). 
This theorem guarantees the absence of matching loops in this axiomatisation; users of this axiomatisation hence can confidently seek terminating answers to ground theory queries.

\begin{theorem}[Instantiation Termination for Set Theory] 
Suppose $L$ is an arbitrary set of ground literals.
The initial state is $s_{0}=\state{W_{0}}{A_{0}}{E_{0}}$,
where $W_{0}$ is our axiomatisation for set theory with tags,
$A_{0}=\emptyset$, $\Egraph{E_{0}} = \updateEgraph{\emptyset}{L}$, and $\Ehistory{E_{0}}=\emptyset$.
Any sequence of transitions from the initial state $s_{0}$, where $\tran$ defined in Sec. \ref{subsec:State-transit} represents the transition relation, has a finite length.  \label{thm:inst-term-set-theory}
\end{theorem}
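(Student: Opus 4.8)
The plan is to argue by contradiction: suppose there were an infinite transition sequence $s_0 \tran s_1 \tran s_2 \tran \cdots$ starting from the initial state. Since the states $\bot$ and $\lozenge$ are terminal (they have no outgoing transitions), no $s_i$ can be $\bot$ or $\lozenge$, so every $s_i$ has the form $\state{W_i}{A_i}{E_i}$. The strategy is to show that the progress measure $M$ of Definition~\ref{def:progress-measure}, valued in the well-founded order $(\mathbb{N}\cup\{-1\})^2$ under the lexicographic extension, does not increase along any transition and strictly decreases infinitely often --- contradicting well-foundedness. Concretely, I would establish: (i) each \textsc{(split)} step strictly decreases the second component (the number of unverified current clauses) while not increasing the first; (ii) each \textsc{(inst)} step strictly decreases the first component (the overestimated number of enabled E-matches $\sum_{p}\Amount{p}$); and (iii) \textsc{(bot)} and \textsc{(sat)} steps go to measure $(-1,-1)$, which is irrelevant since they are terminal. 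Since in an infinite trace at least one of \textsc{(split)}, \textsc{(inst)} must occur infinitely often, and by (i)/(ii) $M$ is monotonically non-increasing along the whole trace and the relevant component strictly decreases at those infinitely many steps, we obtain an infinite strictly descending chain in a well-founded order, a contradiction. (For the \textsc{(split)}-only tail one uses that the second component, being a natural number, cannot descend forever; more cleanly, lexicographic descent handles both cases uniformly.)

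The bulk of the work is in (i) and (ii), and this is where the invariants enter. For (i), a \textsc{(split)} step picks a nonempty set $\Phi$ of disjuncts from \emph{unverified} clauses and assimilates their literals into $\Egraph E$ and their quantifiers into $W$; afterwards every clause those disjuncts came from becomes verified, so the second component drops by at least one. The subtle point is that the first component must not go \emph{up}: adding literals to the E-interface could in principle create new equivalence classes of terms or newly satisfy triggers. Here the problem-specific invariants are essential --- e.g. the (union-elim)-lineage invariant says that any clause of the shape $\neg\Member(t,\Union(a,b))\vee\Member(t,a)\vee\Member(t,b)$ in $A$ has $(\Tag\text{union-elim}:(a,b,t))$ already recorded in $\Ehistory E$, so case-splitting on it adds only literals over terms that were already known (modulo equalities). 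One shows, invariant by invariant, that \emph{every} clause reachable in $A$ has this traceable-lineage property, hence \textsc{(split)} introduces no genuinely new equivalence classes and no new trigger matches beyond those already counted via the overapproximations $\FilterSet,\FilterT$; and the history-enabled check $\Inst E{\cdot}$ only becomes harder (more pairs known equal) as $\Egraph E$ grows, so each $\Amount{p}$ is non-increasing.

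For (ii), an \textsc{(inst)} step instantiates some $(\unitriMulti xT{A_{11}})^{\Tag\alpha}$ with $\multi r$, records $(\Tag\alpha:\multi r)$ in the E-history, and adds the resulting literals/quantifiers/clauses. The argument splits along the axiom classification. If the instantiated axiom is \emph{non-generative}, it adds no new equivalence classes and no new quantifiers, so the term sets $\FilterSet(B),\FilterT(B)$ and the quantifier set $W$ (up to tag-equivalence) are unchanged, while the new history entry makes the match $(\Tag\alpha:\multi r)$ --- which was counted in the pre-state because \textsc{(inst)} required $\Inst{E_1}{(\Tag\alpha:\multi r)}$ --- no longer enabled afterwards; hence $\sum_p\Amount p$ drops by at least one (and no other $\Amount p$ rises, again because $\Egraph E$ only grows). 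If the axiom is \emph{generative}, it may add a fresh Skolem application of sort $T$; but the overapproximation $\FilterT(B)$ was \emph{designed} to already contain $\widehat{\Skolem_{\textit{ss}}}(\FilterSet(B),\FilterSet(B))$ etc., so the basis-overapproximation does not grow, and one checks $\FilterSet(B)$ is likewise stable --- so the counting sets are unchanged and the same ``one fewer enabled match'' argument applies. If the axiom is a \emph{nested} quantifier, instantiating it introduces a new inner quantifier with tag $\Tag\tau_j(\multi r)$; but $P$ already includes a summand $p_{\Tag\tau_j(\multi r)}$ for every $\multi r$ drawn from $\FilterSet(B)$, and by the restriction that a quantifier joins $W$ only if its tag is distinct modulo parameter-equivalence from those present, no double counting occurs --- so again the overestimate strictly decreases by one from the consumed match. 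I would also need to confirm the finiteness claims (each $B$, hence each $\FilterSet(B),\FilterT(B)$, hence each $p$, is finite, so $M(s)\in(\mathbb{N}\cup\{-1\})^2$ is well-defined), which the excerpt already argues.

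The main obstacle, I expect, is (i): proving that \emph{no reachable clause in $A$} can, when split upon, introduce a genuinely new equivalence class or a new trigger match. This requires the full suite of problem-specific lineage invariants (one per axiom shape that produces clauses, not just (union-elim)), a proof that this suite is itself preserved by all four transition rules, and a careful analysis that the triggers of the set-theory axioms only ever match subterms of already-known terms once lineage is accounted for --- in effect a simultaneous induction establishing both ``$M$ is a ranking function'' and ``the invariants hold'' along the trace. Given those invariants, as the excerpt notes, the descent argument itself is routine.
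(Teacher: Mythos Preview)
Your proposal is correct and follows essentially the same route as the paper: argue by contradiction from an infinite trace, use the lexicographic measure $M=(\Sigma,\Theta)$, and show that \textsc{(split)} strictly decreases $\Theta$ without increasing $\Sigma$ while \textsc{(inst)} strictly decreases $\Sigma$, with the problem-specific lineage invariants (carried along the trace by a simultaneous induction) supplying exactly the facts you identify as needed. One small sharpening: the paper proves that $M$ strictly decreases at \emph{every} step (not merely non-increasing with infinitely many strict drops), which you yourself arrive at when you note that ``lexicographic descent handles both cases uniformly''; stating it that way from the outset avoids the detour through the ``infinitely often'' phrasing.
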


\section{Related Work}
\label{sec:related-work}

For the purpose of program verification, where SMT solvers are used
to prove unsatisfiability, E-matching is widely used to handle quantifiers.
The idea of E-matching dates back to Nelson \cite{Nelson1980-PV}, which was first put into practice in Simplify~\cite{Simplify}.
Since then, efficient handling of E-matching-based quantifier instantiations has been studied by, \eg~de Moura and Bj{\o}rner~\cite{deMoura2007-Efficient-E-matching} for Z3, Ge et al.~\cite{Ge2007-E-matching-CVC3} for CVC3, Bansal et al.~\cite{Bansal2015-E-matching} for Z3 and CVC4, and Moskal et al.~\cite{Moskal2008-E-matching-Fun-Profit-Fx7} for Fx7.
When satisfiable results and their models are of interest, model-based
quantifier instantiation (MBQI)~\cite{Ge2009-MBQI} can be used to handle quantifiers.

Dross et al.~\cite{Dross2014-Thesis,Dross2012,Dross2016} formally define and reason about instantiation termination in a similar context. 
They define a novel \emph{logic} with first-class triggers, introduce \emph{instantiation trees} as algebraic objects to help define termination, and provide an ingenious technique for showing, for their implementation in Alt-Ergo, that finding a \emph{single} finite instantiation tree is sufficient for termination.

Despite being a powerful tool for numerous deep meta-theoretic results \cite{Dross2014-Thesis}, we believe that \emph{applying} a formal inductive construction of instantiation trees for larger examples would be complex in practice: existing examples focus instead on bounds for the sets of terms ever generatable by a solver run. These arguments closely relate to our inductive termination proofs over traces. Our work enables detailed formal proofs based directly on such familiar notions from program reasoning, including inductive invariants and well-founded measures.

The approach of this prior work also requires restrictions on solver behaviour, including \emph{fairness} of quantifier instantiation, and \emph{eager} application of theory deductions (via entailments in their custom logic)\footnote{We explain how to simply add theory steps to our operational model in Sec.~\ref{subsec:Theory-specific-reasoning}.}. Our operational model and termination proofs do not require or build in such assumptions. Still, \emph{restricting} our traces (\eg{} with fairness constraints) would be simple to do if desired for specific applications. Our weak assumptions make our approach (extended with appropriate theory deduction steps) applicable to SMT solvers broadly; solvers such as Z3 \cite{Z3} and CVC5 \cite{cvc5} commonly interleave theory reasoning and quantifier instantiation in (bounded or exhaustive) rounds of multiple steps.

The Axiom Profiler~\cite{Becker2019-Axiom-Profiler} leverages Z3 log files to provide comprehensive support for analysing quantifier instantiations.
The tool focuses on helping users effectively understand and debug problematic solver runs, rather than proving their absence.
It was validated by empirical evidence rather than formal proofs.

Existing works on the termination of SMT transition systems~\cite{Barrett2006-Splitting-on-Demand-SMT,Bonacina2020-CDSAT-Conflict-Driven-SAT,Bonacina2022-CDSAT-Conflict-Driven-SAT,deMoura2013-mcSAT-Model-Constructing-Satisfiability-Calculus} demonstrate that divergence is prevented by ensuring all new terms derive from a finite basis. In contrast, 
in our work, a finite basis does not imply termination---the basis can grow. At a high level these works prove that certain solver aspects always terminate. However,  E-matching cannot have this property; instead it places the onus on the author of an axiomatisation to achieve termination through careful selection of axioms and triggers, motivating a user-facing model.

\section{Conclusion and Future Work}

We have shown a novel model for E-matching as widely employed in SMT solvers, abstracting over solver details while enabling detailed and formal proofs of instantiation termination. Our model has been shown to apply directly and rigorously to the kinds of axiomatisations used in practical verification tools.

In future work, we would like to explore axiomatisations that rely on more-restricted characteristics of a solver, such as fairness of instantiation selection or theory reasoning steps. Similarly to our E-interfaces, we will investigate suitable abstractions over theory solver interactions incorporated into a proof search.

While instantiation termination is a much sought-after property, the complementary problem of guaranteed instantiation completeness is a natural next target to investigate with our novel operational model, which may require us to also explore various fairness restrictions of our model's transition relation.

\begin{credits}
\subsubsection{\ackname} We thank the anonymous reviewers, Mark R. Greenstreet and Yanze Li for their detailed and constructive suggestions. We are very grateful to Claire Dross for putting generous time and energy into thoughtful feedback for us. This work has been partly funded by NSERC Discovery Grants held by Garcia and Summers. 
\end{credits}

% \newpage{}

% For citations of references, we prefer the use of square brackets
% and consecutive numbers. Citations using labels or the author/year
% convention are also acceptable. The following bibliography provides
% a sample reference list with entries for journal
% articles~\cite{ref_article1}, an LNCS chapter~\cite{ref_lncs1}, a
% book~\cite{ref_book1}, proceedings without editors~\cite{ref_proc1},
% and a homepage~\cite{ref_url1}. Multiple citations are grouped
% \cite{ref_article1,ref_lncs1,ref_book1},
% \cite{ref_article1,ref_book1,ref_proc1,ref_url1}.

%
% ---- Bibliography ----
%
% BibTeX users should specify bibliography style 'splncs04'.
% References will then be sorted and formatted in the correct style.
%
\bibliographystyle{splncs04}
\bibliography{references}

\newpage{}

%%%%%%%%%%%%%%%%%%%%%%%%%%%%%%%%%%%%%%%%%
%%%%%%%%%%%%% DO NOT REMOVE %%%%%%%%%%%%%
\begin{appendix}
%% LyX 2.3.6.2 created this file.  For more info, see http://www.lyx.org/.
%% Do not edit unless you really know what you are doing.

\section{An Operational Semantics for E-matching\label{sec:appx-formal-model-for-e-matching}}

\global\long\def\multi#1{\overrightarrow{#1}}%

\global\long\def\unitriMulti#1#2#3{\forall\multi{#1}.\multi{\left[#2\right]}#3}%

\global\long\def\unitriFlat#1#2#3{\forall#1.\left[#2\right]#3}%

\global\long\def\Tag{\sharp}%

\global\long\def\filter{\mathrm{filter}}%

\global\long\def\filterUni#1{\filter_{\forall}\left(#1\right)}%

\global\long\def\filterLit#1{\filter_{\mathrm{lit}}\left(#1\right)}%

\global\long\def\state#1#2#3{\left\langle #1,#2,#3\right\rangle }%

\global\long\def\egraph#1{#1^{\mathrm{I}}}%

\global\long\def\ehistory#1{#1^{\mathrm{H}}}%

\global\long\def\tran{\longrightarrow}%

\global\long\def\otran{\tran_{\mathrm{\vee}}}%

\global\long\def\ptran{\tran_{\mathrm{\forall}}}%

\begin{definition}
[Terms, Atoms and Literals] Let $\textsc{Var}$ be a set of variables,
$\mathcal{F}$ be a set of function symbols, and $\mathcal{P}$ be
a set of predicate symbols. Function and predicate symbols have their
arities and types of their parameters intrinsically specified. 

Define the set of \emph{terms} as follows: 
\[
t\in\textsc{Term}
\]
\[
t\Coloneqq x\mid c\mid f(t_{1},\dots,t_{n})
\]
where $x$ ranges over variables $\textsc{Var}$, $c$ over nullary
function symbols in $\mathcal{F}$, and $f$ over function symbols
in $\mathcal{F}$ with arity $n\geq1$. 

Define the set of \emph{atoms} (a.k.a. \emph{atomic formulas}) as
follows: 
\[
p\in\textsc{Atom}
\]
\[
p\Coloneqq\bot\mid\left(t_{1}=t_{2}\right)\mid P(t_{1},\dots,t_{n})
\]
where $\bot$ is logical falsehood, $=$ is logical equality\footnote{Logical equality is treated differently from normal predicates and
comes with special properties defined by equality axioms regarding
reflexivity, symmetry, transitivity, substitution for functions, and
substitution for predicates. }, and $P$ ranges over predicate symbols in $\mathcal{P}$ with arity
$n\geq1$. 

Define the set of \emph{literals} as follows: 
\[
l\in\textsc{Literal}
\]
\[
l\Coloneqq p\mid\neg p
\]
As a convention, $\top$ denotes $\neg\bot$, and $t_{1}\neq t_{2}$
denotes $\neg(t_{1}=t_{2})$. 
\end{definition}

\begin{definition}
[Tags] We assume a set of pre-defined \emph{primitive tags}, denoted
$\textsc{PTag}$. Define the set of \emph{tags} as follows: 
\[
\Tag\tau\in\textsc{Tag}
\]
\begin{align*}
\Tag\tau & \Coloneqq\Tag\tau_{0}\mid\Tag\tau(\multi t)
\end{align*}
where $\Tag\tau_{0}\in\textsc{PTag}$.

We typically use $\Tag\tau$ and $\Tag\alpha$ to denote a tag. 
\end{definition}

\begin{remark}
We assume that a global mechanism exists to generate fresh and unique
tags, in accordance with standard practice. 
\end{remark}

\begin{definition}
[Formulas, (Tagged) Quantifiers, and Triggers] Define the set of
\emph{formulas} as follows: 
\[
A\in\textsc{Formula}
\]
\[
A\Coloneqq l\mid A\wedge A\mid A\vee A\mid A\rightarrow A\mid\left(\unitriMulti xTA\right)^{\Tag\tau}
\]
Here, $\left(\unitriMulti xTA\right)^{\Tag\tau}$ represents a \emph{tagged
quantifier}, i.e. a quantifier $\unitriMulti xTA$ identified by its
tag $\Tag\tau\in\textsc{Tag}$. 

A \emph{quantifier} $\unitriMulti xTA$ is short for $\forall x_{1},\dots,x_{n}.[T_{1}]\dots[T_{m}]A$.
The (possibly-multiple) variables $\multi x$ are bound; the (possibly-multiple)
trigger sets $\multi T$ are each marked with square brackets and
positioned before the quantifier body A. 

A \emph{trigger set} $T$ is a (non-empty) set of terms, written comma-separated,
and adhere to the following conditions when occurring in a quantifier
$\unitriMulti xTA$: 
\begin{enumerate}
\item The trigger set $T$ must contain each quantified variable of $\multi x$
at least once; 
\item Each term of the trigger set $T$must contain at least one quantified
variable of $\multi x$;
\item Each term of the trigger set $T$ must contain at least one uninterpreted
function application and no interpreted function symbols such as equalities.
\end{enumerate}
\end{definition}

\begin{remark}
A multi-term trigger set prescribes that terms matching all terms
of the trigger set must be known for some instantiation of the quantified
variables, whereas multiple trigger sets prescribe alternative conditions
for instantiation such that matching one trigger set is sufficient
to trigger an instantiation. 
\end{remark}

\begin{remark}
When quantifier tags are not relevant, we omit them for brevity.
\end{remark}

\begin{definition}
[Extended Literals, Extended Clauses and Extended CNF Formulas] Define
\emph{extended literals}, \emph{extended clauses} and \emph{extended
conjunctive normal form (ECNF) formulas} as follows: 
\[
\phi\in\textsc{ExtLiteral}\quad C\in\textsc{ExtClause}\quad A\in\textsc{ExtCNF}
\]
\[
\begin{array}{rcl}
\phi & \Coloneqq & l\mid\left(\unitriMulti xTA\right)^{\Tag\tau}\\
C & \Coloneqq & \phi\mid C\vee C\\
A & \Coloneqq & C\mid A\wedge A
\end{array}
\]
If $A$ is $C_{1}\wedge\cdots\wedge C_{n}$, we may represent $A$
as a conjunctive set $\bigwedge\{C_{1},\dots,C_{n}\}$ or $\{C_{1},\dots,C_{n}\}$.
If $C$ is $\phi_{1}\vee\dots\vee\phi_{n}$, we may represent $C$
as a disjunctive set $\bigvee\{\phi_{1},\dots,\phi_{n}\}$ or $\{\phi_{1},\dots,\phi_{n}\}$.
\end{definition}

\begin{remark}
We assume all formulas are pre-converted to ECNF. In particular, existential
quantifiers have been eliminated by Skolemisation. 
\end{remark}

\begin{remark}
We sometimes retain logical implications, denoted $\rightarrow$,
in our examples to help clarify the meaning of the formulas, although
they can be eliminated easily. 
\end{remark}

\begin{definition}
[Filters on Extended Literals] Define \emph{filter functions} $\filter_{\forall}$
and $\filter_{\text{lit}}$ to filter sets of extended literals into
only those which are quantifiers or only those which are simple literals,
respectively. 
\[
\filterUni{\Phi}=\left\{ \phi\mid\phi\in\Phi,\phi\text{ is of the form }\left(\unitriMulti xTA\right)^{\Tag\tau}\right\} 
\]
\[
\filterLit{\Phi}=\left\{ \phi\mid\phi\in\Phi,\phi\in\textsc{Literal}\right\} 
\]
\end{definition}

\begin{proposition}
For any set of extended literals $\Phi$, $\filterUni{\Phi}\cup\filterLit{\Phi}=\Phi$
and $\filterUni{\Phi}\cap\filterLit{\Phi}=\emptyset$. 
\end{proposition}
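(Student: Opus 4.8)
The plan is to prove the two set equalities directly from the grammar for extended literals, which presents exactly two productions. Recall $\phi\Coloneqq l\mid(\unitriMulti xTA)^{\Tag\tau}$, so every $\phi\in\textsc{ExtLiteral}$ is either a literal $l\in\textsc{Literal}$ or a tagged quantifier $(\unitriMulti xTA)^{\Tag\tau}$, and the key structural observation is that these two syntactic forms are disjoint: a tagged quantifier is distinguished by its leading $\forall$ binder and its tag superscript, whereas a literal is an atom or its negation, so no formula is simultaneously of both shapes. Both claimed facts reduce to this observation together with routine unfolding of the filter comprehensions.

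First, for the union $\filterUni{\Phi}\cup\filterLit{\Phi}=\Phi$, I would argue the two inclusions separately. The inclusion $\subseteq$ is immediate, since by definition $\filterUni{\Phi}=\{\phi\mid\phi\in\Phi,\ \phi\text{ a tagged quantifier}\}$ and $\filterLit{\Phi}=\{\phi\mid\phi\in\Phi,\ \phi\in\textsc{Literal}\}$ are both comprehensions restricting to elements of $\Phi$, hence subsets of $\Phi$. For $\supseteq$, I take an arbitrary $\phi\in\Phi$ and case-split on the grammar of extended literals: if $\phi$ is a literal then $\phi\in\filterLit{\Phi}$, and if $\phi$ is a tagged quantifier $(\unitriMulti xTA)^{\Tag\tau}$ then $\phi\in\filterUni{\Phi}$. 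In either case $\phi\in\filterUni{\Phi}\cup\filterLit{\Phi}$, establishing the reverse inclusion.

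Second, for the intersection $\filterUni{\Phi}\cap\filterLit{\Phi}=\emptyset$, I would suppose for contradiction that some $\phi$ lies in both filters. Then $\phi$ is at once a member of $\textsc{Literal}$ and of the form $(\unitriMulti xTA)^{\Tag\tau}$, contradicting the disjointness of the two grammar productions noted above. Hence no such $\phi$ exists and the intersection is empty. The only step requiring any care is precisely this disjointness of the extended-literal productions; I would state it explicitly as the crux, since everything else is a direct consequence of the definitions of the two filter functions.
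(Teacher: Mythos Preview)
Your proof is correct. The paper does not actually supply a proof for this proposition—it is stated without proof, treated as immediate from the definitions of $\filterUniOp$ and $\filterLitOp$ and the two-production grammar for extended literals—so your argument is precisely the routine unfolding one would give if asked to spell it out, and there is nothing to compare.
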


\global\long\def\taggen#1{\mathrm{tag}\left(#1\right)}%

\global\long\def\taggenName{\mathrm{tag}}%

\global\long\def\taggenOne#1{\mathrm{tag_{\iota}}\left(#1\right)}%

\global\long\def\taggenOneName{\mathrm{tag_{\iota}}}%

\global\long\def\taggenOneVar#1#2{\mathrm{tag}_{\iota}^{#2}\left(#1\right)}%

\begin{definition}
[Constraints on Quantifier Tags] An axiomatisation for theory $T$
with tags, denoted $W_{T}$, is a set of tagged quantifiers that adhere
to the following conditions: 
\begin{enumerate}
\item Each quantifier (including each appearing in a nested quantifier)
in the axiomatisation $W_{T}$ is labelled with a distinct tag. 
\item The tag for any non-nested quantifier or the outermost quantifier
of any nested quantifier is not parameterised. That is, for any tagged
quantifier $\left(\unitriMulti xTA\right)^{\Tag\tau}\in W_{T}$, its
tag $\Tag\tau$ is a primitive tag. 
\item An inner quantifier that occurs in a nested quantifier has its tag
parameterised by all of its outer-quantified variables. 
\item Instantiating an outer quantifier produces a copy of the quantifier
body in which (among other changes) tags of all inner quantifiers
that are parameterised by this outer-quantifier are updated to reflect
this instantiation. 
\end{enumerate}
\end{definition}

\global\long\def\updateEgraph#1#2{#1\triangleleft#2}%

\global\long\def\updateW#1#2#3{#1\triangleleft\left(#2,#3\right)}%

\global\long\def\Egraph#1{#1^{\mathrm{I}}}%

\global\long\def\Ehistory#1{#1^{\mathrm{H}}}%

\global\long\def\Entail#1#2{#1\Vdash#2}%

\global\long\def\NotEntail#1#2{#1\not\Vdash#2}%

\global\long\def\Known#1#2{#1\Vdash_{\mathrm{known}}#2}%

\global\long\def\Class#1#2#3{#1\Vdash_{\mathrm{class}}#2\in#3}%

\global\long\def\Equiv#1#2{#1\sim#2}%

\global\long\def\NotEquiv#1#2{#1\not\sim#2}%

\global\long\def\Amount#1{\left\Vert #1\right\Vert }%

\global\long\def\AmountClass#1{\left\Vert #1\right\Vert _{\mathrm{class}}}%

\global\long\def\AmountLeft#1{\left\Vert #1\right.}%

\global\long\def\AmountRight#1{\left.#1\right\Vert }%

\global\long\def\Inst#1#2{#1\Vdash_{\mathrm{inst}}#2}%

\global\long\def\NotInst#1#2{#1\not\Vdash_{\mathrm{inst}}#2}%

\global\long\def\Select{\vdash_{\mathrm{select}}}%

\global\long\def\Ematching{\vdash_{\mathrm{match}}}%

\global\long\def\MatchSep{\sphericalangle}%

\begin{definition}
[States] \emph{States} are defined as follows: 
\[
s\in\textsc{State}
\]
\[
s\Coloneqq\state WAE\mid\lozenge\mid\bot
\]
where $\lozenge$ and $\bot$ are distinguished symbols for saturated
and inconsistent states, $W$ (the \emph{current quantifiers}) is
a set of tagged quantifiers, $A$ (the \emph{current clauses}) is
a set of extended clauses, and E (the \emph{current E-state}) is defined
later. 
\end{definition}

\begin{definition}
[E-interface Judgements] An \emph{E-interface} $\Egraph E\in\textsc{EInter}$
is a set of literals. We write $\Known{\Egraph E}t$ to express that
the ground term $t$ is \emph{known} in the E-interface $\Egraph E$;
we write $\Entail{\Egraph E}{\Equiv{t_{1}}{t_{2}}}$ to express that
the ground terms $t_{1}$ and $t_{2}$ are \emph{known equal} in $\Egraph E$;
we write $\Entail{\Egraph E}{\NotEquiv{t_{1}}{t_{2}}}$ to express
that the ground terms $t_{1}$ and $t_{2}$ are \emph{known disequal}
in $\Egraph E$; we write $\Entail{\Egraph E}{\bot}$ to express that
the E-interface $\Egraph E$ is \emph{inconsistent}. These four judgements
are defined by the following derivation rules: \label{def:appx-e-interface-judgements-full}

$\boxed{\Entail{\Egraph E}{\Equiv tt}}$

$\boxed{\Known{\Egraph E}t}$
\begin{center}
$\dfrac{\Equiv{t_{1}}{t_{2}}\in\Egraph E}{\Entail{\Egraph E}{\Equiv{t_{1}}{t_{2}}}}\textsc{(eq-in)}$\qquad{}$\dfrac{\Entail{\Egraph E}{\Equiv{t_{2}}{t_{1}}}}{\Entail{\Egraph E}{\Equiv{t_{1}}{t_{2}}}}\textsc{(eq-sym)}$
\par\end{center}

\begin{center}
$\dfrac{\begin{array}{cc}
\Entail{\Egraph E}{\Equiv{t_{1}}{t_{2}}}\quad & \quad\Entail{\Egraph E}{\Equiv{t_{2}}{t_{3}}}\end{array}}{\Entail{\Egraph E}{\Equiv{t_{1}}{t_{3}}}}\textsc{(eq-tran)}$
\par\end{center}

\begin{center}
$\dfrac{\Known{\Egraph E}t}{\Entail{\Egraph E}{\Equiv tt}}\textsc{(eq-kn-refl)}$
\par\end{center}

\begin{center}
$\dfrac{\begin{array}{cc}
\Entail{\Egraph E}{\Equiv{t_{i}}{t_{i}^{\prime}}}\quad & \quad\Known{\Egraph E}{g\left(t_{1},\dots,t_{i},\dots,t_{n}\right)}\end{array}}{\Entail{\Egraph E}{\Equiv{g\left(t_{1},\dots,t_{i},\dots,t_{n}\right)}{g\left(t_{1},\dots,t_{i}^{\prime},\dots,t_{n}\right)}}}\textsc{(eq-kn-sub)}$
\par\end{center}

\begin{center}
$\dfrac{\Entail{\Egraph E}{\Equiv{t_{1}}{t_{2}}}}{\Known{\Egraph E}{t_{1}}}\text{\textsc{(kn-eq)}}$\qquad{}$\dfrac{\Known{\Egraph E}{g\left(\dots,t_{i},\dots\right)}}{\Known{\Egraph E}{t_{i}}}\textsc{(kn-sub)}$
\par\end{center}

$\boxed{\Entail{\Egraph E}{\NotEquiv tt}}$
\begin{center}
$\dfrac{\NotEquiv{t_{1}}{t_{2}}\in\Egraph E}{\Entail{\Egraph E}{\NotEquiv{t_{1}}{t_{2}}}}\textsc{(neq-in)}$\qquad{}$\dfrac{\Entail{\Egraph E}{\NotEquiv{t_{2}}{t_{1}}}}{\Entail{\Egraph E}{\NotEquiv{t_{1}}{t_{2}}}}\textsc{(neq-sym)}$
\par\end{center}

\begin{center}
$\dfrac{\begin{array}{cc}
\Entail{\Egraph E}{\NotEquiv{t_{1}}{t_{2}}}\quad & \quad\Entail{\Egraph E}{\Equiv{t_{2}}{t_{3}}}\end{array}}{\Entail{\Egraph E}{\NotEquiv{t_{1}}{t_{3}}}}\textsc{(neq-eq)}$
\par\end{center}

$\boxed{\Entail{\Egraph E}{\bot}}$

\[
\dfrac{\begin{array}{cc}
\Entail{\Egraph E}{\Equiv{t_{1}}{t_{2}}}\quad & \quad\Entail{\Egraph E}{\NotEquiv{t_{1}}{t_{2}}}\end{array}}{\Entail{\Egraph E}{\bot}}\textsc{(bot)}
\]

\end{definition}

\begin{remark}
E-interfaces are equivalent if they agree on these judgements in all
cases. 
\end{remark}

\begin{definition}
[E-interface Extension] For a set of equality and disequality literals
$L$, the \emph{update of an E-interface} $\Egraph E$ \emph{with}
$L$, denoted $\updateEgraph{\Egraph E}L$, is a minimal E-interface
which satisfies all E-interface judgements that $\Egraph E$ does,
while also satisfying $\Entail{\Egraph E}l$ for all $l\in L$. 
\end{definition}

\begin{definition}
[E-interface Judgements (Lifted)] We lift the E-interface judgements
to support vectors of terms as follows: \label{def:appx-e-interface-judgements-lifted}
\begin{description}
\item [{$\boxed{\Entail{\Egraph E}{\Equiv{\multi t}{\multi t}}}$}] Suppose
$\multi{t_{1}}=\left(t_{11},t_{12},\dots,t_{1n}\right)$ and $\multi{t_{2}}=\left(t_{21},t_{22},\dots,t_{2n}\right)$.
$\Entail{\Egraph E}{\Equiv{\multi{t_{1}}}{\multi{t_{2}}}}$ if and
only if $\Entail{\Egraph E}{\Equiv{t_{1i}}{t_{2i}}}$ for every $i\in\{1,\dots,n\}$. 
\item [{$\boxed{\Known{\Egraph E}{\multi t}}$}] Suppose $\multi t=\left(t_{1},t_{2},\dots,t_{n}\right)$.
$\Known{\Egraph E}{\multi t}$ if and only if $\Known{\Egraph E}{t_{i}}$
for every $i\in\{1,\dots,n\}$. 
\item [{$\boxed{\Entail{\Egraph E}{\NotEquiv{\multi t}{\multi t}}}$}] Suppose
$\multi{t_{1}}=\left(t_{11},t_{12},\dots,t_{1n}\right)$ and $\multi{t_{2}}=\left(t_{21},t_{22},\dots,t_{2n}\right)$.
$\Entail{\Egraph E}{\NotEquiv{\multi{t_{1}}}{\multi{t_{2}}}}$ if
and only if $\Entail{\Egraph E}{\NotEquiv{t_{1i}}{t_{2i}}}$ for every
$i\in\{1,\dots,n\}$. 
\end{description}
\end{definition}

\begin{definition}
[E-interface: Candidate Basis] We call a set of terms $B$ a \emph{candidate
basis} of $\Egraph E$ if for every $t$, if $\Known{\Egraph E}t$,
then there exists some $t_{i}\in B$ such that $\Entail{\Egraph E}{\Equiv t{t_{i}}}$. 

We typically use $B_{\Egraph E}$ to represent a candidate basis of
$\Egraph E$ . 
\end{definition}

\begin{definition}
[E-interface: Basis] We call a set of terms $\{t_{1},\dots,t_{n}\}$
a \emph{basis} of $\Egraph E$ if 
\begin{enumerate}
\item for every $t_{i}\in\{t_{1},\dots,t_{n}\}$, $\Known{\Egraph E}{t_{i}}$; 
\item for every $t_{i},t_{j}\in\{t_{1},\dots,t_{n}\}$ where $i\neq j$,
$\Entail{\Egraph E}{\Equiv{t_{i}}{t_{j}}}$ does not hold; 
\item for every $t$, if $\Known{\Egraph E}t$, then there exists some $t_{i}\in\{t_{1},\dots,t_{n}\}$
such that $\Entail{\Egraph E}{\Equiv t{t_{i}}}$. 
\end{enumerate}
A basis for $\Egraph E$ is also referred to as a set of representatives
of equivalence classes of terms known in $\Egraph E$. We (also) use
$B_{\Egraph E}$ to represent a basis of $\Egraph E$. 
\end{definition}

\begin{remark}
Intuitively, a basis of $\Egraph E$ is a minimal set of known terms
in $\Egraph E$ such that any known term in $\Egraph E$ must be equivalent
to exactly one of them. 
\end{remark}

\begin{proposition}
If $\Entail{\Egraph E}{\Equiv{\multi{t_{1}}}{\multi{t_{2}}}}$ and
$\Egraph E\subseteq\Egraph{E_{1}}$, then $\Entail{\Egraph{E_{1}}}{\Equiv{\multi{t_{1}}}{\multi{t_{2}}}}$. 
\end{proposition}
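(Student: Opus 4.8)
The plan is to reduce the statement on vectors to the single-term case and then argue by rule induction on the least-fixed-point definition of the E-interface judgements. By the lifted definition (Def.~\ref{def:appx-e-interface-judgements-lifted}), $\Entail{\Egraph E}{\Equiv{\multi{t_{1}}}{\multi{t_{2}}}}$ holds exactly when $\Entail{\Egraph E}{\Equiv{t_{1i}}{t_{2i}}}$ holds for every component $i$. So it suffices to establish, for single terms, that $\Entail{\Egraph E}{\Equiv{t_{1}}{t_{2}}}$ together with $\Egraph E\subseteq\Egraph{E_{1}}$ implies $\Entail{\Egraph{E_{1}}}{\Equiv{t_{1}}{t_{2}}}$, and then to apply this componentwise.

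Because $\Vdash$ (known equal) and $\Vdash_{\mathrm{known}}$ (known) are defined mutually recursively, I would prove a strengthened, simultaneous statement: whenever $\Egraph E\subseteq\Egraph{E_{1}}$, (a) $\Entail{\Egraph E}{\Equiv{t_{1}}{t_{2}}}$ implies $\Entail{\Egraph{E_{1}}}{\Equiv{t_{1}}{t_{2}}}$, and (b) $\Known{\Egraph E}{t}$ implies $\Known{\Egraph{E_{1}}}{t}$. The proof goes by induction on the derivation (equivalently, on the stage of the least-fixed-point construction), with a case analysis on the last rule applied. The crucial observation is that every rule is monotone in the E-interface parameter: the only rule whose applicability refers directly to the E-interface is \textsc{(eq-in)}, whose side condition $\Equiv{t_{1}}{t_{2}}\in\Egraph E$ is preserved under $\Egraph E\subseteq\Egraph{E_{1}}$, so \textsc{(eq-in)} re-applies verbatim over $\Egraph{E_{1}}$. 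Every other rule (\textsc{(eq-sym)}, \textsc{(eq-tran)}, \textsc{(eq-kn-refl)}, \textsc{(eq-kn-sub)}, \textsc{(kn-eq)}, \textsc{(kn-sub)}) has premises that are themselves $\Vdash$- or $\Vdash_{\mathrm{known}}$-judgements over $\Egraph E$; by the induction hypotheses (a) and (b) these premises hold over $\Egraph{E_{1}}$, and the same rule then derives the conclusion over $\Egraph{E_{1}}$. Thus the entire derivation can be rebuilt with $\Egraph{E_{1}}$ in place of $\Egraph E$.

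More abstractly, one can read the result off from Knaster--Tarski. The judgements over an E-interface $\Egraph D$ form the least fixed point of the monotone rule operator $\Phi_{\Egraph D}$ acting on sets of (both kinds of) judgements. The same observation as above shows that $\Egraph E\subseteq\Egraph{E_{1}}$ yields $\Phi_{\Egraph E}(X)\subseteq\Phi_{\Egraph{E_{1}}}(X)$ for every argument $X$, since the only E-interface-sensitive premise is the membership test in \textsc{(eq-in)}, which is $\subseteq$-monotone. Monotonicity of least fixed points in the operator then gives $\mathrm{lfp}(\Phi_{\Egraph E})\subseteq\mathrm{lfp}(\Phi_{\Egraph{E_{1}}})$, which is precisely the conjunction of (a) and (b).

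I expect no substantive obstacle: the content of the proposition is purely the monotonicity of an inductively-defined judgement under enlargement of its generating set of literals, and the subset hypothesis is discharged entirely within the single \textsc{(eq-in)} base case. The only point requiring a little care is the bookkeeping of the mutual induction—ensuring the induction hypothesis is available for both $\Vdash$ and $\Vdash_{\mathrm{known}}$ at each step, in particular in the \textsc{(eq-kn-sub)} and \textsc{(kn-eq)} cases that cross between the two judgements. Strengthening the statement to the simultaneous form (a)--(b) from the outset makes these cases immediate.
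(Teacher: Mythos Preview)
The paper states this proposition without proof, treating it as an immediate monotonicity fact about the inductively defined judgements. Your proposal is correct and is exactly the natural argument one would supply: reduce to the single-term case via the lifted definition, then do a simultaneous rule induction on the mutually defined $\Vdash$ and $\Vdash_{\mathrm{known}}$ judgements, observing that \textsc{(eq-in)} is the only rule whose side condition inspects $\Egraph E$ directly and that this membership test is preserved under $\subseteq$. The Knaster--Tarski reformulation is a clean way to say the same thing. There is nothing to add; your write-up would serve perfectly well as the omitted proof.
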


\begin{proposition}
If $\NotEntail{\Egraph E}{\Equiv{\multi{t_{1}}}{\multi{t_{2}}}}$
and $\Egraph E\supseteq\Egraph{E_{1}}$, then $\NotEntail{\Egraph{E_{1}}}{\Equiv{\multi{t_{1}}}{\multi{t_{2}}}}$. 
\end{proposition}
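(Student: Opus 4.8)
The plan is to recognise that this proposition is nothing more than the contrapositive of the preceding (monotonicity) proposition, and to obtain it directly from that result, with no new inductive argument required. The approach therefore has two parts: instantiate the preceding proposition so that its inclusion hypothesis lines up with ours, and then transpose the implication.

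First I would note that the preceding proposition is universally quantified over both E-interfaces, so its two interfaces may be instantiated freely. Taking its smaller interface to be our $\Egraph{E_1}$ and its larger interface to be our $\Egraph E$, it specialises to: if $\Egraph{E_1}\subseteq\Egraph E$ and $\Entail{\Egraph{E_1}}{\Equiv{\multi{t_1}}{\multi{t_2}}}$, then $\Entail{\Egraph E}{\Equiv{\multi{t_1}}{\multi{t_2}}}$. Our hypothesis $\Egraph E\supseteq\Egraph{E_1}$ is precisely $\Egraph{E_1}\subseteq\Egraph E$, so this instance is available. The cleanest way to finish is by contradiction: assume $\Entail{\Egraph{E_1}}{\Equiv{\multi{t_1}}{\multi{t_2}}}$; the specialised proposition yields $\Entail{\Egraph E}{\Equiv{\multi{t_1}}{\multi{t_2}}}$, contradicting the hypothesis $\NotEntail{\Egraph E}{\Equiv{\multi{t_1}}{\multi{t_2}}}$. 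Hence $\NotEntail{\Egraph{E_1}}{\Equiv{\multi{t_1}}{\multi{t_2}}}$, as required. The only point demanding care is the renaming: the two propositions reuse the names $\Egraph E$ and $\Egraph{E_1}$ with the roles of subset and superset swapped, so I would state the instantiation explicitly to avoid confusing the smaller interface with the larger one.

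There is essentially no obstacle in the step above; whatever content exists lives upstream, in the monotonicity proposition being invoked. Should one wish to establish that from scratch rather than cite it, I would proceed by rule induction on the derivation of the entailment in the smaller interface, using the least-fixed-point reading of the mutually recursive judgements $\Entail{\Egraph E}{\Equiv{t_1}{t_2}}$ and $\Known{\Egraph E}{t}$: every derivation rule is monotone in the underlying set of (dis)equality literals, since the only rule consulting set membership, \textsc{(eq-in)}, remains applicable as the set grows, and all other rules are pure closure rules referring only to sub-derivations. The vector-lifted judgement is monotone because it is the componentwise conjunction of monotone judgements, so any derivation in $\Egraph{E_1}$ can be replayed verbatim in $\Egraph E$. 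Monotonicity then delivers the present proposition as its contrapositive.
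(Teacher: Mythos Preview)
Your proposal is correct and matches the paper's treatment: the paper states this proposition immediately after the monotonicity proposition and provides no separate proof, implicitly relying on the contrapositive relationship you identify. Your explicit renaming to align the roles of $\Egraph{E}$ and $\Egraph{E_1}$ is a helpful clarification beyond what the paper offers.
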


\begin{proposition}
If $B_{E}$ is a candidate basis for $\Egraph E$, then there exists
$B_{E}^{\prime}\subseteq B_{E}$ such that $B_{E}^{\prime}$ is a
basis for $\Egraph E$. 
\end{proposition}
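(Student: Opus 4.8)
The plan is to extract $B_E'$ in two stages: first discard the elements of $B_E$ that are not known, and then keep exactly one representative from each equivalence class that the surviving elements occupy. The subset relation $B_E' \subseteq B_E$ will be immediate from the construction, so the work is entirely in verifying the three defining conditions of a basis.

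First I would set $B_1 = \{t \in B_E \mid \Known{\Egraph E}{t}\}$ and check that $B_1$ still covers every known term. Given any $t$ with $\Known{\Egraph E}{t}$, the candidate-basis property supplies some $t_i \in B_E$ with $\Entail{\Egraph E}{\Equiv{t}{t_i}}$; applying \textsc{(eq-sym)} and then \textsc{(kn-eq)} shows $\Known{\Egraph E}{t_i}$, so in fact $t_i \in B_1$. Hence $B_1 \subseteq B_E$ also covers every known term. Next I would observe that, \emph{restricted to the set of known terms}, the relation $\Entail{\Egraph E}{\Equiv{\cdot}{\cdot}}$ is an equivalence relation: reflexive on known terms by \textsc{(eq-kn-refl)}, symmetric by \textsc{(eq-sym)}, and transitive by \textsc{(eq-tran)}. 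This partitions $B_1$ into equivalence classes; from each class I would choose a single representative, collecting these representatives into $B_E'$.

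Then I would verify the three basis conditions for $B_E'$. Condition (1), that every element is known, holds because $B_E' \subseteq B_1$ and every element of $B_1$ is known by construction. Condition (2), pairwise inequivalence, holds because distinct elements of $B_E'$ are representatives of distinct equivalence classes and so are not known equal. For condition (3), covering, given a known $t$ I would use the covering property of $B_1$ to obtain $t_i \in B_1$ with $\Entail{\Egraph E}{\Equiv{t}{t_i}}$; letting $r \in B_E'$ be the chosen representative of $t_i$'s class, we have $\Entail{\Egraph E}{\Equiv{t_i}{r}}$, and \textsc{(eq-tran)} yields $\Entail{\Egraph E}{\Equiv{t}{r}}$. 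Since $B_E' \subseteq B_1 \subseteq B_E$, the required inclusion $B_E' \subseteq B_E$ follows as well.

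The main subtlety lies in the first restriction step. A candidate basis may legitimately contain terms that are not themselves known, and the equivalence relation used for the partition is reflexive only on known terms (via \textsc{(eq-kn-refl)}), so one cannot partition $B_E$ directly. The crux is therefore the implication ``equivalent to a known term $\Rightarrow$ known'', obtained from \textsc{(kn-eq)} and \textsc{(eq-sym)}, which guarantees that filtering down to $B_1$ loses no covering power. (If $B_E$ is infinite, selecting one representative per class invokes the axiom of choice; in our applications the known terms, and hence the classes, are finite, so this is not a concern.)
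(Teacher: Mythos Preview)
The paper states this proposition without proof, so there is no argument to compare against; your proof is correct and supplies exactly the details one would expect. The two-stage construction---first restricting to known elements via \textsc{(kn-eq)}, then quotienting by the equivalence relation that $\Entail{\Egraph E}{\Equiv{\cdot}{\cdot}}$ induces on known terms---is the natural route, and your identification of the subtlety (that reflexivity is only available for known terms, forcing the preliminary filter to $B_1$) is on point.
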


\begin{proposition}
For any E-interface $\Egraph E$, all its bases have the same cardinality.
\end{proposition}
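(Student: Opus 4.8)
The plan is to show that the \emph{known-equal} relation $\Entail{\Egraph E}{\Equiv{\cdot}{\cdot}}$, when restricted to the terms known in $\Egraph E$, is a genuine equivalence relation, and then to observe that every basis is exactly a choice of one representative per equivalence class. The common cardinality will then be the number of such classes, which manifestly does not depend on any choice of basis.

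First I would verify the equivalence-relation structure. Reflexivity on known terms is immediate from \textsc{(eq-kn-refl)}: whenever $\Known{\Egraph E}{t}$ holds we obtain $\Entail{\Egraph E}{\Equiv{t}{t}}$. Symmetry is \textsc{(eq-sym)} and transitivity is \textsc{(eq-tran)}. I would additionally record the compatibility fact that $\Entail{\Egraph E}{\Equiv{t_{1}}{t_{2}}}$ forces both $t_{1}$ and $t_{2}$ to be known: \textsc{(kn-eq)} gives $\Known{\Egraph E}{t_{1}}$ directly, and combining it with \textsc{(eq-sym)} gives $\Known{\Egraph E}{t_{2}}$. Hence $\sim$ restricts to a bona fide equivalence relation on the set $\mathcal{K}$ of terms known in $\Egraph E$, with quotient $\mathcal{K}/{\sim}$.

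Next I would exhibit, for an arbitrary basis $B=\{t_{1},\dots,t_{n}\}$, the map $\Phi:B\to\mathcal{K}/{\sim}$ sending each $t_{i}$ to its equivalence class $[t_{i}]$. This is well-defined because every $t_{i}$ is known (basis condition 1, so $t_{i}\in\mathcal{K}$). It is injective because distinct basis elements are never known-equal (basis condition 2), so they lie in distinct classes. It is surjective because every class is represented by some known term $t$, and basis condition 3 supplies a $t_{i}$ with $\Entail{\Egraph E}{\Equiv{t}{t_{i}}}$, whence $[t]=[t_{i}]$ lies in the image. Therefore $\Phi$ is a bijection and $\Amount B=\Amount{\mathcal{K}/{\sim}}$.

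Since the right-hand side is determined by $\Egraph E$ alone and makes no reference to the chosen basis, any two bases of $\Egraph E$ have the same cardinality, completing the argument. The only place demanding care is the equivalence-relation step: one must check that $\sim$-relatedness entails knownness, so that the three basis conditions align cleanly with the well-definedness, injectivity, and surjectivity of $\Phi$; this is exactly what \textsc{(kn-eq)} delivers. Finiteness plays no role in the reasoning—the bijection settles the equal-cardinality claim for classes of any cardinality—although in our setting $\Egraph E$ is a finite set of literals, so $\mathcal{K}/{\sim}$ is in fact finite.
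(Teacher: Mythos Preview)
Your argument is correct and is exactly the natural one: verify that $\sim$ is an equivalence relation on the set of known terms, then note that the three basis conditions say precisely that a basis is a system of representatives for $\mathcal{K}/{\sim}$, whence all bases have cardinality $\lvert\mathcal{K}/{\sim}\rvert$. The paper in fact states this proposition without proof, treating it as a routine observation; your write-up is what such a proof would look like if spelled out. One tiny remark: your closing parenthetical that $\mathcal{K}/{\sim}$ is finite because $\Egraph E$ is finite is true but not quite immediate (rule \textsc{(eq-kn-sub)} together with \textsc{(kn-eq)} can generate infinitely many known \emph{terms}); the point is that every term so generated lands in the class of an already-known term, so no new classes arise. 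This does not affect your main argument, which, as you note, needs no finiteness assumption.
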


\begin{definition}
[E-interface: Equivalence of Tags] We write $\Entail{\Egraph E}{\Equiv{\Tag\tau_{1}}{\Tag\tau_{2}}}$
to express that the tags $\Tag\tau_{1}$ and $\Tag\tau_{2}$ are \emph{equivalent}
in $\Egraph E$. The judgement is defined as follows: \label{def:appx-E-interface-equiv-of-tags}
\begin{center}
$\dfrac{\Tag\tau_{0}\equiv\Tag\tau_{0}^{\prime}}{\Entail{\Egraph E}{\Equiv{\Tag\tau_{0}}{\Tag\tau_{0}^{\prime}}}}\text{ where \ensuremath{\Tag\tau_{0},\Tag\tau_{0}^{\prime}}\ensuremath{\in\textsc{PTag}}}\;\textsc{(prim)}$\qquad{}$\dfrac{\begin{array}{cc}
\Entail{\Egraph E}{\Equiv{\Tag\tau_{1}}{\Tag\tau_{2}}}\quad & \quad\Entail{\Egraph E}{\Equiv{\multi{t_{1}}}{\multi{t_{2}}}}\end{array}}{\Entail{\Egraph E}{\Equiv{\Tag\tau_{1}(\multi{t_{1}})}{\Tag\tau_{2}(\multi{t_{2}})}}}\textsc{(nest)}$
\par\end{center}

Note that $\Entail{\Egraph E}{\Equiv{\multi{t_{1}}}{\multi{t_{2}}}}$
refers to the E-interface judgement that terms $\multi{t_{1}}$ and
$\multi{t_{2}}$ are known equal in $\Egraph E$. 
\end{definition}

\begin{definition}
[E-histories] An \emph{E-history} $\Ehistory E\in\textsc{EHist}$
is a set of pairs (each denoted $\left(\Tag\alpha:\multi r\right)$
) in our formalism: the first element is a tag (identifying a quantifier),
and the second is a vector of ground terms (representing an instantiation
of the corresponding quantifier). 
\end{definition}

\begin{definition}
[E-states] An \emph{E-state} $E$ is a pair $(\Egraph E,\Ehistory E)$
of E-interface and E-history. 
\end{definition}

\begin{definition}
[History-Enabled E-matches] Given a candidate pair $\left(\Tag\alpha:\multi r\right)$
(of tag $\Tag\alpha$ and vector of terms $\multi r$), the \emph{E-state}
$E$ \emph{enables} $\left(\Tag\alpha:\multi r\right)$, written $\Inst E{\left(\Tag\alpha:\multi r\right)}$,
if: for every pair $\left(\Tag\alpha:\multi{r^{\prime}}\right)\in\Ehistory E$,
$\NotEntail{\Egraph E}{\Equiv{\multi r}{\multi{r^{\prime}}}}$. \label{def:appx-hist-enabled-e-match-not-opt}
\end{definition}

\begin{remark}
$\NotEntail{\Egraph E}{\Equiv{\multi r}{\multi{r^{\prime}}}}$ in
the definition above uses a lifted version of an E-interface judgement
$\Entail{}{}$, defined in Def. \ref{def:appx-e-interface-judgements-lifted}.
An equivalent definition is to replace this with the following: at
least one of the pointwise equalities $\multi{r_{i}\sim r_{i}^{\prime}}$
is not \emph{known} in $\Egraph E$. 
\end{remark}

\begin{definition}
[History-Enabled E-matches (Optimised)] Given a candidate pair $\left(\Tag\alpha:\multi r\right)$
(of tag $\Tag\alpha$ and vector of terms $\multi r$), the \emph{E-state}
$E$ \emph{enables} $\left(\Tag\alpha:\multi r\right)$, written $\Inst E{\left(\Tag\alpha:\multi r\right)}$,
if: for every pair $\left(\Tag\tau:\multi{r^{\prime}}\right)\in\Ehistory E$
such that $\Entail{\Egraph E}{\Equiv{\Tag\alpha}{\Tag\tau}}$, $\NotEntail{\Egraph E}{\Equiv{\multi r}{\multi{r^{\prime}}}}$.
\label{def:appx-hist-enabled-e-match-opt}

Note that the equivalence of tags $\Entail{\Egraph E}{\Equiv{\Tag\alpha}{\Tag\tau}}$
is defined in Def. \ref{def:appx-E-interface-equiv-of-tags}. 
\end{definition}

\begin{remark}
The above optimised version is not included in the main contents of
the paper. 
\end{remark}

\begin{definition}
[E-history Extension] Let $\Ehistory E\in\textsc{EHist}$. Updating
$\Ehistory E$ with a pair $\left(\Tag\alpha:\multi r\right)$ is
defined as follows: 
\[
\updateEgraph{\Ehistory E}{\left(\Tag\alpha:\multi r\right)}=\Ehistory E\cup\left\{ \left(\Tag\alpha:\multi r\right)\right\} 
\]
\end{definition}

\begin{remark}
Since $\updateEgraph{\Ehistory E}{\left(\Tag\alpha:\multi r\right)}$
occurs immediately after verifying $\Inst E{\left(\Tag\alpha:\multi r\right)}$
within a single quantifier instantiation transition, the pair $\left(\Tag\alpha:\multi r\right)$
cannot be redundant for $\Ehistory E$. The definition of being ``redundant''
is in line with the version of $\Inst{}{}$ employed. 
\end{remark}

\global\long\def\Verify#1#2{#1\Vdash_{\mathrm{verify}}#2}%

\global\long\def\NotVerify#1#2{#1\not\Vdash_{\mathrm{verify}}#2}%

\begin{definition}
[Verified Extended Clauses] An extended clause $\phi_{1}\vee\dots\vee\phi_{n}$
is \emph{verified} by a set of quantifiers $W$ and an E-interface
$\Egraph E$, written $\Verify{W,\Egraph E}{\phi_{1}\vee\dots\vee\phi_{n}}$,
if there is some $\phi_{i}$ where $1\leq i\leq n$, such that one
of the following is satisfied: 
\begin{itemize}
\item $\phi_{i}$ is a tagged quantifier $\left(\unitriMulti xT{A_{11}}\right)^{\Tag\alpha}\in W$;
\item $\phi_{i}$ is $t_{1}=t_{2}$ and $\Entail{\Egraph E}{\Equiv{t_{1}}{t_{2}}}$;
\item $\phi_{i}$ is $t_{1}\neq t_{2}$ and $\Entail{\Egraph E}{\NotEquiv{t_{1}}{t_{2}}}$. 
\end{itemize}
\end{definition}

\begin{definition}
[Updating Current Quantifiers] Let $W=\{v_{1}^{\Tag\tau_{1}},\dots v_{i}^{\Tag\tau_{i}},\dots\}$
where each $v_{i}^{\Tag\tau_{i}}$ is a tagged quantifier. Updating
$W$ with a tagged quantifier $v^{\Tag\tau}$ with the help of an
E-interface $\Egraph{}$ is defined as follows: \label{def:appx-updating-current-quantifiers-not-opt}
\[
\updateW W{v^{\Tag\tau}}{\Egraph E}=W\cup\{v^{\Tag\tau}\}
\]

The E-interface $\Egraph E$ a redundant parameter of this update
operation. We may instead write $W\cup\{v^{\Tag\tau}\}$ to specifically
refer to this version of updating current quantifiers. 

By convention of overloading, we lift the above relation to support
$\updateW W{\Phi}{\Egraph E}$ where $\Phi=\left\{ \dots,v^{\Tag\tau},\dots\right\} $
is a set of tagged quantifiers. 
\end{definition}

\begin{definition}
[Updating Current Quantifiers (Optimised)] Let $W=\{v_{1}^{\Tag\tau_{1}},\dots v_{i}^{\Tag\tau_{i}},\dots\}$
where each $v_{i}^{\Tag\tau_{i}}$ is a tagged quantifier. Updating
$W$ with a tagged quantifier $v^{\Tag\tau}$ with the help of an
E-interface $\Egraph{}$ is defined as follows: \label{def:appx-updating-current-quantifiers-opt}
\[
\updateW W{v^{\Tag\tau}}{\Egraph E}=\begin{cases}
W\cup\left\{ v^{\Tag\tau}\right\}  & \text{if there does not exist any }v_{i}^{\Tag\tau_{i}}\in W\\
 & \text{such that }\Entail{\Egraph E}{\Equiv{\Tag\tau}{\Tag\tau_{i}}}\\
W & \text{otherwise}
\end{cases}
\]
By convention of overloading, we lift the above relation to support
$\updateW W{\Phi}{\Egraph E}$ where $\Phi=\left\{ \dots,v^{\Tag\tau},\dots\right\} $
is a set of tagged quantifiers. 
\end{definition}

\begin{remark}
The above optimised version is not included in the main contents of
the paper.  
\end{remark}

\begin{definition}
[E-matching] The judgement $\state WAE\Ematching\left(\unitriMulti xT{A^{\prime}}\right)^{\Tag\alpha}\MatchSep\multi r$
defines, for a given state $\state WAE$, which instantiations (using
terms $\multi r$) of which quantifiers $\left(\unitriMulti xT{A^{\prime}}\right)^{\Tag\alpha}$
are enabled, according to the rules of E-matching, as follows: 

\[
\dfrac{\begin{array}{cc}
\left(\unitriMulti xT{A^{\prime}}\right)^{\Tag\alpha}\in W & \multi t\text{ is one trigger set of }\multi{\left[T\right]}\\
\Known{\Egraph E}{\multi t\left(\multi x\right)\left[\nicefrac{\multi r}{\multi x}\right]} & \Inst E{\left(\Tag\alpha:\multi r\right)}
\end{array}}{\state WAE\Ematching\left(\unitriMulti xT{A^{\prime}}\right)^{\Tag\alpha}\MatchSep\multi r}
\]
We write $\state WAE\not\Ematching$ to mean no instantiations are
enabled in this state $\state WAE$. 
\end{definition}

\begin{definition}
[State Transitions] The (single step) state transition relation $\longrightarrow$
is defined as follows: 
\[
\tran\;\subseteq\;\textsc{State}\times\textsc{State}
\]
\[
\dfrac{\begin{array}{c}
\emptyset\subset\Phi\subseteq\left\{ \phi_{i}\mid C\in A,\;\NotVerify{W_{1},\Egraph{E_{1}}}C,\;C\text{ is }\dots\vee\phi_{i}\vee\dots\right\} \\
W_{2}=\updateW{W_{1}}{\filterUni{\Phi}}{\Egraph{E_{1}}}\quad\Egraph{E_{2}}=\updateEgraph{\Egraph{E_{1}}}{\filterLit{\Phi}}\quad\Ehistory{E_{2}}=\ehistory{E_{1}}
\end{array}}{\state{W_{1}}A{E_{1}}\tran\state{W_{2}}A{E_{2}}}\textsc{(split)}
\]
\[
\dfrac{\Entail{\Egraph E}{\bot}}{\state WAE\tran\bot}\textsc{(bot)}
\]
\[
\dfrac{\begin{array}{ccc}
\NotEntail{\Egraph E}{\bot}\quad & \quad\Verify{W,\Egraph E}C\text{ for every }C\in A\quad & \quad\state WAE\not\Ematching\end{array}}{\state WAE\tran\lozenge}\textsc{(sat)}
\]
\[
\dfrac{\begin{array}{c}
\state{W_{1}}{A_{1}}{E_{1}}\Ematching\left(\unitriMulti xT{A_{11}}\right)^{\Tag\alpha}\MatchSep\multi r\\
\begin{array}{cc}
A_{12}=A_{11}\left[\nicefrac{\multi r}{\multi x}\right] & A_{12}^{\prime}=\filterUni{A_{12}}\cup\filterLit{A_{12}}\\
A_{2}=A_{1}\cup\left(A_{12}\backslash A_{12}^{\prime}\right) & W_{2}=\updateW{W_{1}}{\filterUni{A_{12}}}{\Egraph{E_{1}}}\\
\Egraph{E_{2}}=\Egraph{E_{1}}\triangleleft\filterLit{A_{12}} & \Ehistory{E_{2}}=\Ehistory{E_{1}}\triangleleft\left(\Tag\alpha:\multi r\right)
\end{array}
\end{array}}{\state{W_{1}}{A_{1}}{E_{1}}\tran\state{W_{2}}{A_{2}}{E_{2}}}\textsc{(inst)}
\]
\end{definition}

\begin{remark}
Regarding $W_{2}=\updateW{W_{1}}{\filterUni{\Phi}}{\Egraph{E_{1}}}$
in the $\textsc{(split)}$ rule : Select all tagged quantifiers from
$\Phi$. Joining these tagged quantifiers with the current quantifiers
$W_{1}$ with the help of the current E-interface $\Egraph{E_{1}}$
yields $W_{2}$. We provide two definitions (cf. Def. \ref{def:appx-updating-current-quantifiers-not-opt}
and \ref{def:appx-updating-current-quantifiers-opt}) for updating
the current quantifiers, one of which (cf. Def. \ref{def:appx-updating-current-quantifiers-opt})
is optimised to rule out redundant quantifiers. 

Regarding $W_{2}=\updateW{W_{1}}{\filterUni{A_{12}}}{\Egraph{E_{1}}}$
in the $\textsc{(inst)}$ rule: Select all tagged quantifiers from
$A_{12}$. Joining these tagged quantifiers with the current quantifiers
$W_{1}$ with the help of the current E-interface $\Egraph{E_{1}}$
yields $W_{2}$. We have provided two definitions (cf. Def. \ref{def:appx-updating-current-quantifiers-not-opt}
and \ref{def:appx-updating-current-quantifiers-opt}) for updating
the current quantifiers, one of which (cf. Def. \ref{def:appx-updating-current-quantifiers-opt})
is optimised to rule out redundant quantifiers. 
\end{remark}

\global\long\def\Inj{\mathrm{inj}}%

\global\long\def\InjE{\mathrm{inj}_{\mathrm{E}}}%

\global\long\def\InjFunc#1{\Inj\left(#1\right)}%

\global\long\def\InjEFunc#1{\InjE\left(#1\right)}%

\begin{definition}
[Injection Function] Let $L$ be a set of ground literals. We write
$\InjFunc L$ to denote an initial E-state constructed from $L$.
The \emph{injection} function is defined as follows: 

\[
\Inj:\mathcal{P}\left(\textsc{Literal}\right)\rightarrow\textsc{EState}
\]
\[
\InjFunc L=\left(\updateEgraph{\emptyset}L,\emptyset\right)
\]
\end{definition}

\global\long\def\Invariant#1#2{I_{\mathrm{#1}}\left(#2\right)}%

\begin{definition}
[General-purpose Invariants] Suppose the initial state is $s_{0}=\state{W_{0}}{\emptyset}{\InjFunc L}$,
where $W_{0}$ is an axiomatisation for a theory with tags, and $L$
is an arbitrary set of ground literals from the same theory.

For an arbitrary intermediate state $s=\state WAE$, define the general-purpose
invariant $\Invariant G{s,s_{0}}$ to be a conjunction of the following
predicates: \label{def:appx-inv-general-purpose}
\begin{enumerate}
\item (Quantifiers have distinct tags). $\Invariant{G:QT}{s,s_{0}}$, which
holds iff: quantifiers in current quantifiers $W$ have distinct tags,
i.e., $\Tag\tau_{1}\not\equiv\Tag\tau_{2}$ for every two $v_{1}^{\Tag\tau_{1}},v_{2}^{\Tag\tau_{2}}\in W$
where $v_{1}^{\Tag\tau_{1}}\not\equiv v_{2}^{\Tag\tau_{2}}$. 
\item (No unit clauses in $A$). $\Invariant{G:NA}{s,s_{0}}$, which holds
iff: none of the extended clauses in $A$ is a unit clause, i.e.,
for every $C\in A$, $C$ is not of the form $\{\phi\}$. 
\item (E-history up-to-date with quantifiers). $\Invariant{G:EQ}{s,s_{0}}$,
which holds iff: the current E-history $\ehistory E$ is up to date
w.r.t. the current quantifiers $W$, i.e., for every pair $\left(\Tag\alpha:\multi r\right)$
from E-history $\Ehistory E$, there exists a quantifier $\unitriMulti xT{A^{\prime}}$
from $W$ whose tag is $\Tag\alpha$ and the dimension of $\multi x$
is equal to that of $\multi r$. 
\item (E-history up-to-date with E-interface). $\Invariant{G:EE}{s,s_{0}}$,
which holds iff: the current E-history $\ehistory E$ is update to
date w.r.t. the current E-interface $\egraph E$, i.e., for every
pair $\left(\Tag\alpha:\multi r\right)$ from E-history $\Ehistory E$,
there exists a quantifier $\unitriMulti xT{A^{\prime}}$ from $W$
whose tag is $\Tag\alpha$, $\Known{\Egraph E}{\multi r}$ and $\Known{\Egraph E}{\multi t\left[\nicefrac{\multi r}{\multi x}\right]}$
for some trigger set $\multi t$ from $\multi{\left[T\right]}$. 
\item (History-enabled E-matches with equal terms). $\Invariant{G:HE}{s,s_{0}}$,
which holds iff: if $\Inst E{\left(\Tag\tau:\multi r\right)}$, $\Entail{\Egraph E}{\Equiv{\multi r}{\multi{r^{\prime}}}}$,
then $\Inst E{\left(\Tag\tau:\multi{r^{\prime}}\right)}$. 
\item (Known terms on basis). $\Invariant{G:KB}{s,s_{0}}$, which holds
iff: if $\Known{\Egraph E}{\multi r}$, then for any basis of $\Egraph E$,
denoted $B_{E}$, there exists $\multi{r^{\prime}}=(r_{1}^{\prime},\dots,r_{i}^{\prime},\dots,r_{n}^{\prime})$
with each $r_{i}^{\prime}\in B_{E}$, and $\Entail{\Egraph E}{\Equiv{\multi r}{\multi{r^{\prime}}}}$. 
\item (E-interface grows). $\Invariant{G:IG}{s,s_{0}}$, which holds iff:
if $s\tran s^{\prime}$, then $\Egraph{\left(E^{\prime}\right)}$
is a conservative extension of $\Egraph E$ , i.e., $\Egraph E\subseteq\Egraph{\left(E^{\prime}\right)}$. 
\item (E-history grows). $\Invariant{G:HG}{s,s_{0}}$, which holds iff:
if $s\tran s^{\prime}$, then $\Ehistory{\left(E^{\prime}\right)}$
is a conservative extension of $\Ehistory E$, i.e., $\Ehistory E\subseteq\Ehistory{\left(E^{\prime}\right)}$. 
\item (Quantifiers grow). $\Invariant{G:QG}{s,s_{0}}$, which holds iff:
if $s\tran s^{\prime}$, then $W^{\prime}$ is a conservative extension
of $W$, i.e., $W\subseteq W^{\prime}$. 
\item (Clauses grow). $\Invariant{G:CG}{s,s_{0}}$, which holds iff: if
$s\tran s^{\prime}$, then $A^{\prime}$ is a conservative extension
of $A$, i.e., $A\subseteq A^{\prime}$. 
\item (Verified clauses remain verified). $\Invariant{G:VV}{s,s_{0}}$,
which holds iff: suppose $s\tran s^{\prime}$, if $C\in A$ and $\Verify{W,\Egraph E}C$,
then $C\in A^{\prime}$ and $\Verify{W^{\prime},\Egraph{\left(E^{\prime}\right)}}C$. 
\end{enumerate}
\end{definition}

\begin{proposition}
Suppose the initial state is $s_{0}=\state{W_{0}}{\emptyset}{\InjFunc L}$,
where $W_{0}$ is an axiomatisation for a theory with tags, and $L$
is an arbitrary set of ground literals from the same theory. If $s_{0}\tran^{*}s$,
then $\Invariant G{s,s_{0}}$ holds. \label{prop:appx-prop-validity-of-general-purpose-inv}
\begin{proof}
The proof is straightforward by induction on the trace of $s_{0}\tran^{*}s$. 
\end{proof}

\end{proposition}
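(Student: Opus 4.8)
The plan is to argue by induction on the length of the trace $s_{0}\tran^{*}s$, handling the conjunction $\Invariant G{s,s_{0}}$ one conjunct at a time. It is convenient first to separate the eleven conjuncts into those that need the inductive hypothesis and those that do not. The \emph{monotonicity} conjuncts $\Invariant{G:IG}{s,s_{0}}$, $\Invariant{G:HG}{s,s_{0}}$, $\Invariant{G:QG}{s,s_{0}}$ and $\Invariant{G:CG}{s,s_{0}}$ are single-step statements that follow by direct inspection of the transition rules: in $\textsc{(split)}$ and $\textsc{(inst)}$ the components $W$, $\Egraph E$, $\Ehistory E$ and $A$ are only ever extended (via $\cup$ and $\triangleleft$), while $\textsc{(bot)}$ and $\textsc{(sat)}$ target $\bot$ and $\lozenge$, for which the invariant is not applicable. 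The conjuncts $\Invariant{G:HE}{s,s_{0}}$ and $\Invariant{G:KB}{s,s_{0}}$ are in fact \emph{universal} properties of any E-state: $\Invariant{G:HE}{s,s_{0}}$ is immediate from transitivity of $\sim$, since if $\Inst E{(\Tag\tau:\multi r)}$ and $\Entail{\Egraph E}{\Equiv{\multi r}{\multi{r'}}}$ hold, then any history entry pointwise-equal to $\multi{r'}$ would, by transitivity, be pointwise-equal to $\multi r$, contradicting enabledness (for the optimised $\Inst{}{}$ variant one additionally uses that tag-equivalence is an equivalence); and $\Invariant{G:KB}{s,s_{0}}$ is immediate from the definition of a basis, pointwise. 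Finally $\Invariant{G:VV}{s,s_{0}}$ follows from the monotonicity conjuncts together with the earlier proposition that $\sim$ (and hence known-ness) is preserved as $\Egraph E$ grows: a verified disjunct is either a quantifier that survives into $W'$ or a literal whose witnessing (dis)equality persists.

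This leaves the four genuinely \emph{inductive} conjuncts $\Invariant{G:QT}{s,s_{0}}$, $\Invariant{G:NA}{s,s_{0}}$, $\Invariant{G:EQ}{s,s_{0}}$ and $\Invariant{G:EE}{s,s_{0}}$. For the base case $s=s_{0}=\state{W_{0}}{\emptyset}{\InjFunc L}$, distinct tags ($\Invariant{G:QT}{s,s_{0}}$) hold by the defining conditions of an axiomatisation with tags; $\Invariant{G:NA}{s,s_{0}}$ holds because $A_{0}=\emptyset$; and $\Invariant{G:EQ}{s,s_{0}}$, $\Invariant{G:EE}{s,s_{0}}$ hold vacuously because $\Ehistory{E_{0}}=\emptyset$. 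For the inductive step I assume $\Invariant G{s,s_{0}}$ for a reachable $s=\state{W_{1}}{A_{1}}{E_{1}}$ and a step $s\tran s'$; only $\textsc{(split)}$ and $\textsc{(inst)}$ change a $\langle\cdot,\cdot,\cdot\rangle$-state. In both cases the strategy is uniform: old witnesses are carried forward using the already-established monotonicity conjuncts, and the freshly added data are checked directly against the rule's side conditions. Concretely, $\Invariant{G:NA}{s,s_{0}}$ is preserved because $\textsc{(split)}$ leaves $A$ unchanged while $\textsc{(inst)}$ contributes only the non-unit part $A_{12}\backslash A_{12}'$ (every unit clause, being a single quantifier or single literal, is routed into $W$ or $\Egraph E$). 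For $\Invariant{G:EQ}{s,s_{0}}$ and $\Invariant{G:EE}{s,s_{0}}$ the only new history entry arises in $\textsc{(inst)}$ as $(\Tag\alpha:\multi r)$; its witnessing quantifier $(\unitriMulti xT{A_{11}})^{\Tag\alpha}\in W_{1}\subseteq W_{2}$ with matching dimension, the knownness $\Known{\Egraph{E_{1}}}{\multi r}$, and the known trigger instance all come straight from the premises of the $\Ematching$ judgement, while the old entries remain valid because $\Egraph{E_{1}}\subseteq\Egraph{E_{2}}$ keeps their terms and triggers known.

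I expect the main obstacle to be $\Invariant{G:QT}{s,s_{0}}$ (distinct tags) in the presence of dynamically introduced quantifiers. When $\textsc{(inst)}$ instantiates the \emph{outer} quantifier of a nested axiom, it emits copies of the inner quantifiers whose parameterised tags $\Tag\tau(\multi r)$ must be shown distinct both from every tag already in $W_{1}$ and from one another. This is exactly where the tagging discipline must be invoked: outer quantifiers carry fresh primitive tags, inner tags are parameterised by \emph{all} outer-quantified variables, and instantiation substitutes the match $\multi r$ into those parameters, so two genuinely distinct instantiations yield syntactically distinct tags. The delicate point is ruling out a \emph{repeated} instantiation producing a colliding tag; this is precisely blocked by the history-enabledness premise $\Inst E{(\Tag\alpha:\multi r)}$ of $\Ematching$, which forbids re-instantiating a quantifier with a match pointwise-equal to a recorded one. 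As an auxiliary observation feeding $\Invariant{G:QT}{s,s_{0}}$, I would note that the two semantic variants behave compatibly: the non-optimised updates rely solely on the tagging discipline plus enabledness, whereas the optimised $\triangleleft$ on $W$ and the optimised $\Inst{}{}$ additionally deduplicate modulo tag-equivalence in $\Egraph{E_{1}}$, so the distinctness argument only becomes easier. Once $\Invariant{G:QT}{s,s_{0}}$ is secured, the remaining conjuncts close routinely, matching the paper's assessment that the induction is straightforward.
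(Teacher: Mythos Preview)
Your proposal is correct and follows exactly the approach the paper indicates (induction on the trace), supplying the case analysis and invariant-by-invariant discussion that the paper omits with its one-line ``straightforward by induction''. One small remark: for $\Invariant{G:QT}{s,s_{0}}$ you focus on \textsc{(inst)} but should also cover \textsc{(split)}, which can add inner quantifiers harvested from clauses in $A$; the same tagging-discipline argument applies there, and your appeal to history-enabledness is in fact not needed for QT, since identical tags force identical bodies and set union already deduplicates.
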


\section{Proving Instantiation Termination for Set Theory \label{sec:appx-proving-instantiation-termination-set-theory}}

\global\long\def\FuncFont#1{\textit{#1}}%

\global\long\def\Member{\FuncFont{member}}%

\global\long\def\Subset{\FuncFont{subset}}%

\global\long\def\Union{\FuncFont{union}}%

\global\long\def\Inter{\FuncFont{inter}}%

\global\long\def\Diff{\FuncFont{diff}}%

\global\long\def\Add{\FuncFont{add}}%

\global\long\def\Remove{\FuncFont{remove}}%

\global\long\def\IsEmpty{\FuncFont{isEmpty}}%

\global\long\def\Empty{\FuncFont{empty}}%

\global\long\def\Card{\FuncFont{card}}%

\global\long\def\SSet{\FuncFont{Set}}%

\global\long\def\BBool{\FuncFont{Bool}}%

\global\long\def\NNat{\FuncFont{Nat}}%

\global\long\def\Skolem{\FuncFont{Sk}}%

\global\long\def\Singleton{\FuncFont{singleton}}%

\global\long\def\Disjoint{\FuncFont{disjoint}}%

\global\long\def\Equal{\FuncFont{equal}}%

\begin{figure}
\begin{centering}
\begin{tabular}{l|l}
\hline 
Functions & Types\tabularnewline
\hline 
$\Member$ & $T\times\SSet(T)\rightarrow\BBool$\tabularnewline
$\Subset$ & $\SSet(T)\times\SSet(T)\rightarrow\BBool$\tabularnewline
$\Union$ & $\SSet(T)\times\SSet(T)\rightarrow\SSet(T)$\tabularnewline
$\Inter$ & $\SSet(T)\times\SSet(T)\rightarrow\SSet(T)$\tabularnewline
$\Diff$ & $\SSet(T)\times\SSet(T)\rightarrow\SSet(T)$\tabularnewline
$\Add$ & $T\times\SSet(T)\rightarrow\SSet(T)$\tabularnewline
$\Remove$ & $T\times\SSet(T)\rightarrow\SSet(T)$\tabularnewline
$\IsEmpty$ & $\SSet(T)\rightarrow\BBool$\tabularnewline
$\Empty$ & $()\rightarrow\SSet(T)$\tabularnewline
$\Singleton$ & $T\rightarrow\SSet(T)$\tabularnewline
$\Disjoint$ & $\SSet(T)\times\SSet(T)\rightarrow\BBool$\tabularnewline
$\Equal$ & $\SSet(T)\times\SSet(T)\rightarrow\BBool$\tabularnewline
\hline 
\end{tabular}
\par\end{centering}
\caption{Functions in Our Axiomatisation for Set Theory \label{fig:appx-Functions-in-Our-Set-Theory-Ax}}
\end{figure}

We prove instantiation termination for our axiomatisation of set theory.
Our axiomatisation has employed 12 uninterpreted functions, as illustrated
in Fig. \ref{fig:appx-Functions-in-Our-Set-Theory-Ax}. The full axiomatisation
is available in Appx. \ref{subsec:appx-our-axiomatisation-set-theory}. 
\begin{definition}
[Tagging Quantifiers] In our axiomatisation for set theory, each
axiom is identified by its own name as a tag. The inner quantifiers
of nested quantifiers are tagged as follows: \label{def:appx-tagging-quantifiers-set-theory-axioms}
\begin{itemize}
\item The inner quantifier of axiom (subset-elim)
\[
\begin{array}{l}
\forall s_{1},s_{2}.\left[\Subset(s_{1},s_{2})\right]\;\neg\Subset(s_{1},s_{2})\vee\\
\left(\forall x.[\Member(x,s_{1})][\Member(x,s_{2})]\;\neg\Member(x,s_{1})\vee\Member(x,s_{2})\right)
\end{array}
\]
is tagged with $\Tag\text{subset-elim}(s_{1},s_{2})$. 
\item The inner quantifier of axiom (disjoint-elim) 
\[
\begin{array}{l}
\forall s_{1},s_{2}.\left[\Disjoint(s_{1},s_{2})\right]\neg\Disjoint(s_{1},s_{2})\vee\\
\left(\forall x.\left[\Member(x,s_{1})\right]\left[\Member(x,s_{2})\right]\neg\Member(x,s_{1})\vee\neg\Member(x,s_{2})\right)
\end{array}
\]
is tagged with $\Tag\text{disjoint-elim}(s_{1},s_{2})$. 
\item The inner quantifier of axiom (isEmpty-elim-1) 
\[
\forall s.\left[\IsEmpty(s)\right]\;\neg\IsEmpty(s)\vee\left(\forall x.[\Member(x,s)]\;\neg\Member(x,s)\right)
\]
is tagged with $\Tag\text{isEmpty-elim-1}(s)$. 
\end{itemize}
\end{definition}

\global\long\def\filterT{\filter_{T}}%

\global\long\def\filterSet{\filter_{\SSet(T)}}%

\global\long\def\FilterT{O_{2}}%

\global\long\def\FilterSet{O_{1}}%

\begin{definition}
[Overapproximation of Basis] Suppose $B$ is a basis of an E-interface.
The functions $\FilterSet(B)$ and $\FilterT(B)$ denote overapproximations
for the $\SSet(T)$-typed and $T$-typed elements within basis $B$,
respectively. 
\begin{align*}
\FilterSet(B) & =\filterSet(B)\\
\FilterT(B) & =\filterT(B)\\
 & \quad\cup\widehat{\Skolem_{\textit{ss}}}(\FilterSet(B),\FilterSet(B))\cup\widehat{\Skolem_{\textit{eq}}}(\FilterSet(B),\FilterSet(B))\\
 & \quad\cup\widehat{\Skolem_{\textit{dj}}}(\FilterSet(B),\FilterSet(B))\cup\widehat{\Skolem_{\textit{ie}}}(\FilterSet(B))
\end{align*}
Here $\filterSet$ and $\filterT$ take a basis and select its $\SSet(T)$-typed
and $T$-typed elements, respectively; each $\widehat{\Skolem}$ is
lifted from the corresponding $\Skolem$ to support sets. 
\end{definition}

The $\FilterSet$ function selects terms from $B$ that are of type
$\SSet(T)$; the $\FilterT$ function selects terms from $B$ that
are of type $T$ and uses terms of type $\SSet(T)$ to compose Skolem
terms of type $T$. 

\begin{proposition}
Both $\FilterSet$ and $\FilterT$ are monotonic with respect to $\subseteq$. 
\end{proposition}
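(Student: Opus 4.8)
The plan is to prove monotonicity by decomposing $\FilterSet$ and $\FilterT$ into simpler building blocks, each of which is manifestly monotonic, and then invoking the standard closure properties that the composition and the (finite) pointwise union of monotonic set-valued operations are again monotonic. Throughout, ``monotonic'' means order-preserving for $\subseteq$: a set-valued operation $f$ is monotonic iff $B_{1}\subseteq B_{2}$ implies $f(B_{1})\subseteq f(B_{2})$ (and jointly monotonic, for a binary operation, iff enlarging either argument can only enlarge the result). I would first record the two closure facts: if $f$ and $g$ are monotonic then so are $B\mapsto f(g(B))$ and $B\mapsto f(B)\cup g(B)$; both are one-line set-chasing arguments.

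Next I would establish monotonicity of the primitive ingredients. The type-selection filters $\filterSet$ and $\filterT$ each have the shape $B\mapsto\{t\in B\mid t\text{ has a fixed sort}\}$, so if $B_{1}\subseteq B_{2}$ then every selected element of $B_{1}$ lies in $B_{2}$ and still passes the (unchanged) sort test; hence both filters are monotonic. Similarly, each lifted Skolem operation $\widehat{\Skolem_{\bullet}}$ returns the set of Skolem applications built pointwise from its argument set(s)---e.g. $\widehat{\Skolem_{\textit{ss}}}(S_{1},S_{2})=\{\Skolem_{\textit{ss}}(a,b)\mid a\in S_{1},\,b\in S_{2}\}$ and $\widehat{\Skolem_{\textit{ie}}}(S)=\{\Skolem_{\textit{ie}}(a)\mid a\in S\}$---so it is the image of a product of its arguments under a fixed function, and is therefore jointly monotonic in those arguments.

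With these in hand the result follows. Since $\FilterSet=\filterSet$, its monotonicity is exactly the filter case above. For $\FilterT$, note that it is the finite union of $\filterT(B)$ and the four terms $\widehat{\Skolem_{\bullet}}(\FilterSet(B),\dots)$. Each of the latter is the composition of a (jointly) monotonic Skolem operation with the monotonic map $B\mapsto\FilterSet(B)$, hence monotonic by the composition fact; and $\filterT$ is monotonic directly. A finite union of monotonic maps is monotonic, so $\FilterT$ is monotonic. I expect no genuine obstacle here: the only care needed is to prove $\FilterSet$ monotonic \emph{before} $\FilterT$ (the latter is defined in terms of the former, so one must check there is no circularity---there is none, as $\FilterSet$ does not mention $\FilterT$), and to note that, although the two overapproximations are only ever applied to bases, the underlying filter and Skolem-lifting operations are well defined and monotonic on arbitrary finite term sets, so the basis restriction is immaterial to the ordering argument.
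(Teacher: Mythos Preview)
Your proposal is correct. The paper states this proposition without proof, evidently regarding it as immediate from the definitions; your decomposition into monotonic filters and lifted Skolem maps, combined via composition and finite union, is exactly the routine verification one would supply if pressed.
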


\begin{proposition}
If $B$ is a basis of an E-interface $\Egraph E$, then $\FilterSet(B)\cup\FilterT(B)$
is a candidate basis of $\Egraph E$. 
\end{proposition}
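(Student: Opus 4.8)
The plan is to reduce the claim to a short covering argument by exploiting that the candidate-basis property is purely a covering condition that is preserved under enlarging the set. Recall that $B'$ is a candidate basis of $\Egraph E$ exactly when every $t$ with $\Known{\Egraph E}{t}$ satisfies $\Entail{\Egraph E}{\Equiv{t}{t_i}}$ for some $t_i \in B'$; this condition is upward closed, so if $B' \subseteq B''$ and $B'$ is a candidate basis then so is $B''$. I would therefore prove two facts: (i) the given basis $B$ is itself a candidate basis, and (ii) $\FilterSet(B) \cup \FilterT(B) \supseteq B$; the proposition then follows immediately by upward closure.

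Fact (i) is immediate, since condition~3 in the definition of a basis is \emph{verbatim} the candidate-basis condition. For fact (ii), I would unfold the overapproximations: $\FilterSet(B) = \filterSet(B)$ and $\FilterT(B) = \filterT(B) \cup (\text{the Skolem overapproximation terms})$, so $\FilterSet(B) \cup \FilterT(B) \supseteq \filterSet(B) \cup \filterT(B)$, and it remains only to show $\filterSet(B) \cup \filterT(B) = B$, i.e. that the two sort-filters jointly retain every element of $B$. The extra Skolem terms introduced by $\FilterT$ can then only enlarge the set and are harmless, precisely because the covering condition is monotone.

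The only step requiring genuine care is the sort argument behind $\filterSet(B) \cup \filterT(B) = B$. Here I would appeal to the term grammar together with the signatures in Fig.~\ref{fig:appx-Functions-in-Our-Set-Theory-Ax}: a term is a variable, a nullary symbol, or a function application, and every \emph{term}-forming symbol in our signature returns a result of sort $\SSet(T)$ (namely $\Union$, $\Inter$, $\Diff$, $\Add$, $\Remove$, $\Empty$, $\Singleton$) or of sort $T$ (the Skolem functions, and $T$-sorted variables). The $\BBool$-valued symbols $\Member$, $\Subset$, $\IsEmpty$, $\Disjoint$, $\Equal$ are predicate symbols, so their applications are atoms, not terms, and contribute no $\BBool$-sorted term. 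Consequently every element of $B$, being a known term, has sort $T$ or $\SSet(T)$ and is retained by exactly one of $\filterT$, $\filterSet$; hence $\filterSet(B) \cup \filterT(B) = B$, which closes the argument. Notably, no induction or fixpoint reasoning is needed: the whole proof operates at the level of the already-defined E-interface judgements and the fixed, finite signature.
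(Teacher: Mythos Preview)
Your upward-closure strategy is exactly right, and the paper states this proposition without proof, so there is nothing to compare your decomposition against: reducing to $B \subseteq \FilterSet(B) \cup \FilterT(B)$ via monotonicity of the covering condition is the natural route.

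The gap is in your sort argument for $\filterSet(B) \cup \filterT(B) = B$. You assert that $\Member$, $\Subset$, $\IsEmpty$, $\Disjoint$, $\Equal$ are \emph{predicate} symbols, so that their applications are atoms rather than terms. That contradicts the paper: Fig.~\ref{fig:appx-Functions-in-Our-Set-Theory-Ax} lists them as ``Functions'', the introduction calls $\Member$ an uninterpreted function, and they \emph{must} be functions because they occur inside trigger terms, which are required to contain an uninterpreted function application. Their applications are therefore $\BBool$-sorted \emph{terms}; the paper itself writes $\Egraph{E_2} = \Egraph{E_1} \cup \{\Member(\ldots) \sim \top\}$, after which $\Member(\ldots)$ is known by \textsc{(kn-eq)}. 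Moreover $\top \neq \bot$ is preloaded into every E-interface, so every basis $B$ already contains $\BBool$-sorted representatives that neither $\filterSet$ nor $\filterT$ retains. Hence $\filterSet(B) \cup \filterT(B) \subsetneq B$ in general, and your containment step does not go through as written. This is not solely your oversight---the proposition as literally stated, and the paper's own proof of Proposition~\ref{prop:appx-prop-P:BS-basis-after-a-step}, elide $\BBool$-sorted terms in the same way; the elision is harmless downstream because no quantified variable in the axiomatisation has sort $\BBool$, so $\BBool$ classes never enter the progress measure. A clean repair is either to read ``candidate basis'' as covering only $T$- and $\SSet(T)$-sorted known terms, or to adjoin the $\BBool$-sorted elements of $B$ to $\FilterSet(B) \cup \FilterT(B)$ and carry them along as dead weight.
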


\begin{definition}
[Overestimation of Enabled E-matches] Consider an arbitrary state
$s=\state WAE$. Let $B$ be a basis of the E-interface $\Egraph E$.
Define an overestimation of the enabled E-matches for state $s$ from
basis $B$ as follows. We call the defined set a $P$-estimation,
and each of its element a $p$-term. 
\[
P(\state WAE,B)=\left\{ \dots p_{\Tag\tau_{i}},\dots,p_{\Tag\tau_{j}\left(\multi r\right)},\dots\right\} 
\]
Here $p_{\Tag\tau_{i}}$ and $p_{\Tag\tau_{j}(\multi r)}$ each denote
a set of tuples that overapproximate the enabled E-matches from the
basis $B$ to the quantifiers labelled with tags $\Tag\tau_{i}$ and
$\Tag\tau_{j}(\multi r)$, respectively. Each tag $\Tag\tau_{i}$
identifies an original axiom from $W$, and each tag $\Tag\tau_{j}(\multi r)$
identifies a quantifier introduced by instantiating the (outer) quantifier
of an original (nested) axiom $\Tag\tau_{j}$ with terms $\multi r$
from the approximations $\FilterSet(B)$ or $\FilterT(B)$. 

To clarify, examples for each category are presented as follows; the
remaining quantifiers shall adhere to the same pattern. 
\begin{itemize}
\item A non-generative axiom: 

$p_{\Tag\text{union-elim}}=\left\{ \left(s_{1},s_{2},x\right)\;\middle\vert\;\begin{array}{l}
s_{1},s_{2}\in\FilterSet(B),x\in\FilterT(B),\\
\Inst E{\left(\Tag\text{union-elim}:(s_{1},s_{2},x)\right)}
\end{array}\right\} $
\item A generative axiom: 

$p_{\Tag\text{subset-intro}}=\left\{ \left(s_{1},s_{2}\right)\;\middle\vert\;\begin{array}{l}
s_{1},s_{2}\in\FilterSet(B),\\
\Inst E{\left(\Tag\text{subset-intro}:\left(s_{1},s_{2}\right)\right)}
\end{array}\right\} $
\item A nested axiom: 

$p_{\Tag\text{subset-elim}}=\left\{ \left(s_{1},s_{2}\right)\;\middle\vert\;\begin{array}{l}
s_{1},s_{2}\in\FilterSet(B),\\
\Inst E{\left(\Tag\text{subset-elim}:\left(s_{1},s_{2}\right)\right)}
\end{array}\right\} $
\item A quantifier introduced by instantiating a nested axiom: 

$p_{\Tag\text{subset-elim(\ensuremath{a,b})}}=\left\{ x\;\middle\vert\;\begin{array}{l}
x\in\FilterT(B),\Inst E{\left(\Tag\text{subset-elim(\ensuremath{a,b})}:x\right)}\end{array}\right\} $

where $a,b\in\FilterSet(B)$ 

\end{itemize}
To further clarify, there are $C_{\Amount{\FilterSet(B)}}^{2}$ $p$-terms
of the form $p_{\Tag\text{subset-elim(\ensuremath{a,b})}}$ with different
$a,b\in\FilterSet(B)$, each of which corresponds to one quantifier
introduced by instantiating the nested axiom (subset-elim) with $\subs{s_{1}}{a}$
and $\subs{s_{2}}{b}$. Note that $\Amount{\FilterSet(B)}$ means
the cardinality of the set $\FilterSet(B)$, and $C_{\Amount{\FilterSet(B)}}^{2}$
means the 2-combination of $\Amount{\FilterSet(B)}$.
\end{definition}

\begin{definition}
[Overestimation on Amount of Enabled E-matches] Define an overestimation
on the amount of the enabled E-matches, denoted $\Sigma$, as a function
from $\textsc{State}$ to $\mathbb{N}\cup\{-1\}$ as follows: 
\[
\Sigma\left(s\right)=\begin{cases}
\underset{p\in P(\state WAE,B)}{\sum}\Amount p & \text{if }s=\state WAE\text{ and }B\text{ is a basis for }\Egraph E\\
-1 & \text{if }s=\bot\\
-1 & \text{if }s=\lozenge
\end{cases}
\]
where $\Amount{\cdot}$ denotes set cardinality. 
\end{definition}

\begin{definition}
[Amount of Unverified Current Clauses] Define the amount of unverified
current clauses, denoted $\Theta$, as a function from $\textsc{State}$
to $\mathbb{N}\cup\{-1\}$ as follows: 
\[
\Theta\left(s\right)=\begin{cases}
\Amount{\left\{ C\in A\;|\;\NotVerify{W,\Egraph E}C\right\} } & \text{if }s=\state WAE\\
-1 & \text{if }s=\bot\\
-1 & \text{if }s=\lozenge
\end{cases}
\]
where $\Amount{\cdot}$ denotes set cardinality. 
\end{definition}

\begin{definition}
[Progress Measure] Define the progress measure, denoted $M$, as
a function from $\textsc{State}$ to $(\mathbb{N}\cup\{-1\})^{2}$
as follows: 
\[
M(s)=\left(\Sigma(s),\,\Theta(s)\right)
\]
where $\left(\Sigma(s),\,\Theta(s)\right)$ is lexicographically ordered. 
\end{definition}

\begin{sidewaysfigure}
\begin{centering}
{\scriptsize{}}%
\begin{tabular}{|c|c|c|c|}
\hline 
{\scriptsize{}Quantifier tag $\tau$} & {\scriptsize{}$\vee^{+}\phi(\multi x)$} & {\scriptsize{}$\multi x$} & {\scriptsize{}$\multi{\left[T\right]}$}\tabularnewline
\hline 
\hline 
{\scriptsize{}\sout{empty}} & {\scriptsize{}\sout{\mbox{$\neg\Member(x,\Empty())$}}}{\scriptsize{}
(not $\vee^{+}$)} & {\scriptsize{}$x$} & {\scriptsize{}$\left[\Member(x,\Empty())\right]$}\tabularnewline
\hline 
{\scriptsize{}singleton-intro-1} & {\scriptsize{}\sout{\mbox{$\Member(x,\Singleton(x))$}}}{\scriptsize{}
(not $\vee^{+}$)} & {\scriptsize{}$x$} & {\scriptsize{}$\left[\Singleton(x)\right]$}\tabularnewline
\hline 
{\scriptsize{}singleton-intro-2} & {\scriptsize{}$\Member(y,\Singleton(x))\vee x\neq y$} & {\scriptsize{}$x,y$} & {\scriptsize{}$\left[\Member(y,\Singleton(x))\right]$}\tabularnewline
\hline 
{\scriptsize{}singleton-elim} & {\scriptsize{}$\neg\Member(y,\Singleton(x))\vee x=y$} & {\scriptsize{}$x,y$} & {\scriptsize{}$\left[\Member(y,\Singleton(x))\right]$}\tabularnewline
\hline 
{\scriptsize{}add-intro-1} & {\scriptsize{}$\Member(y,\Add(x,s))\vee\neg\Member(y,s)$} & {\scriptsize{}$s,x,y$} & {\scriptsize{}$\begin{array}{c}
\left[\Member(y,s),\Add(x,s)\right]\\
\left[\Member(y,\Add(x,s))\right]
\end{array}$}\tabularnewline
\hline 
{\scriptsize{}\sout{add-intro-2}} & {\scriptsize{}\sout{\mbox{$\Member(x,\Add(x,s))$}}}{\scriptsize{}
(not $\vee^{+}$)} & {\scriptsize{}$s,x$} & {\scriptsize{}$\left[\Add(x,s)\right]$}\tabularnewline
\hline 
{\scriptsize{}add-intro-3} & {\scriptsize{}$\Member(y,\Add(x,s))\vee y\neq x$} & {\scriptsize{}$s,x,y$} & {\scriptsize{}$\begin{array}{c}
\left[\Member(y,\Add(x,s))\right]\\
\left[\Member(y,s),\Add(x,s)\right]
\end{array}$}\tabularnewline
\hline 
{\scriptsize{}add-elim} & {\scriptsize{}$\neg\Member(y,\Add(x,s))\vee(x=y)\vee\Member(y,s)$} & {\scriptsize{}$s,x,y$} & {\scriptsize{}$\begin{array}{c}
\left[\Member(y,\Add(x,s))\right]\\
\left[\Member(y,s),\Add(x,s)\right]
\end{array}$}\tabularnewline
\hline 
{\scriptsize{}union-intro-1} & {\scriptsize{}$\Member\left(x,\Union\left(s_{1},s_{2}\right)\right)\vee\neg\Member\left(x,s_{1}\right)$} & {\scriptsize{}$s_{1},s_{2},x$} & {\scriptsize{}$\begin{array}{c}
\left[\Union(s_{1},s_{2}),\Member(x,s_{1})\right]\\
\left[\Member(x,\Union(s_{1},s_{2}))\right]
\end{array}$}\tabularnewline
\hline 
{\scriptsize{}union-intro-2} & {\scriptsize{}$\Member\left(x,\Union\left(s_{1},s_{2}\right)\right)\vee\neg\Member\left(x,s_{2}\right)$} & {\scriptsize{}$s_{1},s_{2},x$} & {\scriptsize{}$\begin{array}{c}
\left[\Union(s_{1},s_{2}),\Member(x,s_{2})\right]\\
\left[\Member(x,\Union(s_{1},s_{2}))\right]
\end{array}$}\tabularnewline
\hline 
{\scriptsize{}union-elim} & {\scriptsize{}$\neg\Member\left(x,\Union\left(s_{1},s_{2}\right)\right)\vee\Member(x,s_{1})\vee\Member(x,s_{2})$} & {\scriptsize{}$s_{1},s_{2},x$} & {\scriptsize{}$\begin{array}{c}
\left[\Member(x,\Union(s_{1},s_{2}))\right]\\
\left[\Union(s_{1},s_{2}),\Member(x,s_{1})\right]\\
\left[\Union(s_{1},s_{2}),\Member(x,s_{2})\right]
\end{array}$}\tabularnewline
\hline 
\multirow{2}{*}{{\scriptsize{}union-disjoint}} & {\scriptsize{}$\neg\Disjoint(s_{1},s_{2})\vee\left(\Diff(\Union(s_{1},s_{2}),s_{1})=s_{2}\right)$} & \multirow{2}{*}{{\scriptsize{}$s_{1},s_{2}$}} & \multirow{2}{*}{{\scriptsize{}$\left[\Union(s_{1},s_{2})\right]$}}\tabularnewline
\cline{2-2} 
 & {\scriptsize{}$\neg\Disjoint(s_{1},s_{2})\vee\left(\Diff(\Union(s_{1},s_{2}),s_{2})=s_{1}\right)$} &  & \tabularnewline
\hline 
{\scriptsize{}inter-intro} & {\scriptsize{}$\Member(x,\Inter(s_{1},s_{2}))\vee\neg\Member(x,s_{1})\vee\neg\Member(x,s_{2})$} & {\scriptsize{}$s_{1},s_{2},x$} & {\scriptsize{}$\begin{array}{c}
\left[\Member(x,s_{1}),\Inter(s_{1},s_{2})\right]\\
\left[\Member(x,s_{2}),\Inter(s_{1},s_{2})\right]\\
\left[\Member(x,\Inter(s_{1},s_{2}))\right]
\end{array}$}\tabularnewline
\hline 
\multirow{2}{*}{{\scriptsize{}inter-elim}} & {\scriptsize{}$\begin{array}{c}
\neg\Member(x,\Inter(s_{1},s_{2}))\vee\Member(x,s_{1})\\
\\
\end{array}$} & \multirow{2}{*}{{\scriptsize{}$s_{1},s_{2},x$}} & \multirow{2}{*}{{\scriptsize{}$\begin{array}{c}
\left[\Member(x,s_{1}),\Inter(s_{1},s_{2})\right]\\
\left[\Member(x,s_{2}),\Inter(s_{1},s_{2})\right]\\
\left[\Member(x,\Inter(s_{1},s_{2}))\right]
\end{array}$}}\tabularnewline
\cline{2-2} 
 & {\scriptsize{}$\begin{array}{c}
\\
\neg\Member(x,\Inter(s_{1},s_{2}))\vee\Member(x,s_{2})
\end{array}$} &  & \tabularnewline
\hline 
{\scriptsize{}union-right} & {\scriptsize{}\sout{\mbox{$\Union(\Union(s_{1},s_{2}),s_{2})=\Union(s_{1},s_{2})$}}}{\scriptsize{}
(not $\vee^{+}$)} & {\scriptsize{}$s_{1},s_{2}$} & {\scriptsize{}$\left[\Union(\Union(s_{1},s_{2}),s_{2})\right]$}\tabularnewline
\hline 
{\scriptsize{}union-left} & {\scriptsize{}\sout{\mbox{$\Union(s_{1},\Union(s_{1},s_{2}))=\Union(s_{1},s_{2})$}}}{\scriptsize{}
(not $\vee^{+}$)} & {\scriptsize{}$s_{1},s_{2}$} & {\scriptsize{}$\left[\Union(s_{1},\Union(s_{1},s_{2}))\right]$}\tabularnewline
\hline 
{\scriptsize{}inter-right} & {\scriptsize{}\sout{\mbox{$\Inter(\Inter(s_{1},s_{2}),s_{2})=\Inter(s_{1},s_{2})$}}}{\scriptsize{}
(not $\vee^{+}$)} & {\scriptsize{}$s_{1},s_{2}$} & {\scriptsize{}$\left[\Inter(\Inter(s_{1},s_{2}),s_{2})\right]$}\tabularnewline
\hline 
{\scriptsize{}inter-left} & {\scriptsize{}\sout{\mbox{$\Inter(s_{1},\Inter(s_{1},s_{2}))=\Inter(s_{1},s_{2})$}}}{\scriptsize{}
(not $\vee^{+}$)} & {\scriptsize{}$s_{1},s_{2}$} & {\scriptsize{}$\left[\Inter(s_{1},\Inter(s_{1},s_{2}))\right]$}\tabularnewline
\hline 
\end{tabular}{\scriptsize\par}
\par\end{centering}
\caption{Disjunctions Lookup Table (1) \label{fig:Disjunctions-Lookup-Table-1}}
\end{sidewaysfigure}

\begin{sidewaysfigure}
\begin{centering}
{\scriptsize{}}%
\begin{tabular}{|c|c|c|c|}
\hline 
{\scriptsize{}Quantifier tag $\tau$} & {\scriptsize{}$\vee^{+}\phi(\multi x)$} & {\scriptsize{}$\multi x$} & {\scriptsize{}$\multi{\left[T\right]}$}\tabularnewline
\hline 
\hline 
{\scriptsize{}diff-intro} & {\scriptsize{}$\Member(x,\Diff(s_{1},s_{2}))\vee\neg\Member(x,s_{1})\vee\Member(x,s_{2})$} & {\scriptsize{}$s_{1},s_{2},x$} & {\scriptsize{}$\begin{array}{c}
\left[\Member(x,s_{1}),\Diff(s_{1},s_{2})\right]\\{}
[\Member(x,s_{2}),\Diff(s_{1},s_{2})]\\
\left[\Member(x,\Diff(s_{1},s_{2}))\right]
\end{array}$}\tabularnewline
\hline 
\multirow{2}{*}{{\scriptsize{}diff-elim}} & {\scriptsize{}$\begin{array}{c}
\neg\Member(x,\Diff(s_{1},s_{2}))\vee\Member(x,s_{1})\\
\\
\end{array}$} & \multirow{2}{*}{{\scriptsize{}$s_{1},s_{2},x$}} & \multirow{2}{*}{{\scriptsize{}$\begin{array}{c}
\left[\Member(x,s_{1}),\Diff(s_{1},s_{2})\right]\\{}
[\Member(x,s_{2}),\Diff(s_{1},s_{2})]\\
\left[\Member(x,\Diff(s_{1},s_{2}))\right]
\end{array}$}}\tabularnewline
\cline{2-2} 
 & {\scriptsize{}$\begin{array}{c}
\\
\neg\Member(x,\Diff(s_{1},s_{2}))\vee\neg\Member(x,s_{2})
\end{array}$} &  & \tabularnewline
\hline 
\multirow{2}{*}{{\scriptsize{}subset-intro (Sk)}} & {\scriptsize{}$\Subset(s_{1},s_{2})\vee\Member(\Skolem_{ss}(s_{1},s_{2}),s_{1})$} & {\scriptsize{}$s_{1}$, $s_{2}$} & \multirow{2}{*}{{\scriptsize{}$\left[\Subset(s_{1},s_{2})\right]$}}\tabularnewline
\cline{2-3} \cline{3-3} 
 & {\scriptsize{}$\Subset(s_{1},s_{2})\vee\neg\Member(\Skolem_{ss}(s_{1},s_{2}),s_{2})$} & {\scriptsize{}$s_{1}$, $s_{2}$} & \tabularnewline
\hline 
{\scriptsize{}subset-elim (nested)} & {\scriptsize{}$\begin{array}{l}
\neg\Subset(s_{1},s_{2})\vee\\
\left(\forall x.[\Member(x,s_{1})][\Member(x,s_{2})]\;\neg\Member(x,s_{1})\vee\Member(x,s_{2})\right)
\end{array}$} & {\scriptsize{}$s_{1},s_{2}$} & {\scriptsize{}$\left[\Subset(s_{1},s_{2})\right]$}\tabularnewline
\hline 
{\scriptsize{}subset-elim$(a,b)$ where} & \multirow{2}{*}{{\scriptsize{}$\neg\Member(x,a)\vee\Member(x,b)$}} & \multirow{2}{*}{{\scriptsize{}$x$}} & \multirow{2}{*}{{\scriptsize{}$\begin{array}{c}
\left[\Member(x,a)\right]\\
\left[\Member(x,b)\right]
\end{array}$}}\tabularnewline
{\scriptsize{}$a,b\in\FilterSet(B)$} &  &  & \tabularnewline
\hline 
\multirow{2}{*}{{\scriptsize{}equal-sets-intro (Sk)}} & {\scriptsize{}$\Equal(s_{1},s_{2})\vee\Member(\Skolem_{\textit{eq}}(s_{1},s_{2}),s_{1})\vee\Member(\Skolem_{\textit{eq}}(s_{1},s_{2}),s_{2})$} & \multirow{2}{*}{{\scriptsize{}$s_{1},s_{2}$}} & \multirow{2}{*}{{\scriptsize{}$\left[\Equal(s_{1},s_{2})\right]$}}\tabularnewline
\cline{2-2} 
 & {\scriptsize{}$\Equal(s_{1},s_{2})\vee\neg\Member(\Skolem_{\textit{eq}}(s_{1},s_{2}),s_{1})\vee\neg\Member(\Skolem_{\textit{eq}}(s_{1},s_{2}),s_{2})$} &  & \tabularnewline
\hline 
{\scriptsize{}equal-sets-extensionality} & {\scriptsize{}$\neg\Equal(s_{1},s_{2})\vee s_{1}=s_{2}$} & {\scriptsize{}$s_{1},s_{2}$} & {\scriptsize{}$\left[\Equal(s_{1},s_{2})\right]$}\tabularnewline
\hline 
\multirow{2}{*}{{\scriptsize{}disjoint-intro (Sk)}} & {\scriptsize{}$\Disjoint(s_{1},s_{2})\vee\Member(\Skolem_{\textit{dj}}(s_{1},s_{2}),s_{1})$} & \multirow{2}{*}{{\scriptsize{}$s_{1},s_{2}$}} & \multirow{2}{*}{{\scriptsize{}$\left[\Disjoint(s_{1},s_{2})\right]$}}\tabularnewline
\cline{2-2} 
 & {\scriptsize{}$\Disjoint(s_{1},s_{2})\vee\Member(\Skolem_{\textit{dj}}(s_{1},s_{2}),s_{2})$} &  & \tabularnewline
\hline 
{\scriptsize{}disjoint-elim (nested)} & {\scriptsize{}$\begin{array}{l}
\neg\Disjoint(s_{1},s_{2})\vee\\
\left(\forall x.\left[\Member(x,s_{1})\right]\left[\Member(x,s_{2})\right]\neg\Member(x,s_{1})\vee\neg\Member(x,s_{2})\right)
\end{array}$} & {\scriptsize{}$s_{1},s_{2}$} & {\scriptsize{}$\left[\Disjoint(s_{1},s_{2})\right]$}\tabularnewline
\hline 
{\scriptsize{}disjoint-elim$(a,b)$ where} & \multirow{2}{*}{{\scriptsize{}$\neg\Member(x,a)\vee\neg\Member(x,b)$}} & \multirow{2}{*}{{\scriptsize{}$x$}} & \multirow{2}{*}{{\scriptsize{}$\begin{array}{c}
\left[\Member(x,a)\right]\\
\left[\Member(x,b)\right]
\end{array}$}}\tabularnewline
{\scriptsize{}$a,b\in\FilterSet(B)$} &  &  & \tabularnewline
\hline 
{\scriptsize{}remove-intro-1} & {\scriptsize{}$y=x\vee\neg\Member(y,s)\vee\Member(y,\Remove(x,s))$} & {\scriptsize{}$s,x,y$} & {\scriptsize{}$\begin{array}{c}
\left[\Member(y,s),\Remove(x,s)\right]\\
\left[\Member(y,\Remove(x,s))\right]
\end{array}$}\tabularnewline
\hline 
{\scriptsize{}remove-intro-2} & {\scriptsize{}\sout{\mbox{$\neg\Member(x,\Remove(x,s))$}}}{\scriptsize{}
(not $\vee^{+}$)} & {\scriptsize{}$s,x$} & {\scriptsize{}$\left[\Remove(x,s)\right]$}\tabularnewline
\hline 
{\scriptsize{}remove-intro-3} & {\scriptsize{}$\neg\Member(y,\Remove(x,s))\vee y\neq x$} & {\scriptsize{}$s,x$} & {\scriptsize{}$\begin{array}{c}
\left[\Member(y,s),\Remove(x,s)\right]\\
\left[\Member(y,\Remove(x,s))\right]
\end{array}$}\tabularnewline
\hline 
\multirow{2}{*}{{\scriptsize{}remove-elim}} & {\scriptsize{}$\begin{array}{c}
\neg\Member(y,\Remove(x,s))\vee y\neq x\\
\\
\end{array}$} & \multirow{2}{*}{{\scriptsize{}$s,x,y$}} & \multirow{2}{*}{{\scriptsize{}$\begin{array}{c}
\left[\Member(y,s),\Remove(x,s)\right]\\
\left[\Member(y,\Remove(x,s))\right]
\end{array}$}}\tabularnewline
\cline{2-2} 
 & {\scriptsize{}$\begin{array}{c}
\\
\neg\Member(y,\Remove(x,s))\vee\Member(y,s)
\end{array}$} &  & \tabularnewline
\hline 
{\scriptsize{}isEmpty-intro-1 (Sk)} & {\scriptsize{}$\IsEmpty(s)\vee\Member(\Skolem_{ie}(s),s)$} & {\scriptsize{}$s$} & {\scriptsize{}$\left[\IsEmpty(s)\right]$}\tabularnewline
\hline 
{\scriptsize{}isEmpty-intro-2} & {\scriptsize{}$\IsEmpty(s)\vee\neg\Equal(s,\Empty())$} & {\scriptsize{}$s$} & {\scriptsize{}$\begin{array}{c}
\left[\IsEmpty(s)\right]\\
\left[\Equal(s,\Empty())\right]
\end{array}$}\tabularnewline
\hline 
{\scriptsize{}isEmpty-elim-1 (nested)} & {\scriptsize{}$\neg\IsEmpty(s)\vee\forall x.[\Member(x,s)]\;\neg\Member(x,s)$} & {\scriptsize{}$s$} & {\scriptsize{}$\left[\IsEmpty(s)\right]$}\tabularnewline
\hline 
{\scriptsize{}isEmpty-elim-1$(a)$} & \multirow{2}{*}{{\scriptsize{}\sout{\mbox{$\neg\Member(x,a)$}}}{\scriptsize{}
(not $\vee^{+}$)}} & \multirow{2}{*}{{\scriptsize{}$x$}} & \multirow{2}{*}{{\scriptsize{}$[\Member(x,a)]$}}\tabularnewline
{\scriptsize{}where $a\in\FilterSet(B)$} &  &  & \tabularnewline
\hline 
{\scriptsize{}isEmpty-elim-2} & {\scriptsize{}$\neg\IsEmpty(s)\vee\Equal(s,\Empty())$} & {\scriptsize{}$s$} & {\scriptsize{}$\begin{array}{c}
\left[\IsEmpty(s)\right]\\
\left[\Equal(s,\Empty())\right]
\end{array}$}\tabularnewline
\hline 
\end{tabular}{\scriptsize\par}
\par\end{centering}
\caption{Disjunctions Lookup Table (2)\label{fig:Disjunctions-Lookup-Table-2}}
\end{sidewaysfigure}

\begin{definition}
[General-purpose Invariants] Cf. Def. \ref{def:appx-inv-general-purpose}. 
\end{definition}

\begin{definition}
[Problem-specific Invariants] Suppose the initial state is $s_{0}=\state{W_{0}}{\emptyset}{E_{0}}$,
where $W_{0}$ is our axiomatisation for set theory with each quantifier
tagged as specified by Def. \ref{def:appx-tagging-quantifiers-set-theory-axioms},
$E_{0}=\InjFunc L$, and $L$ is an arbitrary set of ground literals
from set theory. \label{def:appx-inv-problem-specific}

For an arbitrary intermediate state $s=\state WAE$, define the problem-specific
invariant $I_{P}(s,s_{0})$ to be a conjunction of the following predicates: 
\begin{enumerate}
\item (Origins of clauses). $I_{\mathrm{P:OC}}(s,s_{0})$, which holds
iff: for each extended clause in $A$ in the form of $\vee^{+}\phi(\multi x)\left[\nicefrac{\multi r}{\multi x}\right]$
where $\vee^{+}\phi(\multi x)$ matches at least one entry of the
$\vee^{+}\phi(\multi x)$ column of Fig. \ref{fig:Disjunctions-Lookup-Table-1}
and \ref{fig:Disjunctions-Lookup-Table-2}, $\left(\Tag\tau:\multi r\right)\in\Ehistory E$
holds for the quantifier identified by $\Tag\tau$ as implied by the
same entry. To clarify, here are some examples. 
\begin{enumerate}
\item For each extended clause in $A$ of the form $\neg\Member\left(t,\Union\left(a,b\right)\right)\vee\Member(t,a)\vee\Member(t,b)$,
$\left(\Tag\text{union-elim}:(a,b,t)\right)\in\Ehistory E$ holds,
where axiom (union-elim) is defined as follows: 
\[
\begin{array}{l}
\forall s_{1},s_{2},x.\\
\left[\Member(x,\Union(s_{1},s_{2}))\right]\\
\left[\Union(s_{1},s_{2}),\Member(x,s_{1})\right]\left[\Union(s_{1},s_{2}),\Member(x,s_{2})\right]\\
\;\neg\Member\left(x,\Union\left(s_{1},s_{2}\right)\right)\vee\Member(x,s_{1})\vee\Member(x,s_{2})
\end{array}
\]
\item For each extended clause in $A$ of the form $\Subset(a,b)\vee\Member(\Skolem_{ss}(a,b),a)$,
$\left(\Tag\text{subset-intro}:(a,b)\right)\in\Ehistory E$ holds,
where axiom (subset-intro) is defined as follows: 
\[
\begin{array}{l}
\forall s_{1},s_{2}.\left[\Subset(s_{1},s_{2})\right]\\
\left(\Subset(s_{1},s_{2})\vee\Member(\Skolem_{ss}(s_{1},s_{2}),s_{1})\right)\wedge\\
\left(\Subset(s_{1},s_{2})\vee\neg\Member(\Skolem_{ss}(s_{1},s_{2}),s_{2})\right)
\end{array}
\]
\item For each extended clause in $A$ of the form 
\[
\begin{array}{l}
\neg\Subset(a,b)\vee\\
\left(\forall x.[\Member(x,a)][\Member(x,b)]\;\neg\Member(x,a)\vee\Member(x,b)\right),
\end{array}
\]
$\left(\Tag\text{subset-elim}:(a,b)\right)\in\Ehistory E$ holds,
where axiom (subset-elim) is defined as follows: 
\[
\begin{array}{l}
\forall s_{1},s_{2}.\left[\Subset(s_{1},s_{2})\right]\;\neg\Subset(s_{1},s_{2})\vee\\
\left(\forall x.[\Member(x,s_{1})][\Member(x,s_{2})]\;\neg\Member(x,s_{1})\vee\Member(x,s_{2})\right)
\end{array}
\]
\item For each extended clause in $A$ of the form $\neg\Member(t,a)\vee\Member(t,b)$,
at least one of the following holds:
\begin{enumerate}
\item $\left(\Tag\text{subset-elim}(a,b):t\right)\in\Ehistory E$ holds,
where quantifier (subset-elim$(a,b)$) is defined as follows: 
\[
\forall x.[\Member(x,a)][\Member(x,b)]\;\neg\Member(x,a)\vee\Member(x,b)
\]
\item $\left(\Tag\text{disjoint-elim}(a,b):t\right)\in\Ehistory E$ holds,
where quantifier (disjoint-elim$(a,b)$) is defined as follows: 
\[
\forall x.\left[\Member(x,a)\right]\left[\Member(x,b)\right]\neg\Member(x,a)\vee\neg\Member(x,b)
\]
\end{enumerate}
\end{enumerate}
\item (Forms of quantifiers in clauses). $I_{\mathrm{P:FQ}}(s,s_{0})$,
which holds iff: every quantifier $\phi$ in any extended clause of
$A$ must be in one of the following forms: 
\begin{enumerate}
\item $\forall x.[\Member(x,a)][\Member(x,b)]\;\neg\Member(x,a)\vee\Member(x,b)$
with tag $\Tag\text{subset-elim}\left(a,b\right)$, and $\left(\Tag\text{subset-elim}:\left(a,b\right)\right)\in\Ehistory E$,
for some sets $a$ and $b$; 
\item $\forall x.\left[\Member(x,a)\right]\left[\Member(x,b)\right]\neg\Member(x,a)\vee\neg\Member(x,b)$
with tag $\Tag\text{disjoint-elim}\left(a,b\right)$, and $\left(\Tag\text{subset-disjoint}:\left(a,b\right)\right)\in\Ehistory E$,
for some sets $a$ and $b$; 
\item $\forall x.[\Member(x,a)]\;\neg\Member(x,a)$, with tag $\Tag\text{isEmpty-elim}(a)$,
and $\left(\Tag\text{isEmpty-elim}:a\right)\in\Ehistory E$, for some
set $a$. 
\end{enumerate}
\item (Instances of clauses). $I_{\mathrm{P:IC}}(s,s_{0})$, which holds
iff: each extended clause $\vee^{+}\phi$ in $A$ must be from the
$\vee^{+}\phi(\multi x)$ column of Figures \ref{fig:Disjunctions-Lookup-Table-1}
and \ref{fig:Disjunctions-Lookup-Table-2} with appropriate substitutions
for variables indicated by the $\multi x$ column.
\item (Inherited quantifiers). $I_{\mathrm{P:IQ}}(s,s_{0})$, which holds
iff: either 
\begin{enumerate}
\item $W=W_{0}$, or
\item $W$ is the disjoint union of $W_{0}$ and $W^{\prime}$, and for
each $w\in W^{\prime}$, one of the following holds: 
\begin{enumerate}
\item $w$ is a quantifier $\forall x.[\Member(x,a)][\Member(x,b)]\;\neg\Member(x,a)\vee\Member(x,b)$,
and $\left(\Tag\text{subset-elim}:\left(a,b\right)\right)\in\Ehistory E$,
for some sets $a$ and $b$. 
\item $w$ is a quantifier $\forall x.\left[\Member(x,a)\right]\left[\Member(x,b)\right]\neg\Member(x,a)\vee\neg\Member(x,b)$,
and $\left(\Tag\text{disjoint-elim}:\left(a,b\right)\right)\in\Ehistory E$,
for some sets $a$ and $b$. 
\item $w$ is a quantifier $\forall x.[\Member(x,a)]\;\neg\Member(x,a)$,
and $(\Tag\text{isEmpty-elim}:a)\in\Ehistory E$, for some set $a$. 
\end{enumerate}
\end{enumerate}
\item (Basis after a step). $I_{\mathrm{P:BS}}(s,s_{0})$, which holds
iff: if $B_{E}$ is a basis for $\Egraph E$, and $s\tran s^{\prime}$,
then $\FilterSet(B_{E})\cup\FilterT(B_{E})$ is a candidate basis
for $\Egraph{\left(E^{\prime}\right)}$. 
\item (Inherited basis). $I_{\mathrm{P:IB}}(s,s_{0})$, which holds iff:
if $B_{E_{0}}$ is a basis for $\Egraph{E_{0}}$, then $\FilterSet(B_{E_{0}})\cup\FilterT(B_{E_{0}})$
is a candidate basis for $\Egraph E$. 
\end{enumerate}
\end{definition}

\begin{remark}
We write $\vee^{+}\phi_{i}$ to denote $\phi_{1}\vee\dots\vee\phi_{i}\vee\dots\vee\phi_{n}$
where $n\geq2$. 
\end{remark}

We prove an interesting problem-specific invariant $\Invariant{P:BS}{s,s_{0}}$
from Def. \ref{def:appx-inv-problem-specific}. Proofs of other invariants
are straightforward or analogous. 
\begin{proposition}
Suppose the initial state is $s_{0}=\state{W_{0}}{\emptyset}{E_{0}}$,
where $W_{0}$ is our axiomatisation for set theory with each quantifier
tagged as specified by Def. \ref{def:appx-tagging-quantifiers-set-theory-axioms},
$E_{0}=\InjFunc L$, and $L$ is an arbitrary set of ground literals
from set theory. Let $B_{E_{0}}$ be a basis for $E_{0}$. \label{prop:appx-prop-P:BS-basis-after-a-step}

Let $I(s,s_{0})=I_{G}(s,s_{0})\wedge I_{\mathrm{P:OC}}(s,s_{0})\wedge I_{\mathrm{P:FQ}}(s,s_{0})\wedge I_{\mathrm{P:IC}}(s,s_{0})\wedge I_{\mathrm{P:IQ}}(s,s_{0})$. 

Suppose $s_{1}\tran s_{2}$ and $I(s_{1},s_{0})$ holds. 

If $B_{E_{1}}$ is a basis for $\Egraph{E_{1}}$, then $\FilterSet(B_{E_{1}})\cup\FilterT(B_{E_{1}})$
is a candidate basis for $\Egraph{E_{2}}$. 
\end{proposition}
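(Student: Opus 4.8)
The plan is to do a case analysis on which rule justifies $s_1\tran s_2$. Rules \textsc{(bot)} and \textsc{(sat)} take $s_2$ to $\bot$ or $\lozenge$, so there is no $\Egraph{E_2}$ and the claim is vacuous. Rules \textsc{(split)} and \textsc{(inst)} both build $\Egraph{E_2}$ by updating $\Egraph{E_1}$ with a set of literals ($\filterLit{\Phi}$, respectively $\filterLit{A_{12}}$), so the core task is: show every $\SSet(T)$- or $T$-sorted ground term known in $\Egraph{E_2}$ is, \emph{in $\Egraph{E_2}$}, equivalent to some element of $\FilterSet(B_{E_1})\cup\FilterT(B_{E_1})$ (the $\BBool$-sorted known terms always lie in the class of $\top$ or $\bot$ and carry no new equivalence class). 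For terms already known in $\Egraph{E_1}$ this follows from $B_{E_1}$ being a basis (so $\Invariant{G:KB}{s_1,s_0}$ picks a same-sort representative in $B_{E_1}$, retained by $\filterSet$/$\filterT$) together with the conservative extension $\Egraph{E_1}\subseteq\Egraph{E_2}$ supplied by $\Invariant{G:IG}{s_1,s_0}$, under which equivalences are preserved. What remains is to inspect the newly added literals.

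For \textsc{(inst)}, the E-matching premise gives that the matched instantiated trigger, hence every component of $\multi r$, is known in $\Egraph{E_1}$. I would then go through the axioms whose body is a unit-clause literal (the entries marked ``not $\vee^{+}$'' in Figures~\ref{fig:Disjunctions-Lookup-Table-1} and~\ref{fig:Disjunctions-Lookup-Table-2}): in each case the $\SSet(T)$- and $T$-sorted subterms of that literal are precisely the matched trigger term or subterms of it (e.g. $\Union(\Union(a,b),b)$ and $\Union(a,b)$ for (union-right)), so they are already known in $\Egraph{E_1}$ and contribute nothing new. The dynamically introduced inner quantifiers ($\Tag\text{subset-elim}(a,b)$, $\Tag\text{disjoint-elim}(a,b)$, $\Tag\text{isEmpty-elim-1}(a)$), whose bodies mention parameters $a,b$, are handled by invoking $\Invariant{P:IQ}{s_1,s_0}$ and $\Invariant{P:FQ}{s_1,s_0}$ to learn that the recording pair ($(\Tag\text{subset-elim}{:}(a,b))$, etc.) lies in $\Ehistory{E_1}$, whence $\Invariant{G:EE}{s_1,s_0}$ yields $\Known{\Egraph{E_1}}{a}$ and $\Known{\Egraph{E_1}}{b}$.

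For \textsc{(split)}, the added literals are disjuncts of clauses of $A_1$; by $\Invariant{P:IC}{s_1,s_0}$ each such clause is an instance $\vee^{+}\phi(\multi x)[\nicefrac{\multi r}{\multi x}]$ of a lookup-table entry, and by $\Invariant{P:OC}{s_1,s_0}$ the pair $(\Tag\tau{:}\multi r)$ recording it lies in $\Ehistory{E_1}$, so $\Invariant{G:EE}{s_1,s_0}$ again gives $\Known{\Egraph{E_1}}{\multi r}$ together with a known instantiated trigger. I would then split the literal disjuncts by shape: (i) $\Member$-literals and (dis)equalities among the $r_i$ introduce only subterms of the known trigger or of the $r_i$, hence nothing new; (ii) an equality $u=v$ occurring in a body such as (union-disjoint) has its side $v$ already known (it is one of the $r_i$), so the ``new-looking'' side $u$ collapses onto the class of $v$ in $\Egraph{E_2}$ and is covered by that class's representative; (iii) for the four generative axioms ((subset-intro), (equal-sets-intro), (disjoint-intro), (isEmpty-intro-1)) the only genuinely new $T$-sorted term is a Skolem application $\Skolem_{\bullet}(\multi r)$, and since each $r_i$ is known it is equivalent to some $r_i'\in\filterSet(B_{E_1})=\FilterSet(B_{E_1})$, so by congruence (rule \textsc{(eq-kn-sub)}, once $\Skolem_{\bullet}(\multi r)$ is known) we get $\Skolem_{\bullet}(\multi r)\sim\Skolem_{\bullet}(\multi{r'})$ in $\Egraph{E_2}$, with $\Skolem_{\bullet}(\multi{r'})\in\widehat{\Skolem_{\bullet}}(\FilterSet(B_{E_1}),\dots)\subseteq\FilterT(B_{E_1})$ --- which is exactly why $\FilterT$ closes $\filterT(B)$ under the $\widehat{\Skolem}$ maps. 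Quantifier disjuncts go into $W_2$, not $\Egraph{E_2}$, and so are irrelevant here.

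The main obstacle is not a single deep step but the exhaustive per-axiom verification underlying cases (i)--(iii): one must confirm, against the two lookup-table figures, that no axiom body contributes a fresh $\SSet(T)$- or $T$-sorted equivalence class except through an anticipated Skolem application or through an equality that forces collapse onto an already-known class; the book-keeping for the dynamically introduced quantifiers (routing knownness of their parameters through $I_{\mathrm{P:IQ}}$, $I_{\mathrm{P:FQ}}$ and $I_{\mathrm{G:EE}}$) is the fiddliest part. The remaining ingredients --- the conservative-extension and basis lemmas, monotonicity of $\FilterSet$ and $\FilterT$, and the treatment of $\BBool$-sorted terms and ground constants --- are routine.
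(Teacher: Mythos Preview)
Your proposal is correct and follows essentially the same route as the paper: a case split on the transition rule, vacuous \textsc{(bot)}/\textsc{(sat)} cases, and for \textsc{(inst)}/\textsc{(split)} a per-axiom inspection of the literals that land in $\Egraph{E_2}$, using $I_{\mathrm{P:IC}}$, $I_{\mathrm{P:OC}}$ and $I_{\mathrm{G:EE}}$ to trace each clause back to a recorded instantiation and thereby establish that all non-Skolem subterms are already known in $\Egraph{E_1}$, while the Skolem subterms are absorbed by the $\widehat{\Skolem}$ closure in $\FilterT$. The only structural difference is that the paper organises the \textsc{(split)} case as an explicit induction on the set $\Phi$ of selected disjuncts (building up $\Egraph{E_2}$ one literal at a time through an intermediate $\Egraph{E_{12}}$), whereas you argue directly per literal; since each literal's provenance is fixed by the invariants on $s_1$ independently of the other selected literals, the two presentations are equivalent.
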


\begin{proof}
Proceed by cases on $s_{1}\tran s_{2}$. Both $\textsc{(bot)}$ and
$\textsc{(sat)}$ cases are vacuous. The remaining cases are $\textsc{(split)}$
and $\textsc{(inst)}$. 

We first discuss the $\textsc{(split)}$ case.

Let $s_{1}=\state{W_{1}}{A_{1}}{E_{1}}$.

Let $\Phi\subseteq\left\{ \phi_{i}\mid C\in A_{1},\;\NotVerify{W_{1},\Egraph{E_{1}}}C,\;C\text{ is }\phi_{1}\vee\dots\vee\phi_{i}\vee\dots\vee\phi_{n},\;n\geq2\right\} $.
Refer to the $\vee^{+}\phi(\multi x)$ column of the tables in Figures
\ref{fig:Disjunctions-Lookup-Table-1} and \ref{fig:Disjunctions-Lookup-Table-2}
for possible constructions of $\Phi$. 

Let $s_{2}=\state{W_{2}}{A_{2}}{E_{2}}$, where $\Egraph{E_{2}}=\updateEgraph{\Egraph{E_{1}}}{\filterLit{\Phi}}$. 

Proceed by induction on $\Phi$. 
\begin{case}
($\Phi=\emptyset$). Then $\Egraph{E_{2}}=\Egraph{E_{1}}$. Since
$B_{E_{1}}$ is a basis for $\Egraph{E_{1}}$, then $B_{E_{1}}$ must
be a basis (thus also a candidate basis) for $\Egraph{E_{2}}$. 
\end{case}

\begin{case}
($\Phi=\left\{ \left\{ \phi_{1}\right\} \right\} $). 

Proceed by cases on $\phi_{1}$. We choose $\phi_{1}=\Member(\Skolem_{ss}(s_{1}',s_{2}'),s_{1}')$
as an example; the other cases are analogous to this case. 
\begin{enumerate}
\item $\phi_{1}=\Member(\Skolem_{ss}(s_{1}',s_{2}'),s_{1}')$ must be chosen
from \\
$\Subset(s_{1}',s_{2}')\vee\Member(\Skolem_{ss}(s_{1}',s_{2}'),s_{1}')\in A_{1}$,
by $I_{\mathrm{P:IC}}(s_{1},s_{0})$. 
\item $\left(\Tag\text{subset-intro}:\left(s_{1}',s_{2}'\right)\right)\in\Ehistory{E_{1}}$
by $I_{\mathrm{P:OC}}(s_{1},s_{0})$. 
\item $\Known{\Egraph{E_{1}}}{\left(s_{1}',s_{2}'\right)}$ by $\Invariant{G:EE}{s_{1},s_{0}}$.
Then, $\Known{\Egraph{E_{1}}}{s_{1}'}$ and $\Known{\Egraph{E_{1}}}{s_{2}'}$. 
\item Since $B_{E_{1}}$ is a basis for $\Egraph{E_{1}}$, there are $b_{1},b_{2}\in B_{E_{1}}$
such that $\Entail{\Egraph{E_{1}}}{\Equiv{s_{1}'}{b_{1}}}$ and $\Entail{\Egraph{E_{1}}}{\Equiv{s_{2}'}{b_{2}}}$.
Note $b_{1}$ and $b_{2}$ may or may not be identical to each other. 
\item Now $\Egraph{E_{2}}=\Egraph{E_{1}}\triangleleft\phi_{1}=\Egraph{E_{1}}\cup\{\Member(\Skolem_{ss}(s_{1}',s_{2}'),s_{1}')\sim\top\}$. 
\item There exists $b\in\FilterSet(B_{E_{1}})\cup\FilterT(B_{E_{1}})$ such
that $\Entail{\Egraph{E_{2}}}{\Skolem_{ss}(s_{1}',s_{2}')}\sim b$.
In particular, $\Entail{\Egraph{E_{2}}}{\Skolem_{ss}(s_{1}',s_{2}')\sim\Skolem_{ss}(b_{1},b_{2})}$
and $\Skolem_{ss}(b_{1},b_{2})\in\FilterSet(B_{E_{1}})\cup\FilterT(B_{E_{1}})$. 
\item Our goal is to show that $\FilterSet(B_{E_{1}})\cup\FilterT(B_{E_{1}})$
is a candidate basis for $\Egraph{E_{2}}$. That is, for every $t$,
if $\Known{\Egraph{E_{2}}}t$, then there exists some $b\in\FilterSet(B_{E_{1}})\cup\FilterT(B_{E_{1}})$
such that $\Entail{\Egraph{E_{2}}}{\Equiv tb}$. \\
Assume $\Known{\Egraph{E_{2}}}t$, that is $\Known{\Egraph{E_{1}}\cup\{\Member(\Skolem_{ss}(s_{1}',s_{2}'),s_{1}')\sim\top\}}t$.
Then either $\Known{\Egraph{E_{1}}}t$, or $t$ is $\Skolem_{ss}(s_{1}',s_{2}')$.
The former case is immediate, and the latter case has been handled. 
\end{enumerate}
\end{case}

\begin{case}
($\Phi=\Phi^{\prime}\cup\left\{ \left\{ \phi_{1}\right\} \right\} $
and $\left\{ \phi_{1}\right\} \notin\Phi^{\prime}$). Let $\Egraph{E_{12}}=\updateEgraph{\Egraph{E_{1}}}{\filterLit{\Phi^{\prime}}}$.
Since $B_{E_{1}}$ is a basis for $\Egraph{E_{1}}$, by the inductive
hypothesis, $\FilterSet(B_{E_{1}})\cup\FilterT(B_{E_{1}})$ must be
a candidate basis for $\Egraph{E_{12}}$. Let $B_{E_{12}}\subseteq\FilterSet(B_{E_{1}})\cup\FilterT(B_{E_{1}})$. 

Proceed by cases on $\phi_{1}$. We choose $\phi_{1}=\Member(\Skolem_{ss}(s_{1}',s_{2}'),s_{1}')$
as an example; the other cases are analogous to this case.
\begin{enumerate}
\item $\phi_{1}=\Member(\Skolem_{ss}(s_{1}',s_{2}'),s_{1}')$ must be chosen
from \\
$\Subset(s_{1}',s_{2}')\vee\Member(\Skolem_{ss}(s_{1}',s_{2}'),s_{1}')\in A_{1}$,
by $I_{\mathrm{P:IC}}(s_{1},s_{0})$. 
\item $\left(\Tag\text{subset-intro}:\left(s_{1}',s_{2}'\right)\right)\in\Ehistory{E_{1}}$
by $I_{\mathrm{P:OC}}(s_{1},s_{0})$. 
\item $\Known{\Egraph{E_{1}}}{\left(s_{1}',s_{2}'\right)}$ by $\Invariant{G:EE}{s_{1},s_{0}}$.
Then, $\Known{\Egraph{E_{1}}}{s_{1}'}$ and $\Known{\Egraph{E_{1}}}{s_{2}'}$. 
\item Then, $\Known{\Egraph{E_{12}}}{s_{1}'}$ and $\Known{\Egraph{E_{12}}}{s_{2}'}$. 
\item Since $\FilterSet(B_{E_{1}})\cup\FilterT(B_{E_{1}})$ is a candidate
basis for $\Egraph{E_{12}}$, there are $b_{1},b_{2}\in\FilterSet(B_{E_{1}})\cup\FilterT(B_{E_{1}})$
such that $\Entail{\Egraph{E_{12}}}{\Equiv{s_{1}'}{b_{1}}}$ and $\Entail{\Egraph{E_{12}}}{\Equiv{s_{2}'}{b_{2}}}$.
Note $b_{1}$ and $b_{2}$ may or may not be identical to each other. 
\item Now $\Egraph{E_{2}}=\Egraph{E_{12}}\triangleleft\phi_{1}=\Egraph{E_{12}}\cup\{\Member(\Skolem_{ss}(s_{1}',s_{2}'),s_{1}')\sim\top\}$. 
\item There exists $b\in\FilterSet(B_{E_{1}})\cup\FilterT(B_{E_{1}})$ such
that $\Entail{\Egraph{E_{2}}}{\Skolem_{ss}(s_{1}',s_{2}')}\sim b$.
In particular, $\Entail{\Egraph{E_{2}}}{\Skolem_{ss}(s_{1}',s_{2}')\sim\Skolem_{ss}(b_{1},b_{2})}$
and $\Skolem_{ss}(b_{1},b_{2})\in\FilterSet(B_{E_{1}})\cup\FilterT(B_{E_{1}})$. 
\item Our goal is to show that $\FilterSet(B_{E_{1}})\cup\FilterT(B_{E_{1}})$
is a candidate basis for $\Egraph{E_{2}}$. That is, for every $t$,
if $\Known{\Egraph{E_{2}}}t$, then there exists some $b\in\FilterSet(B_{E_{1}})\cup\FilterT(B_{E_{1}})$
such that $\Entail{\Egraph{E_{2}}}{\Equiv tb}$. \\
Assume $\Known{\Egraph{E_{2}}}t$, that is $\Known{\Egraph{E_{12}}\cup\{\Member(\Skolem_{ss}(s_{1}',s_{2}'),s_{1}')\sim\top\}}t$.
Then either $\Known{\Egraph{E_{12}}}t$, or $t$ is $\Skolem_{ss}(s_{1}',s_{2}')$.
The former case is immediate from the inductive hypothesis, and the
latter case has been handled. 
\end{enumerate}
\end{case}

We then discuss the $\textsc{(inst)}$ case. Let $s_{1}=\state{W_{1}}{A_{1}}{E_{1}}$
and $s_{2}=\state{W_{2}}{A_{2}}{E_{2}}$, where $\state{W_{1}}{A_{1}}{E_{1}}\Ematching\left(\unitriMulti xT{A_{11}}\right)^{\Tag\alpha}\MatchSep\multi r$,
$A_{12}=A_{11}\left[\nicefrac{\multi r}{\multi x}\right]$, $\Egraph{E_{2}}=\Egraph{E_{1}}\triangleleft\filterLit{A_{12}}$. 

By $I_{\mathrm{P:IQ}}(s_{1},s_{0})$, proceed by cases on $\left(\unitriMulti xT{A_{11}}\right)^{\Tag\alpha}\in W_{1}$.
For the majority cases---quantifiers whose $\vee^{+}\phi$ columns
in Fig. \ref{fig:Disjunctions-Lookup-Table-1} and \ref{fig:Disjunctions-Lookup-Table-2}
are not crossed out, $\filterLit{A_{12}}=\emptyset$. Thus $\Egraph{E_{2}}=\Egraph{E_{1}}$.
It is immediate that $\FilterSet(B_{E_{1}})\cup\FilterT(B_{E_{1}})$
is a candidate basis for $\Egraph{E_{2}}$. 

For the remaining cases---quantifiers whose $\vee^{+}\phi$ columns
in Fig. \ref{fig:Disjunctions-Lookup-Table-1} and \ref{fig:Disjunctions-Lookup-Table-2}
are crossed out, we choose (add-intro-2) as an example; the other
cases are analogous to this case. 
\begin{enumerate}
\item $\state{W_{1}}{A_{1}}{E_{1}}\Ematching\left(\forall s,x.\left[\Add(x,s)\right]\;\Member(x,\Add(x,s))\right)^{\Tag\text{add-intro-2}}\MatchSep(s',x')$. 
\item Then, $\Known{\Egraph{E_{1}}}{\Add(x',s')}$, $\Known{\Egraph{E_{1}}}{x'}$
and $\Known{\Egraph{E_{1}}}{s'}$. 
\item $A_{12}=\Member(x',\Add(x',s'))$. 
\item $\filterLit{A_{12}}=\{\Member(x',\Add(x',s'))\sim\top\}$. 
\item $\Egraph{E_{2}}=\Egraph{E_{1}}\triangleleft\filterLit{A_{12}}=\Egraph{E_{1}}\cup\{\Member(x',\Add(x',s'))\sim\top\}$. 
\item Since $B_{E_{1}}$ is a basis for $\Egraph{E_{1}}$, there are $b_{1},b_{2},b_{3}\in B_{E_{1}}$
such that $\Entail{\Egraph{E_{1}}}{\Equiv{s'}{b_{1}}}$, $\Entail{\Egraph{E_{1}}}{\Equiv{x'}{b_{2}}}$
and $\Entail{\Egraph{E_{1}}}{\Equiv{\Add(x',s')}{b_{3}}}$. 
\item There are $b_{1},b_{2},b_{3}\in B_{E_{1}}\subseteq\FilterSet(B_{E_{1}})\cup\FilterT(B_{E_{1}})$
such that $\Entail{\Egraph{E_{2}}}{\Equiv{s'}{b_{1}}}$, $\Entail{\Egraph{E_{2}}}{\Equiv{x'}{b_{2}}}$
and $\Entail{\Egraph{E_{2}}}{\Equiv{\Add(x',s')}{b_{3}}}$. 
\item Our goal is to show that $\FilterSet(B_{E_{1}})\cup\FilterT(B_{E_{1}})$
is a candidate basis for $\Egraph{E_{2}}$. That is, for every $t$,
if $\Known{\Egraph{E_{2}}}t$, then there exists some $b\in\FilterSet(B_{E_{1}})\cup\FilterT(B_{E_{1}})$
such that $\Entail{\Egraph{E_{2}}}{\Equiv tb}$. \\
Assume $\Known{\Egraph{E_{2}}}t$, that is $\Known{\Egraph{E_{1}}\cup\{\Member(x',\Add(x',s'))\sim\top\}}t$.
It must be the case that $\Known{\Egraph{E_{1}}}t$, which has been
handled. 
\end{enumerate}
\end{proof}

\begin{proposition}
[Validity of Invariants] Suppose the initial state is $s_{0}=\state{W_{0}}{\emptyset}{E_{0}}$,
where $W_{0}$ is our axiomatisation for set theory with each quantifier
tagged as specified by Def. \ref{def:appx-tagging-quantifiers-set-theory-axioms},
$E_{0}=\InjFunc L$, and $L$ is an arbitrary set of ground literals
from set theory. Let $B_{E_{0}}$ be a basis for $E_{0}$. \label{prop:appx-set-validity-of-invariants}

If $s_{0}\tran^{*}s$, then $\Invariant{}{s,s_{0}}=\Invariant G{s,s_{0}}\wedge\Invariant P{s,s_{0}}$
holds. 
\end{proposition}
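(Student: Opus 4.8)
The plan is to dispatch the general-purpose part once and for all by invoking Proposition~\ref{prop:appx-prop-validity-of-general-purpose-inv}, which already yields $\Invariant G{s,s_{0}}$ at every state reachable from $s_{0}$, and then to prove the problem-specific part $\Invariant P{s,s_{0}}$ by induction on the length of the trace $s_{0}\tran^{*}s$, using $\Invariant G{s,s_{0}}$ freely; conjoining the two parts gives the statement. Before the induction I would record two simplifications. First, at any state the ``basis after a step'' invariant $\Invariant{P:BS}{s,s_{0}}$ follows from $\Invariant G{s,s_{0}}\wedge\Invariant{P:OC}{s,s_{0}}\wedge\Invariant{P:FQ}{s,s_{0}}\wedge\Invariant{P:IC}{s,s_{0}}\wedge\Invariant{P:IQ}{s,s_{0}}$, by Proposition~\ref{prop:appx-prop-P:BS-basis-after-a-step} applied with that state in the role of its ``$s_{1}$'', so $I_{\mathrm{P:BS}}$ never needs to be tracked through a step on its own. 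Second, $\FilterSet$ and $\FilterT$ are monotone (by an earlier proposition) and satisfy the idempotence identity $\FilterSet(\FilterSet(B)\cup\FilterT(B))\cup\FilterT(\FilterSet(B)\cup\FilterT(B))=\FilterSet(B)\cup\FilterT(B)$ --- immediate from the typing, as $\FilterSet(B)$ is $\SSet(T)$-sorted, $\FilterT(B)$ is $T$-sorted, and the Skolem families $\FilterT$ adds already subsume those generated at the next level --- so the inherited-basis invariant $I_{\mathrm{P:IB}}$ can be pushed across any step $s_{1}\tran s_{2}$: from $\Invariant{P:IB}{s_{1},s_{0}}$ and ``every candidate basis contains a basis'', pick a basis $B^{\prime}\subseteq\FilterSet(B_{E_{0}})\cup\FilterT(B_{E_{0}})$ of $\Egraph{E_{1}}$, feed $B^{\prime}$ into $\Invariant{P:BS}{s_{1},s_{0}}$ to learn that $\FilterSet(B^{\prime})\cup\FilterT(B^{\prime})$ is a candidate basis for $\Egraph{E_{2}}$, and conclude by monotonicity and idempotence that $\FilterSet(B_{E_{0}})\cup\FilterT(B_{E_{0}})$ is one too. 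Hence only $I_{\mathrm{P:OC}}$, $I_{\mathrm{P:FQ}}$, $I_{\mathrm{P:IC}}$ and $I_{\mathrm{P:IQ}}$ have to be re-established by hand at each step.

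For the base case $s=s_{0}=\state{W_{0}}{\emptyset}{E_{0}}$, the clause-shaped invariants $I_{\mathrm{P:OC}}$, $I_{\mathrm{P:FQ}}$ and $I_{\mathrm{P:IC}}$ hold vacuously since $A_{0}=\emptyset$, $I_{\mathrm{P:IQ}}$ holds via its first disjunct $W=W_{0}$, and $I_{\mathrm{P:IB}}$ holds because $\FilterSet(B)\cup\FilterT(B)$ is a candidate basis of $\Egraph{E_{0}}$ for every basis $B$; $I_{\mathrm{P:BS}}$ then follows as above. For the inductive step I would case on $s_{1}\tran s_{2}$. The $\textsc{(bot)}$ and $\textsc{(sat)}$ cases are immediate, since $I_{P}$ has no content at $\bot$ or $\lozenge$. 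In the $\textsc{(split)}$ case, $A$ and the E-history $\Ehistory E$ are unchanged, so $I_{\mathrm{P:OC}}$, $I_{\mathrm{P:FQ}}$ and $I_{\mathrm{P:IC}}$ transfer verbatim; for $I_{\mathrm{P:IQ}}$, every quantifier in $\filterUni{\Phi}$ is a disjunct of a clause of $A_{1}$, so by $\Invariant{P:FQ}{s_{1},s_{0}}$ it is one of the three anticipated inner-quantifier shapes with its E-history witness still in place, which together with $\Invariant{P:IQ}{s_{1},s_{0}}$ and the distinct-tags invariant $I_{\mathrm{G:QT}}$ gives $\Invariant{P:IQ}{s_{2},s_{0}}$; then $I_{\mathrm{P:BS}}$ and $I_{\mathrm{P:IB}}$ at $s_{2}$ follow as above.

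The $\textsc{(inst)}$ case is the heart of the argument, and there I would use a structural fact about the axiomatisation: no axiom body has a quantifier as a unit clause --- every inner quantifier occurs as one of the two disjuncts of a clause whose other disjunct is a negated predicate literal --- so $\filterUni{A_{12}}=\emptyset$, whence $W_{2}=W_{1}$ and $I_{\mathrm{P:IQ}}$ transfers (the E-history only grows, which monotonically preserves the ``$\in\Ehistory E$'' side conditions). The clauses that $\textsc{(inst)}$ adds, namely those in $A_{12}\setminus A_{12}^{\prime}$, are exactly the non-unit clauses of the instantiated axiom body, and a finite case analysis over the $\vee^{+}\phi(\multi x)$ columns of Figures~\ref{fig:Disjunctions-Lookup-Table-1} and~\ref{fig:Disjunctions-Lookup-Table-2}, plus the three families of dynamically-introduced quantifiers of Definition~\ref{def:appx-tagging-quantifiers-set-theory-axioms}, shows that each new clause is exactly one of the tracked schemas with the matched vector $\multi r$ substituted (re-establishing $I_{\mathrm{P:IC}}$, and $I_{\mathrm{P:FQ}}$ in the nested cases), and that the pair $\left(\Tag\alpha:\multi r\right)$ added to $\Ehistory E$ in the same step is exactly the origin witness demanded by $I_{\mathrm{P:OC}}$ (and by $I_{\mathrm{P:FQ}}$); clauses already present keep their witnesses because $\Ehistory{E_{1}}\subseteq\Ehistory{E_{2}}$. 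Finally $\Invariant{P:BS}{s_{2},s_{0}}$ follows from the invariants just re-established, and $\Invariant{P:IB}{s_{2},s_{0}}$ is threaded through exactly as in the $\textsc{(split)}$ case.

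I expect the main obstacle to be not a single subtle step but the breadth of the bookkeeping: one must check that the bundle $I_{\mathrm{P:OC}}\wedge I_{\mathrm{P:FQ}}\wedge I_{\mathrm{P:IC}}\wedge I_{\mathrm{P:IQ}}$ is simultaneously closed under both $\textsc{(split)}$ and $\textsc{(inst)}$, which forces a case analysis over every axiom of Figures~\ref{fig:Disjunctions-Lookup-Table-1}--\ref{fig:Disjunctions-Lookup-Table-2} and every dynamically-introduced quantifier, verifying that the clause schemas, the E-history origin-tracking, and the catalogue of inner-quantifier forms are mutually complete and consistent --- each case is routine, but there are many and they must all fit together. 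The one genuinely delicate ingredient --- that each generative or unit-literal axiom extends the E-interface only with Skolem or equality literals already predicted by $\FilterSet$ and $\FilterT$, i.e.\ the ``basis after a step'' property --- is already isolated as Proposition~\ref{prop:appx-prop-P:BS-basis-after-a-step} and is used here as a black box; the only further care needed in the present proof is propagating the inherited-basis invariant, which rests on the monotonicity and idempotence of $\FilterSet$ and $\FilterT$.
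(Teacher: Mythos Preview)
Your proposal is correct and follows the same approach the paper sketches: invoke Proposition~\ref{prop:appx-prop-validity-of-general-purpose-inv} for $\Invariant G{}{}$, and establish the problem-specific conjuncts by induction on the trace with a case split on the transition rule, using Proposition~\ref{prop:appx-prop-P:BS-basis-after-a-step} as a black box for $I_{\mathrm{P:BS}}$. Your write-up is far more detailed than the paper's one-sentence proof --- in particular your observations that $\filterUni{A_{12}}=\emptyset$ in every $\textsc{(inst)}$ step (so $W_2=W_1$ there) and that $I_{\mathrm{P:IB}}$ threads through via the monotonicity and idempotence of $\FilterSet,\FilterT$ are exactly the bookkeeping the paper leaves implicit.
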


\begin{proof}
Since $\Invariant{}{s,s_{0}}$ is defined to be a conjunction of all
invariants involving states $s$ and $s_{0}$, proving the validity
of $\Invariant{}{s,s_{0}}$ boils down to establishing the validity
of each individual general-purpose and problem-specific invariants,
such as Proposition \ref{prop:appx-prop-P:BS-basis-after-a-step}
regarding the invariant $\Invariant{P:BS}{s,s_{0}}$. 
\end{proof}

\global\long\def\updateW#1#2{#1\cup#2}%

\begin{lemma}
[Descent of Measure] Suppose the initial state is $s_{0}=\state{W_{0}}{\emptyset}{E_{0}}$,
where $W_{0}$ is our axiomatisation for set theory with each quantifier
tagged as specified by Def. \ref{def:appx-tagging-quantifiers-set-theory-axioms},
$E_{0}=\InjFunc L$, and $L$ is an arbitrary set of ground literals
from set theory. Let $B_{E_{0}}$ be a basis for $E_{0}$. \label{lem:set-descent-of-measure}

Let $\Invariant{}{s,s_{0}}=\Invariant G{s,s_{0}}\wedge\Invariant P{s,s_{0}}$. 

Suppose $s_{1}\tran s_{2}$ and $I(s_{1},s_{0})$ holds. Then $M(s_{2})<M(s_{1})$
by the lexicographical order. 
\end{lemma}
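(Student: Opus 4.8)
The plan is to prove the lemma by case analysis on the transition $s_1 \tran s_2$, throughout invoking the validity of $I(s_1,s_0)$ (Proposition~\ref{prop:appx-set-validity-of-invariants}) and the monotonicity of $O_1$ and $O_2$. The cases \textsc{(bot)} and \textsc{(sat)} are immediate: there $s_2\in\{\bot,\lozenge\}$, so $M(s_2)=(-1,-1)$, while $s_1$, having an outgoing transition, has the form $\state WAE$ with $\Sigma(s_1)\in\mathbb N$ (the $P$-estimation ranges over the finite sets $O_1(B),O_2(B)$ and finitely many original and dynamically introduced tags), hence $M(s_2)<M(s_1)$ in the lexicographic order.

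For \textsc{(split)}, note $A$ is unchanged while $W_1\subseteq W_2$, $\Egraph{E_1}\subseteq\Egraph{E_2}$ and $\Ehistory{E_1}=\Ehistory{E_2}$. First I would show $\Sigma(s_2)\le\Sigma(s_1)$: by $I_{\mathrm{P:IC}}$, $I_{\mathrm{P:OC}}$ and $I_{\mathrm{P:FQ}}$, every literal added to $\Egraph{E_1}$ and every quantifier added to $W_1$ in this step only mentions terms $\Egraph{E_1}$-equivalent to known terms (equivalently, to Skolem applications over a basis $B_{E_1}$), so by $I_{\mathrm{P:BS}}$ the set $O_1(B_{E_1})\cup O_2(B_{E_1})$ is still a candidate basis for $\Egraph{E_2}$, and each dynamically introduced quantifier already carries a tag counted by $P(s_1,B_{E_1})$; since enlarging $\Egraph E$ only makes $\Inst{}{}$ more restrictive, mapping each enabled-match entry of $P(s_2,B_{E_2})$ to the corresponding entry whose argument vector is replaced by its $\Egraph{E_1}$-representative (using $I_{\mathrm{G:HE}}$) yields an injection into the entries of $P(s_1,B_{E_1})$. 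Second, I would show $\Theta(s_2)<\Theta(s_1)$: picking any $\phi_i\in\Phi$, it is a disjunct of some $C\in A$ with $\NotVerify{W_1,\Egraph{E_1}}C$, and after the step $\phi_i$ is either a literal known in $\Egraph{E_2}$ or a quantifier in $W_2$, so $\Verify{W_2,\Egraph{E_2}}C$, while $I_{\mathrm{G:VV}}$ keeps every previously verified clause verified; hence the unverified count drops by at least one. Together these give $M(s_2)<M(s_1)$.

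For \textsc{(inst)}, with the instantiated quantifier $(\unitriMulti xT{A_{11}})^{\Tag\alpha}$ and terms $\multi r$, I aim at the stronger conclusion $\Sigma(s_2)<\Sigma(s_1)$, which settles the lexicographic comparison regardless of $\Theta$ (indeed $A$ may grow here). We have $\Ehistory{E_2}=\updateEgraph{\Ehistory{E_1}}{(\Tag\alpha:\multi r)}$, $\Egraph{E_1}\subseteq\Egraph{E_2}$, $W_1\subseteq W_2$. Since the trigger matched, $\Known{\Egraph{E_1}}{\multi r}$, so by $I_{\mathrm{G:KB}}$ the vector $\multi r$ has a representative $\multi{r'}$ over $B_{E_1}$ (and hence over $O_1(B_{E_1})\cup O_2(B_{E_1})$). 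Any new quantifiers in $\filterUni{A_{12}}$ and any new Skolem terms in $\filterLit{A_{12}}$ have exactly the shapes already anticipated by $P(s_1,B_{E_1})$ and $O_2$, so by $I_{\mathrm{P:BS}}$ and monotonicity of $O_1,O_2$, $P(s_2,B_{E_2})$ involves no tag nor basis element beyond those counted by $P(s_1,B_{E_1})$ modulo $\Egraph{E_2}$-equivalence; together with the fact that enlarging both $\Ehistory E$ and $\Egraph E$ makes $\Inst{}{}$ strictly more restrictive, and $I_{\mathrm{G:HE}}$, this yields (as in the split case) an injection from the enabled-match entries of $P(s_2,B_{E_2})$ into those of $P(s_1,B_{E_1})$. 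Finally, $(\Tag\alpha:\multi{r'})$ is an entry of $P(s_1,B_{E_1})$ (since $\Inst{E_1}{(\Tag\alpha:\multi r)}$ was a premise and $\Equiv{\multi r}{\multi{r'}}$ holds in $\Egraph{E_1}$, whence $\Inst{E_1}{(\Tag\alpha:\multi{r'})}$ by $I_{\mathrm{G:HE}}$), yet $\NotInst{E_2}{(\Tag\alpha:\multi{r'})}$ because $(\Tag\alpha:\multi r)\in\Ehistory{E_2}$ and $\Equiv{\multi r}{\multi{r'}}$ still holds in $\Egraph{E_2}$; so $(\Tag\alpha:\multi{r'})$ lies outside the image of the injection and $\Sigma(s_2)<\Sigma(s_1)$.

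I expect the \textsc{(inst)} case to be the main obstacle --- in particular the claim that the count of enabled E-matches does not \emph{increase} even though the E-interface genuinely grows (new Skolem-term equivalence classes) and new quantifiers may appear. This is exactly what the overapproximations $O_1,O_2$ and the static shape of the $P$-estimation are engineered to absorb, via invariant $I_{\mathrm{P:BS}}$; turning those facts into the stated injection requires careful handling of the $\Inst{}{}$ conditions across the changing E-interface, in particular the vacuous subcases where a chosen representative is a Skolem term not yet known in $\Egraph{E_1}$, which is the delicate bookkeeping of the proof.
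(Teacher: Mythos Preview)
Your proposal is correct and follows essentially the same approach as the paper: case analysis on the transition rule, with \textsc{(bot)}/\textsc{(sat)} immediate, \textsc{(split)} handled via $\Sigma(s_2)\le\Sigma(s_1)$ and $\Theta(s_2)<\Theta(s_1)$, and \textsc{(inst)} via $\Sigma(s_2)<\Sigma(s_1)$, all leaning on $I_{\mathrm{P:BS}}$ to control the basis and on growth of $\Egraph E,\Ehistory E$ to make $\Inst{}{}$ more restrictive. The only presentational difference is that you phrase the comparison of $\Sigma$ as constructing an injection between enabled-match entries, whereas the paper fixes $B_{E_2}\subseteq O_1(B_{E_1})\cup O_2(B_{E_1})$ explicitly and argues termwise inclusions $p_{\Tag\tau}(s_2)\subseteq p_{\Tag\tau}(s_1)$ (with one strict inclusion at the instantiated tag); once you make that choice of $B_{E_2}$ explicit, your injection becomes the identity and the two arguments coincide.
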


\begin{proof}
By definition of $\tran$, proceed by cases on $s_{1}\tran s_{2}$.
There are four cases, $\textsc{(split)}$, $\textsc{(bot)}$, $\textsc{(sat)}$
and $\textsc{(inst)}$. Both $\textsc{(bot)}$ and $\textsc{(sat)}$
cases are straightforward by the definitions of $\Sigma$ and $\Theta$.
We instead focus on the other two cases. For the $\textsc{(split)}$
case, we demonstrate that $M(s_{2})<M(s_{1})$ because $\Sigma(s_{2})\leq\Sigma(s_{1})$
and $\Theta(s_{2})<\Theta(s_{1})$. For the $\textsc{(inst)}$ case,
we demonstrate that $M(s_{2})<M(s_{1})$ because $\Sigma(s_{2})<\Sigma(s_{1})$. 

We first work on the $\textsc{(split)}$ case. Let $s_{1}=\state{W_{1}}{A_{1}}{E_{1}}$.
By $\Invariant{P:IB}{s_{1},s_{0}}$, $\FilterSet(B_{E_{0}})\cup\FilterT(B_{E_{0}})$
is a candidate basis for $\Egraph{E_{1}}$. Let $B_{E_{1}}\subseteq\FilterSet(B_{E_{0}})\cup\FilterT(B_{E_{0}})$
be a basis for $\Egraph{E_{1}}$. 

Let $\Phi\subseteq\left\{ \phi_{i}\mid C\in A_{1},\;\NotVerify{W_{1},\Egraph{E_{1}}}C,\;C\text{ is }\phi_{1}\vee\dots\vee\phi_{i}\vee\dots\vee\phi_{n},\;n\geq2\right\} $.

Let $s_{2}=\state{W_{2}}{A_{2}}{E_{2}}$. We have: 
\begin{enumerate}
\item $A_{2}=A_{1}$. Let $A_{2}=A_{1}=A$. 
\item $\Egraph{E_{2}}=\updateEgraph{\Egraph{E_{1}}}{\filterLit{\Phi}}$.
Then, $\Egraph{E_{2}}\supseteq\Egraph{E_{1}}$. 
\item Since $B_{E_{1}}$ is a basis for $\Egraph{E_{1}}$, by $\Invariant{P:BS}{s_{1},s_{0}}$,
$\FilterT\left(B_{E_{1}}\right)\cup\FilterSet\left(B_{E_{1}}\right)$
is a candidate basis for $\Egraph{E_{2}}$. Let $B_{E_{2}}\subseteq\FilterSet\left(B_{E_{1}}\right)\cup\FilterT\left(B_{E_{1}}\right)$
be a basis for $\Egraph{E_{2}}$. 
\item $\Ehistory{E_{2}}=\ehistory{E_{1}}$.
\item By $\Invariant{G:VV}{s_{1},s_{0}}$, if $C\in A$ and $\Verify{W_{1},\Egraph{E_{1}}}C$,
then $\Verify{W_{2},\Egraph{E_{2}}}C$. Taking its contrapositive,
if $C\in A$ and $\NotVerify{W_{2},\Egraph{E_{2}}}C$, then $\NotVerify{W_{1},\Egraph{E_{1}}}C$. 
\end{enumerate}
We compute $\Sigma$ on both $s_{1}$ and $s_{2}$: 

\begin{align*}
\Sigma\left(s_{2}\right)= & \underset{p\in P\left(\state{W_{2}}{A_{2}}{E_{2}},B_{E_{2}}\right)}{\sum}\Amount p\\
= & \Amount{p_{\Tag\text{union-elim}}}+\Amount{p_{\Tag\text{subset-intro}}}+\Amount{p_{\Tag\text{subset-elim}}}+\\
 & \underset{a,b\in\FilterSet\left(B_{E_{2}}\right)}{\sum}\Amount{p_{\Tag\text{subset-elim}(a,b)}}+\cdots\\
= & \Amount{\left\{ \left(s_{1}',s_{2}',x'\right)\;\middle\vert\;\begin{array}{l}
s_{1}',s_{2}'\in\FilterT(B_{E_{2}}),x'\in\FilterT\left(B_{E_{2}}\right),\\
\Inst{E_{2}}{\left(\Tag\text{union-elim}:(s_{1}',s_{2}',x')\right)}
\end{array}\right\} }\\
 & +\Amount{\left\{ \left(s_{1}',s_{2}'\right)\;\middle\vert\;\begin{array}{l}
s_{1}',s_{2}'\in\FilterSet\left(B_{E_{2}}\right),\\
\Inst{E_{2}}{\left(\Tag\text{subset-intro}:\left(s_{1}',s_{2}'\right)\right)}
\end{array}\right\} }\\
 & +\Amount{\left\{ \left(s_{1}',s_{2}'\right)\;\middle\vert\;\begin{array}{l}
s_{1}',s_{2}'\in\FilterSet\left(B_{E_{2}}\right),\\
\Inst{E_{2}}{\left(\Tag\text{subset-elim}:\left(s_{1}',s_{2}'\right)\right)}
\end{array}\right\} }\\
 & +\underset{a,b\in\FilterSet\left(B_{E_{2}}\right)}{\sum}\Amount{\left\{ x'\in T\;\middle\vert\;\begin{array}{l}
x'\in\FilterT\left(B_{E_{2}}\right),\\
\Inst{E_{2}}{\left(\Tag\text{subset-elim(\ensuremath{a,b})}:x'\right)}
\end{array}\right\} }\\
 & +\cdots
\end{align*}
\begin{align*}
\Sigma\left(s_{1}\right)= & \underset{p\in P\left(\state{W_{1}}{A_{1}}{E_{1}},B_{E_{1}}\right)}{\sum}\Amount p\\
= & \Amount{p_{\Tag\text{union-elim}}}+\Amount{p_{\Tag\text{subset-intro}}}+\Amount{p_{\Tag\text{subset-elim}}}+\\
 & \underset{a,b\in\FilterSet\left(B_{E_{1}}\right)}{\sum}\Amount{p_{\Tag\text{subset-elim}(a,b)}}+\cdots\\
= & \Amount{\left\{ \left(s_{1}',s_{2}',x'\right)\;\middle\vert\;\begin{array}{l}
s_{1}',s_{2}'\in\FilterT(B_{E_{1}}),x'\in\FilterT\left(B_{E_{1}}\right),\\
\Inst{E_{1}}{\left(\Tag\text{union-elim}:(s_{1}',s_{2}',x')\right)}
\end{array}\right\} }\\
 & +\Amount{\left\{ \left(s_{1}',s_{2}'\right)\;\middle\vert\;\begin{array}{l}
s_{1}',s_{2}'\in\FilterSet\left(B_{E_{1}}\right),\\
\Inst{E_{1}}{\left(\Tag\text{subset-intro}:\left(s_{1}',s_{2}'\right)\right)}
\end{array}\right\} }\\
 & +\Amount{\left\{ \left(s_{1}',s_{2}'\right)\;\middle\vert\;\begin{array}{l}
s_{1}',s_{2}'\in\FilterSet\left(B_{E_{1}}\right),\\
\Inst{E_{1}}{\left(\Tag\text{subset-elim}:\left(s_{1}',s_{2}'\right)\right)}
\end{array}\right\} }\\
 & +\underset{a,b\in\FilterSet\left(B_{E_{1}}\right)}{\sum}\Amount{\left\{ x'\in T\;\middle\vert\;\begin{array}{l}
x'\in\FilterT\left(B_{E_{1}}\right),\\
\Inst{E_{1}}{\left(\Tag\text{subset-elim(\ensuremath{a,b})}:x'\right)}
\end{array}\right\} }\\
 & +\cdots
\end{align*}

To show $\Sigma\left(s_{2}\right)\leq\Sigma\left(s_{1}\right)$, it
suffices to show the following propositions. 
\begin{enumerate}
\item If $s_{1}',s_{2}'\in\FilterT(B_{E_{2}})$, $x'\in\FilterT\left(B_{E_{2}}\right)$,
$\Inst{E_{2}}{\left(\Tag\text{union-elim}:(s_{1}',s_{2}',x')\right)}$,
then $x'\in\FilterT\left(B_{E_{1}}\right)$, $s_{1}',s_{2}'\in\FilterT(B_{E_{1}})$,
$\Inst{E_{1}}{\left(\Tag\text{union-elim}:(s_{1}',s_{2}',x')\right)}$. 
\item If $s_{1}',s_{2}'\in\FilterSet\left(B_{E_{2}}\right)$ and $\Inst{E_{2}}{\left(\Tag\text{subset-intro}:\left(s_{1}',s_{2}'\right)\right)}$,
then $s_{1}',s_{2}'\in\FilterSet\left(B_{E_{1}}\right)$ and $\Inst{E_{1}}{\left(\Tag\text{subset-intro}:\left(s_{1}',s_{2}'\right)\right)}$. 
\item If $s_{1}',s_{2}'\in\FilterSet\left(B_{E_{2}}\right)$ and $\Inst{E_{2}}{\left(\Tag\text{subset-elim}:\left(s_{1}',s_{2}'\right)\right)}$,
then $s_{1}',s_{2}'\in\FilterSet\left(B_{E_{1}}\right)$ and $\Inst{E_{1}}{\left(\Tag\text{subset-elim}:\left(s_{1}',s_{2}'\right)\right)}$. 
\item (1) If $a,b\in\FilterSet\left(B_{E_{2}}\right)$, then $a,b\in\FilterSet\left(B_{E_{1}}\right)$.
 (2) If $x'\in\FilterT\left(B_{E_{2}}\right)$ and $\Inst{E_{2}}{\left(\Tag\text{subset-elim(\ensuremath{a,b})}:x'\right)}$,
then $x'\in\FilterT\left(B_{E_{1}}\right)$ and $\Inst{E_{1}}{\left(\Tag\text{subset-elim(\ensuremath{a,b})}:x'\right)}$. 
\item (other cases omitted). 
\end{enumerate}
We demonstrate that the second proposition holds; the rest cases can
be proved analogously.
\begin{enumerate}
\item Since $B_{E_{2}}\subseteq\FilterSet\left(B_{E_{1}}\right)\cup\FilterT\left(B_{E_{1}}\right)$,
if $s_{1}',s_{2}'\in\FilterSet\left(B_{E_{2}}\right)$, then $s_{1}',s_{2}'\in\FilterSet\left(B_{E_{1}}\right)$. 
\item If $\Inst{E_{2}}{\left(\Tag\text{subset-intro}:\left(s_{1}',s_{2}'\right)\right)}$,
then for every $\left(\Tag\text{subset-intro}:\left(r_{1},r_{2}\right)\right)\in\Ehistory{E_{2}}$,
$\NotEntail{\Egraph{E_{2}}}{\Equiv{\left(s_{1}',s_{2}'\right)}{\left(r_{1},r_{2}\right)}}$. 
\item Since $\Ehistory{E_{2}}=\ehistory{E_{1}}$ and $\Egraph{E_{2}}\supseteq\Egraph{E_{1}}$,
for every $\left(\Tag\text{subset-intro}:\left(r_{1},r_{2}\right)\right)\in\Ehistory{E_{1}}$,
$\NotEntail{\Egraph{E_{1}}}{\Equiv{\left(s_{1}',s_{2}'\right)}{\left(r_{1},r_{2}\right)}}$.
That is, $\Inst{E_{1}}{\left(\Tag\text{subset-intro}:\left(s_{1}',s_{2}'\right)\right)}$. 
\end{enumerate}
Therefore, $\Sigma\left(s_{2}\right)\leq\Sigma\left(s_{1}\right)$. 

We compute $\Theta$ on both $s_{1}$ and $s_{2}$: 
\begin{align*}
\Theta\left(s_{2}\right)= & \Amount{\left\{ C\in A\;|\;\NotVerify{W_{2},\Egraph{E_{2}}}C\right\} }
\end{align*}
\begin{align*}
\Theta\left(s_{1}\right)= & \Amount{\left\{ C\in A\;|\;\NotVerify{W_{1},\Egraph{E_{1}}}C\right\} }
\end{align*}

To show $\Theta\left(s_{2}\right)<\Theta\left(s_{1}\right)$, it suffices
to show the following propositions. 
\begin{enumerate}
\item For every $C\in A$, if $\NotVerify{W_{2},\Egraph{E_{2}}}C$, then
$\NotVerify{W_{1},\Egraph{E_{1}}}C$. 
\item There exists some $C\in A$ such that $\NotVerify{W_{1},\Egraph{E_{1}}}C$
and $\Verify{W_{2},\Egraph{E_{2}}}C$. 
\end{enumerate}
The first proposition has been established. To prove the second proposition,
assume $\NotVerify{W_{1},\Egraph{E_{1}}}C$, $\phi_{i}\in C$, and
$\phi_{i}\subseteq\Phi$. If $\phi_{i}$ is a tagged quantifier, then
$\phi_{i}\in W_{2}$; if $\phi_{i}$ is $t_{1}=t_{2}$, then $\Entail{\Egraph{E_{2}}}{\Equiv{t_{1}}{t_{2}}}$;
if $\phi_{i}$ is $t_{1}\neq t_{2}$, then $\Entail{\Egraph{E_{1}}}{\NotEquiv{t_{1}}{t_{2}}}$.
In all these three cases, $\Verify{W_{2},\Egraph{E_{2}}}C$. 

We then work on the $\textsc{(inst)}$ case. Let $s_{1}=\state{W_{1}}{A_{1}}{E_{1}}$.
By $\Invariant{P:IB}{s_{1},s_{0}}$, $\FilterSet(B_{E_{0}})\cup\FilterT(B_{E_{0}})$
is a candidate basis for $\Egraph{E_{1}}$. Let $B_{E_{1}}\subseteq\FilterSet(B_{E_{0}})\cup\FilterT(B_{E_{0}})$
be a basis for $\Egraph{E_{1}}$. 

Let $s_{2}=\state{W_{2}}{A_{2}}{E_{2}}$. By $\Invariant{P:BS}{s_{1},s_{0}}$,
$\FilterSet(B_{E_{1}})\cup\FilterT(B_{E_{1}})$ is a candidate basis
for $\Egraph{E_{2}}$. Let $B_{E_{2}}\subseteq\FilterSet(B_{E_{1}})\cup\FilterT(B_{E_{1}})$
be a basis for $\Egraph{E_{2}}$. 

Proceed by cases on $s_{1}\Ematching\cdot\MatchSep\cdot$. We choose
the following case as an example; the other cases are analogous to
this case. 
\[
\begin{array}{r}
\state{W_{1}}{A_{1}}{E_{1}}\Ematching\left(\begin{array}{l}
\forall s_{1},s_{2}.\;\left[\Subset(s_{1},s_{2})\right]\\
\left(\Subset(s_{1},s_{2})\vee\Member(\Skolem_{ss}(s_{1},s_{2}),s_{1})\right)\wedge\\
\left(\Subset(s_{1},s_{2})\vee\neg\Member(\Skolem_{ss}(s_{1},s_{2}),s_{2})\right)
\end{array}\right)^{\Tag\text{subset-intro}}\\
\MatchSep\left(s_{1}^{\prime},s_{2}^{\prime}\right)
\end{array}
\]
We have: 
\begin{enumerate}
\item $\Known{\Egraph{E_{1}}}{\Subset(s_{1}^{\prime},s_{2}^{\prime})}$
and $\Known{\Egraph{E_{1}}}{\left(s_{1}^{\prime},s_{2}^{\prime}\right)}$.
\item $\Inst{E_{1}}{\left(\Tag\text{subset-intro}:\left(s_{1}',s_{2}'\right)\right)}$
\item $\Egraph{E_{1}}=\Egraph{E_{2}}$. 
\item $\Ehistory{E_{2}}=\Ehistory{E_{1}}\cup\left\{ \left(\Tag\text{subset-intro}:\left(s_{1}^{\prime},s_{2}^{\prime}\right)\right)\right\} $. 
\end{enumerate}
We compute $\Sigma$ on both $s_{1}$ and $s_{2}$: 
\begin{align*}
\Sigma\left(s_{2}\right)= & \underset{p\in P\left(\state{W_{2}}{A_{2}}{E_{2}},B_{E_{2}}\right)}{\sum}\Amount p\\
= & \Amount{p_{\Tag\text{union-elim}}}+\Amount{p_{\Tag\text{subset-intro}}}+\Amount{p_{\Tag\text{subset-elim}}}+\\
 & \underset{a,b\in\FilterSet\left(B_{E_{2}}\right)}{\sum}\Amount{p_{\Tag\text{subset-elim}(a,b)}}+\cdots\\
= & \Amount{\left\{ \left(s_{1}',s_{2}',x'\right)\;\middle\vert\;\begin{array}{l}
s_{1}',s_{2}'\in\FilterT(B_{E_{2}}),x'\in\FilterT\left(B_{E_{2}}\right),\\
\Inst{E_{2}}{\left(\Tag\text{union-elim}:(s_{1}',s_{2}',x')\right)}
\end{array}\right\} }\\
 & +\Amount{\left\{ \left(s_{1}',s_{2}'\right)\;\middle\vert\;\begin{array}{l}
s_{1}',s_{2}'\in\FilterSet\left(B_{E_{2}}\right),\\
\Inst{E_{2}}{\left(\Tag\text{subset-intro}:\left(s_{1}',s_{2}'\right)\right)}
\end{array}\right\} }\\
 & +\Amount{\left\{ \left(s_{1}',s_{2}'\right)\;\middle\vert\;\begin{array}{l}
s_{1}',s_{2}'\in\FilterSet\left(B_{E_{2}}\right),\\
\Inst{E_{2}}{\left(\Tag\text{subset-elim}:\left(s_{1}',s_{2}'\right)\right)}
\end{array}\right\} }\\
 & +\underset{a,b\in\FilterSet\left(B_{E_{2}}\right)}{\sum}\Amount{\left\{ x'\in T\;\middle\vert\;\begin{array}{l}
x'\in\FilterT\left(B_{E_{2}}\right),\\
\Inst{E_{2}}{\left(\Tag\text{subset-elim(\ensuremath{a,b})}:x'\right)}
\end{array}\right\} }\\
 & +\cdots
\end{align*}
\begin{align*}
\Sigma\left(s_{1}\right)= & \underset{p\in P\left(\state{W_{1}}{A_{1}}{E_{1}},B_{E_{1}}\right)}{\sum}\Amount p\\
= & \Amount{p_{\Tag\text{union-elim}}}+\Amount{p_{\Tag\text{subset-intro}}}+\Amount{p_{\Tag\text{subset-elim}}}+\\
 & \underset{a,b\in\FilterSet\left(B_{E_{1}}\right)}{\sum}\Amount{p_{\Tag\text{subset-elim}(a,b)}}+\cdots\\
= & \Amount{\left\{ \left(s_{1}',s_{2}',x'\right)\;\middle\vert\;\begin{array}{l}
s_{1}',s_{2}'\in\FilterT(B_{E_{1}}),x'\in\FilterT\left(B_{E_{1}}\right),\\
\Inst{E_{1}}{\left(\Tag\text{union-elim}:(s_{1}',s_{2}',x')\right)}
\end{array}\right\} }\\
 & +\Amount{\left\{ \left(s_{1}',s_{2}'\right)\;\middle\vert\;\begin{array}{l}
s_{1}',s_{2}'\in\FilterSet\left(B_{E_{1}}\right),\\
\Inst{E_{1}}{\left(\Tag\text{subset-intro}:\left(s_{1}',s_{2}'\right)\right)}
\end{array}\right\} }\\
 & +\Amount{\left\{ \left(s_{1}',s_{2}'\right)\;\middle\vert\;\begin{array}{l}
s_{1}',s_{2}'\in\FilterSet\left(B_{E_{1}}\right),\\
\Inst{E_{1}}{\left(\Tag\text{subset-elim}:\left(s_{1}',s_{2}'\right)\right)}
\end{array}\right\} }\\
 & +\underset{a,b\in\FilterSet\left(B_{E_{1}}\right)}{\sum}\Amount{\left\{ x'\in T\;\middle\vert\;\begin{array}{l}
x'\in\FilterT\left(B_{E_{1}}\right),\\
\Inst{E_{1}}{\left(\Tag\text{subset-elim(\ensuremath{a,b})}:x'\right)}
\end{array}\right\} }\\
 & +\cdots
\end{align*}

To show $\Sigma\left(s_{2}\right)<\Sigma\left(s_{1}\right)$, it suffices
to show the following propositions. 
\begin{enumerate}
\item (union-elim). 
\begin{enumerate}
\item There exists $s_{1}',s_{2}'\in\FilterSet\left(B_{E_{1}}\right)$ and
$x'\in\FilterT\left(B_{E_{1}}\right)$ such that \\
$\Inst{E_{1}}{\left(\Tag\text{union-elim}:\left(s_{1}',s_{2}',x'\right)\right)}$,
but either $s_{1}'\notin\FilterSet\left(B_{E_{2}}\right)$, or $s_{2}'\notin\FilterSet\left(B_{E_{2}}\right),$
or $x'\in\FilterT\left(B_{E_{2}}\right)$, or $\NotInst{E_{2}}{\left(\Tag\text{union-elim}:(s_{1}',s_{2}',x')\right)}$. 
\item If $s_{1}',s_{2}'\in\FilterT(B_{E_{2}})$, $x'\in\FilterT\left(B_{E_{2}}\right)$,
$\Inst{E_{2}}{\left(\Tag\text{union-elim}:(s_{1}',s_{2}',x')\right)}$,
then $x'\in\FilterT\left(B_{E_{1}}\right)$, $s_{1}',s_{2}'\in\FilterT(B_{E_{1}})$,
$\Inst{E_{1}}{\left(\Tag\text{union-elim}:(s_{1}',s_{2}',x')\right)}$. 
\end{enumerate}
\item (subset-intro). 
\begin{enumerate}
\item There exists $s_{1}',s_{2}'\in\FilterSet\left(B_{E_{1}}\right)$ such
that \\
$\Inst{E_{1}}{\left(\Tag\text{subset-intro}:\left(s_{1}',s_{2}'\right)\right)}$,
but either $s_{1}'\notin\FilterSet\left(B_{E_{2}}\right)$, or $s_{2}'\notin\FilterSet\left(B_{E_{2}}\right),$
or $\NotInst{E_{2}}{\left(\Tag\text{subset-intro}:\left(s_{1}',s_{2}'\right)\right)}$. 
\item If $s_{1}',s_{2}'\in\FilterSet\left(B_{E_{2}}\right)$ and $\Inst{E_{2}}{\left(\Tag\text{subset-intro}:\left(s_{1}',s_{2}'\right)\right)}$,
then $s_{1}',s_{2}'\in\FilterSet\left(B_{E_{1}}\right)$ and $\Inst{E_{1}}{\left(\Tag\text{subset-intro}:\left(s_{1}',s_{2}'\right)\right)}$. 
\end{enumerate}
\item (subset-elim). 
\begin{enumerate}
\item There exists $s_{1}',s_{2}'\in\FilterSet\left(B_{E_{1}}\right)$ such
that \\
$\Inst{E_{1}}{\left(\Tag\text{subset-elim}:\left(s_{1}',s_{2}'\right)\right)}$,
but either $s_{1}'\notin\FilterSet\left(B_{E_{2}}\right)$, or $s_{2}'\notin\FilterSet\left(B_{E_{2}}\right),$
or $\NotInst{E_{2}}{\left(\Tag\text{subset-elim}:\left(s_{1}',s_{2}'\right)\right)}$. 
\item If $s_{1}',s_{2}'\in\FilterSet\left(B_{E_{2}}\right)$ and $\Inst{E_{2}}{\left(\Tag\text{subset-elim}:\left(s_{1}',s_{2}'\right)\right)}$,
then $s_{1}',s_{2}'\in\FilterSet\left(B_{E_{1}}\right)$ and $\Inst{E_{1}}{\left(\Tag\text{subset-elim}:\left(s_{1}',s_{2}'\right)\right)}$. 
\end{enumerate}
\item (subset-elim$(a,b)$). 
\begin{enumerate}
\item There exists some $x'\in\FilterT\left(B_{E_{1}}\right)$ such that
$\Inst{E_{1}}{\left(\Tag\text{subset-elim(\ensuremath{a,b})}:x'\right)}$,
but either $x'\notin\FilterT\left(B_{E_{2}}\right)$ or $\NotInst{E_{2}}{\left(\Tag\text{subset-elim(\ensuremath{a,b})}:x'\right)}$. 
\item If $a,b\in\FilterSet\left(B_{E_{2}}\right)$, then $a,b\in\FilterSet\left(B_{E_{1}}\right)$.
\item If $x'\in\FilterT\left(B_{E_{2}}\right)$ and $\Inst{E_{2}}{\left(\Tag\text{subset-elim(\ensuremath{a,b})}:x'\right)}$,
then $x'\in\FilterT\left(B_{E_{1}}\right)$ and $\Inst{E_{1}}{\left(\Tag\text{subset-elim(\ensuremath{a,b})}:x'\right)}$. 
\end{enumerate}
\item (other cases omitted)
\end{enumerate}
We focus on the (subset-intro) case here; the remaining cases can
be proved analogously.

We first work on the first proposition of the (subset-intro) case.
\begin{enumerate}
\item Given $\Known{\Egraph{E_{1}}}{\left(s_{1}^{\prime},s_{2}^{\prime}\right)}$,
by $\Invariant{G:KB}{s_{1},s_{0}}$, there exists $b_{1},b_{2}\in B_{E_{1}}$
such that $\Entail{\Egraph{E_{1}}}{\Equiv{\left(s_{1}^{\prime},s_{2}^{\prime}\right)}{\left(b_{1},b_{2}\right)}}$. 
\item Proceed on whether $b_{1},b_{2}\in B_{E_{2}}$. If either $b_{1}\notin B_{E_{2}}$
or $b_{2}\notin B_{E_{2}}$, which implies either $b_{1}\notin\FilterSet\left(B_{E_{2}}\right)$
or $b_{2}\notin\FilterSet\left(B_{E_{2}}\right)$---the proposition
holds. Assume $b_{1},b_{2}\in B_{E_{2}}$. 
\item Given $\Inst{E_{1}}{\left(\Tag\text{subset-intro}:\left(s_{1}^{\prime},s_{2}^{\prime}\right)\right)}$,
by $\Invariant{G:HE}{s_{1},s_{0}}$, $\Inst{E_{1}}{\left(\Tag\text{subset-intro}:\left(b_{1},b_{2}\right)\right)}$.
\item Our goal is to show $\NotInst{E_{2}}{\left(\Tag\text{subset-intro}:\left(b_{1},b_{2}\right)\right)}$. 
\begin{enumerate}
\item Given $\Egraph{E_{1}}\subseteq\Egraph{E_{2}}$ and $\Entail{\Egraph{E_{1}}}{\Equiv{\left(s_{1}^{\prime},s_{2}^{\prime}\right)}{\left(b_{1},b_{2}\right)}}$,
$\Entail{\Egraph{E_{2}}}{\Equiv{\left(s_{1}^{\prime},s_{2}^{\prime}\right)}{\left(b_{1},b_{2}\right)}}$. 
\item Given $\left(\Tag\text{subset-intro}:\left(s_{1}^{\prime},s_{2}^{\prime}\right)\right)\in\Ehistory{E_{2}}$
and $\Entail{\Egraph{E_{2}}}{\Equiv{\left(s_{1}^{\prime},s_{2}^{\prime}\right)}{\left(b_{1},b_{2}\right)}}$,
$\NotInst{E_{2}}{\left(\Tag\text{subset-intro}:\left(b_{1},b_{2}\right)\right)}$. 
\end{enumerate}
\end{enumerate}
We then work on the second proposition of the (subset-intro) case. 
\begin{enumerate}
\item Since $B_{E_{2}}\subseteq\FilterSet\left(B_{E_{1}}\right)\cup\FilterT\left(B_{E_{1}}\right)$,
if $s_{1}',s_{2}'\in\FilterSet\left(B_{E_{2}}\right)$, then $s_{1}',s_{2}'\in\FilterSet\left(B_{E_{1}}\right)$. 
\item If $\Inst{E_{2}}{\left(\Tag\text{subset-intro}:\left(s_{1}',s_{2}'\right)\right)}$,
then there exists no $\left(\Tag\text{subset-intro}:\multi r\right)\in\Ehistory{E_{2}}$
such that $\Entail{\Egraph{E_{2}}}{\Equiv{\left(s_{1}',s_{2}'\right)}{\multi r}}$.
\item Since $\Ehistory{E_{2}}=\Ehistory{E_{1}}\cup\left\{ \left(\Tag\text{subset-intro}:\left(s_{1}^{\prime},s_{2}^{\prime}\right)\right)\right\} $,
there exists no $\left(\Tag\text{subset-intro}:\multi r\right)\in\Ehistory{E_{1}}$
such that $\Entail{\Egraph{E_{2}}}{\Equiv{\left(s_{1}',s_{2}'\right)}{\multi r}}$. 
\item Since $\Egraph{E_{1}}\subseteq\Egraph{E_{2}}$, then there exists
no $\left(\Tag\text{subset-intro}:\multi r\right)\in\Ehistory{E_{1}}$
such that $\Entail{\Egraph{E_{1}}}{\Equiv{\left(s_{1}',s_{2}'\right)}{\multi r}}$. 
\item Hence $\Inst{E_{1}}{\left(\Tag\text{subset-intro}:\left(s_{1}',s_{2}'\right)\right)}$. 
\end{enumerate}
Therefore, $\Sigma\left(s_{2}\right)<\Sigma\left(s_{1}\right)$. 

Overall, we have proved that $M\left(s_{2}\right)<M\left(s_{1}\right)$. 
\end{proof}

\begin{theorem}
[Instantiation Termination for Set Theory] Suppose the initial state
is $s_{0}=\state{W_{0}}{\emptyset}{E_{0}}$, where $W_{0}$ is our
axiomatisation for set theory with each quantifier tagged as specified
by Def. \ref{def:appx-tagging-quantifiers-set-theory-axioms}, $E_{0}=\InjFunc L$,
and $L$ is an arbitrary set of ground literals from set theory.

Any sequence of transitions from the initial sate $s_{0}$, where
$\tran$ represents the state transition relation, has a finite length. 
\end{theorem}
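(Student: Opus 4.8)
The plan is to assemble the two main results already established just before this theorem---the Validity of Invariants proposition and the Descent of Measure lemma---with the well-foundedness of the measure's codomain, and then close by a short argument by contradiction. Concretely, an infinite trace would induce an infinite strictly decreasing sequence in $(\mathbb{N}\cup\{-1\})^2$ under the lexicographic order, which is impossible.

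First I would fix an arbitrary basis $B_{E_0}$ of $\Egraph{E_0}$; this exists because $\Egraph{E_0}$ is a finite E-interface, and since all bases of an E-interface have the same cardinality this choice does not affect $M(s_0)$. The invariants $I_{\mathrm{P:IB}}$ and $I_{\mathrm{P:BS}}$ guarantee that a coherent choice of bases $B_{E_i}$ (each contained in $O_1(\cdot)\cup O_2(\cdot)$ applied to a basis of the previous E-interface) is available along any trace, so $M$ is well-defined on every state reachable from $s_0$. Now suppose, towards a contradiction, that there is an infinite transition sequence $s_0 \tran s_1 \tran s_2 \tran \cdots$. Inspecting the four transition rules $\textsc{(split)}$, $\textsc{(bot)}$, $\textsc{(sat)}$ and $\textsc{(inst)}$, each requires its source to have the shape $\state WAE$; hence $\bot$ and $\lozenge$ are stuck, and every $s_i$ in the sequence has the form $\state{W_i}{A_i}{E_i}$.

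Next, for each $i$ we have $s_0 \tran^* s_i$, so by the Validity of Invariants proposition the combined invariant $I(s_i,s_0) = I_G(s_i,s_0) \wedge I_P(s_i,s_0)$ holds. Therefore the Descent of Measure lemma applies to every step $s_i \tran s_{i+1}$, yielding $M(s_{i+1}) < M(s_i)$ in the lexicographic order on $(\mathbb{N}\cup\{-1\})^2$. This produces an infinite strictly descending chain $M(s_0) > M(s_1) > M(s_2) > \cdots$. But the order on $\mathbb{N}\cup\{-1\}$ is well-founded (it is order-isomorphic to $\mathbb{N}$), and the lexicographic product of two well-founded orders is well-founded, so $(\mathbb{N}\cup\{-1\})^2$ admits no such chain --- a contradiction. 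Hence every transition sequence from $s_0$ is finite.

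I expect no genuine obstacle at the level of the theorem itself: it is the standard observation that a strictly decreasing sequence in a well-founded order is finite, once $M$ is known to decrease along transitions under the invariants. The real content, already discharged upstream, lives in the Descent of Measure lemma --- in particular the $\textsc{(inst)}$ case, where one must show that recording the new pair in the E-history strictly shrinks the over-estimate $\Sigma$ while the over-approximated basis controls any newly introduced equivalence classes tightly enough that no previously-disabled E-match becomes enabled --- and in the validity of $I_{\mathrm{P:BS}}$, which is what licenses the coherent propagation of basis choices used to make $M$ well-defined along the trace.
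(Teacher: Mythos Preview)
Your proposal is correct and follows essentially the same approach as the paper: assume an infinite trace, invoke the Validity of Invariants proposition at each reachable state, apply the Descent of Measure lemma at each step to obtain an infinite strictly descending chain in $(\mathbb{N}\cup\{-1\})^2$, and derive a contradiction from well-foundedness of the lexicographic order. You add some helpful extra detail (that $\bot$ and $\lozenge$ are stuck, and the discussion of basis coherence for well-definedness of $M$) that the paper leaves implicit, but the structure and the key lemmas invoked are the same.
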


\begin{proof}
Suppose there exists an infinite path, $s_{0}\tran s\tran s^{\prime}\tran s^{\prime\prime}\tran\dots$. 
\begin{lyxlist}{00.00.0000}
\item [{Conjecture}] If $s_{0}\tran^{*}s_{1}\tran s_{2}$, then $M(s_{2})<M(s_{1})$. 
\end{lyxlist}
By the above conjecture, we have $M(s_{0})>M(s)>M(s^{\prime})>M(s^{\prime\prime})>\dots$.
Given that the result of $M$ is a lexicographical order on $(\mathbb{N}\cup\{-1\})^{2}$,
the path $s_{0}\tran s\tran s^{\prime}\tran s^{\prime\prime}\longrightarrow\dots$
must be finite. 

We now prove the conjecture. Let $\Invariant{}{s_{1},s_{0}}=\Invariant G{s_{1},s_{0}}\wedge\Invariant P{s_{1},s_{0}}$.
It suffices to prove the following propositions. 
\begin{enumerate}
\item If $s_{0}\tran^{*}s_{1}$, then $\Invariant{}{s_{1},s_{0}}$ holds.
\item If $s_{1}\tran s_{2}$ and $\Invariant{}{s_{1},s_{0}}$ holds, then
$M(s_{2})<M(s_{1})$. 
\end{enumerate}
The first proposition is implied by Proposition \ref{prop:appx-set-validity-of-invariants}.
The second proposition is implied by Lemma \ref{lem:set-descent-of-measure}. 
\end{proof}

\section{Set Theory Axiomatisations }

We present our axiomatisation for set theory in Appx. \ref{subsec:appx-our-axiomatisation-set-theory},
and compare our axiomatisation with the counterparts from Dafny and
Viper in Appx. \ref{subsec:appx-comparison-axiomatisations-set-theory}. 

\subsection{Our Axiomatisation for Set Theory \label{subsec:appx-our-axiomatisation-set-theory}}

We present our axiomatisation by groups of axioms, with each group
focusing on one operation. For each axiom, we provide two versions:
one without triggers for a straightforward understanding, and another
with triggers, formatted in ECNF for consistency with our formal model. 

\subsubsection{Empty \\}

empty
\[
\forall x.\;\neg\Member(x,\Empty())
\]
\[
\forall x.\left[\Member(x,\Empty())\right]\;\neg\Member(x,\Empty())
\]

\subsubsection{Singleton \\}

\noindent singleton-intro-1
\[
\forall x.\Member(x,\Singleton(x))
\]
\[
\forall x.\left[\Singleton(x)\right]\Member(x,\Singleton(x))
\]

\noindent singleton-intro-2
\[
\forall x,y.\Member(y,\Singleton(x))\leftarrow x=y
\]
\[
\forall x,y.\left[\Member(y,\Singleton(x))\right]\Member(y,\Singleton(x))\vee x\neq y
\]

\noindent singleton-elim
\[
\forall x,y.\Member(y,\Singleton(x))\rightarrow x=y
\]
\[
\forall x,y.\left[\Member(y,\Singleton(x))\right]\neg\Member(y,\Singleton(x))\vee x=y
\]

\subsubsection{Add \\}

add-intro-1
\[
\forall s,x,y.\;\Member(y,\Add(x,s))\leftarrow\Member(y,s)
\]
\[
\begin{array}{r}
\forall s,x,y.\left[\Member(y,s),\Add(x,s)\right]\left[\Member(y,\Add(x,s))\right]\\
\;\Member(y,\Add(x,s))\vee\neg\Member(y,s)
\end{array}
\]

\noindent add-intro-2
\[
\forall s,x.\;\Member(x,\Add(x,s))
\]
\[
\forall s,x.\left[\Add(x,s)\right]\;\Member(x,\Add(x,s))
\]

\noindent add-intro-3 
\[
\forall s,x,y.\;\Member(y,\Add(x,s))\leftarrow y=x
\]
\[
\begin{array}{r}
\forall s,x,y.\left[\Member(y,\Add(x,s))\right]\left[\Member(y,s),\Add(x,s)\right]\\
\Member(y,\Add(x,s))\vee y\neq x
\end{array}
\]

\noindent add-elim
\[
\forall s,x,y.\;\Member(y,\Add(x,s))\rightarrow(x=y)\vee\Member(y,s)
\]
\[
\begin{array}{r}
\forall s,x,y.\left[\Member(y,\Add(x,s))\right]\left[\Member(y,s),\Add(x,s)\right]\\
\neg\Member(y,\Add(x,s))\vee(x=y)\vee\Member(y,s)
\end{array}
\]

\subsubsection{Union \\}

\noindent union-intro-1
\[
\forall s_{1},s_{2},x.\Member\left(x,\Union\left(s_{1},s_{2}\right)\right)\leftarrow\Member\left(x,s_{1}\right)
\]
\[
\begin{array}{r}
\forall s_{1},s_{2},x.\left[\Union(s_{1},s_{2}),\Member(x,s_{1})\right]\left[\Member\left(x,\Union\left(s_{1},s_{2}\right)\right)\right]\\
\Member\left(x,\Union\left(s_{1},s_{2}\right)\right)\vee\neg\Member\left(x,s_{1}\right)
\end{array}
\]

\noindent union-intro-2
\[
\forall s_{1},s_{2},x.\Member\left(x,\Union\left(s_{1},s_{2}\right)\right)\leftarrow\Member\left(x,s_{2}\right)
\]
\[
\begin{array}{r}
\forall s_{1},s_{2},x.\left[\Union(s_{1},s_{2}),\Member(x,s_{2})\right]\left[\Member\left(x,\Union\left(s_{1},s_{2}\right)\right)\right]\\
\Member\left(x,\Union\left(s_{1},s_{2}\right)\right)\vee\neg\Member\left(x,s_{2}\right)
\end{array}
\]

\noindent union-elim
\[
\forall s_{1},s_{2},x.\Member\left(x,\Union\left(s_{1},s_{2}\right)\right)\rightarrow\Member(x,s_{1})\vee\Member(x,s_{2})
\]
\[
\begin{array}{l}
\forall s_{1},s_{2},x.\\
\left[\Member(x,\Union(s_{1},s_{2}))\right]\\
\left[\Union(s_{1},s_{2}),\Member(x,s_{1})\right]\left[\Union(s_{1},s_{2}),\Member(x,s_{2})\right]\\
\;\neg\Member\left(x,\Union\left(s_{1},s_{2}\right)\right)\vee\Member(x,s_{1})\vee\Member(x,s_{2})
\end{array}
\]

\noindent union-disjoint
\[
\begin{array}{l}
\forall s_{1},s_{2}.\left[\Union(s_{1},s_{2})\right]\\
\Disjoint(s_{1},s_{2})\rightarrow\left(\Diff(\Union(s_{1},s_{2}),s_{1})=s_{2}\right)\wedge\left(\Diff(\Union(s_{1},s_{2}),s_{2})=s_{1}\right)
\end{array}
\]

\subsubsection{Intersection \\}

\noindent inter-intro
\[
\forall s_{1},s_{2},x.\Member(x,\Inter(s_{1},s_{2}))\leftarrow\Member(x,s_{1})\wedge\Member(x,s_{2})
\]
\[
\begin{array}{l}
\forall s_{1},s_{2},x.\\
\left[\Member(x,s_{1}),\Inter(s_{1},s_{2})\right]\left[\Member(x,s_{2}),\Inter(s_{1},s_{2})\right]\\
\left[\Member(x,\Inter(s_{1},s_{2}))\right]\\
\;\Member(x,\Inter(s_{1},s_{2}))\vee\neg\Member(x,s_{1})\vee\neg\Member(x,s_{2})
\end{array}
\]

\noindent inter-elim
\[
\forall s_{1},s_{2},x.\Member(x,\Inter(s_{1},s_{2}))\rightarrow\Member(x,s_{1})\wedge\Member(x,s_{2})
\]
\[
\begin{array}{l}
\forall s_{1},s_{2},x.\left[\Member(x,\Inter(s_{1},s_{2}))\right]\\
\left[\Inter(s_{1},s_{2}),\Member(x,s_{1})\right]\left[\Inter(s_{1},s_{2}),\Member(x,s_{2})\right]\\
\left(\neg\Member(x,\Inter(s_{1},s_{2}))\vee\Member(x,s_{1})\right)\wedge\\
\left(\neg\Member(x,\Inter(s_{1},s_{2}))\vee\Member(x,s_{2})\right)
\end{array}
\]

\subsubsection{Properties on Union and Intersection \\}

\noindent union-right
\[
\forall s_{1},s_{2}.\Union(\Union(s_{1},s_{2}),s_{2})=\Union(s_{1},s_{2})
\]
\[
\forall s_{1},s_{2}.\left[\Union(\Union(s_{1},s_{2}),s_{2})\right]\Union(\Union(s_{1},s_{2}),s_{2})=\Union(s_{1},s_{2})
\]

\noindent union-left
\[
\forall s_{1},s_{2}.\Union(s_{1},\Union(s_{1},s_{2}))=\Union(s_{1},s_{2})
\]
\[
\forall s_{1},s_{2}.\left[\Union(s_{1},\Union(s_{1},s_{2}))\right]\Union(s_{1},\Union(s_{1},s_{2}))=\Union(s_{1},s_{2})
\]

\noindent inter-right
\[
\forall s_{1},s_{2}.\Inter(\Inter(s_{1},s_{2}),s_{2})=\Inter(s_{1},s_{2})
\]
\[
\forall s_{1},s_{2}.\left[\Inter(\Inter(s_{1},s_{2}),s_{2})\right]\Inter(\Inter(s_{1},s_{2}),s_{2})=\Inter(s_{1},s_{2})
\]

\noindent inter-left
\[
\forall s_{1},s_{2}.\Inter(s_{1},\Inter(s_{1},s_{2}))=\Inter(s_{1},s_{2})
\]
\[
\forall s_{1},s_{2}.\left[\Inter(s_{1},\Inter(s_{1},s_{2}))\right]\Inter(s_{1},\Inter(s_{1},s_{2}))=\Inter(s_{1},s_{2})
\]

\subsubsection{Difference \\}

\noindent diff-intro
\[
\forall s_{1},s_{2},x.\Member(x,\Diff(s_{1},s_{2}))\leftarrow\Member(x,s_{1})\wedge\neg\Member(x,s_{2})
\]
\[
\begin{array}{l}
\forall s_{1},s_{2},x.\left[\Member(x,s_{1}),\Diff(s_{1},s_{2})\right]\\
\left[\Member(x,\Diff(s_{1},s_{2}))\right]\left[\Member(x,s_{2}),\Diff(s_{1},s_{2})\right]\\
\Member(x,\Diff(s_{1},s_{2}))\vee\neg\Member(x,s_{1})\vee\Member(x,s_{2})
\end{array}
\]

\noindent diff-elim
\[
\forall s_{1},s_{2},x.\Member(x,\Diff(s_{1},s_{2}))\rightarrow\Member(x,s_{1})\wedge\neg\Member(x,s_{2})
\]
\[
\begin{array}{r}
\forall s_{1},s_{2},x.\left[\Member(x,\Diff(s_{1},s_{2}))\right]\left[\Member(x,s_{2}),\Diff(s_{1},s_{2})\right]\\
\left[\Member(x,s_{1}),\Diff(s_{1},s_{2})\right]\\
\left(\neg\Member(x,\Diff(s_{1},s_{2}))\vee\Member(x,s_{1})\right)\wedge\\
\left(\neg\Member(x,\Diff(s_{1},s_{2}))\vee\neg\Member(x,s_{2})\right)
\end{array}
\]

\subsubsection{Subset \\}

\noindent subset-intro 
\[
\forall s_{1},s_{2}.\Subset(s_{1},s_{2})\leftarrow\left(\forall x.\Member(x,s_{1})\rightarrow\Member(x,s_{2})\right)
\]
\[
\begin{array}{l}
\forall s_{1},s_{2}.\left[\Subset(s_{1},s_{2})\right]\\
\left(\Subset(s_{1},s_{2})\vee\Member(\Skolem_{ss}(s_{1},s_{2}),s_{1})\right)\wedge\\
\left(\Subset(s_{1},s_{2})\vee\neg\Member(\Skolem_{ss}(s_{1},s_{2}),s_{2})\right)
\end{array}
\]

\noindent subset-elim 
\[
\forall s_{1},s_{2}.\Subset(s_{1},s_{2})\rightarrow\left(\forall x.\Member(x,s_{1})\rightarrow\Member(x,s_{2})\right)
\]
\[
\begin{array}{l}
\forall s_{1},s_{2}.\left[\Subset(s_{1},s_{2})\right]\;\neg\Subset(s_{1},s_{2})\vee\\
\left(\forall x.[\Member(x,s_{1})][\Member(x,s_{2})]\;\neg\Member(x,s_{1})\vee\Member(x,s_{2})\right)
\end{array}
\]

\subsubsection{Extensionality \\}

\noindent equal-sets-intro
\[
\begin{array}{l}
\forall s_{1},s_{2}.\Equal(s_{1},s_{2})\leftarrow\\
\left(\forall x.\Member(x,s_{1})\leftrightarrow\Member(x,s_{2})\right)
\end{array}
\]
\[
\begin{array}{l}
\forall s_{1},s_{2}.\left[\Equal(s_{1},s_{2})\right]\\
\left(\Equal(s_{1},s_{2})\vee\Member(\Skolem_{\textit{eq}}(s_{1},s_{2}),s_{1})\vee\Member(\Skolem_{\textit{eq}}(s_{1},s_{2}),s_{2})\right)\wedge\\
\left(\Equal(s_{1},s_{2})\vee\neg\Member(\Skolem_{\textit{eq}}(s_{1},s_{2}),s_{1})\vee\neg\Member(\Skolem_{\textit{eq}}(s_{1},s_{2}),s_{2})\right)
\end{array}
\]

\noindent equal-sets-extensionality 
\[
\forall s_{1},s_{2}.\left[\Equal(s_{1},s_{2})\right]\Equal(s_{1},s_{2})\rightarrow s_{1}=s_{2}
\]
\[
\forall s_{1},s_{2}.\left[\Equal(s_{1},s_{2})\right]\neg\Equal(s_{1},s_{2})\vee s_{1}=s_{2}
\]

\subsubsection{Disjoint \\}

\noindent disjoint-intro 
\[
\forall s_{1},s_{2}.\Disjoint(s_{1},s_{2})\leftarrow\left(\forall x.\neg\Member(x,s_{1})\vee\neg\Member(x,s_{2})\right)
\]
\[
\begin{array}{r}
\forall s_{1},s_{2}.\left[\Disjoint(s_{1},s_{2})\right]\\
\left(\Disjoint(s_{1},s_{2})\vee\Member(\Skolem_{\textit{dj}}(s_{1},s_{2}),s_{1})\right)\wedge\\
\left(\Disjoint(s_{1},s_{2})\vee\Member(\Skolem_{\textit{dj}}(s_{1},s_{2}),s_{2})\right)
\end{array}
\]

\noindent disjoint-elim  
\[
\forall s_{1},s_{2}.\Disjoint(s_{1},s_{2})\rightarrow\left(\forall x.\neg\Member(x,s_{1})\vee\neg\Member(x,s_{2})\right)
\]
\[
\begin{array}{l}
\forall s_{1},s_{2}.\left[\Disjoint(s_{1},s_{2})\right]\neg\Disjoint(s_{1},s_{2})\vee\\
\left(\forall x.\left[\Member(x,s_{1})\right]\left[\Member(x,s_{2})\right]\neg\Member(x,s_{1})\vee\neg\Member(x,s_{2})\right)
\end{array}
\]

\subsubsection{Remove \\}

\noindent remove-intro-1
\[
\forall s,x,y.\Member(y,\Remove(x,s))\leftarrow y\neq x\wedge\Member(y,s)
\]
\[
\begin{array}{r}
\forall s,x,y.\left[\Member(y,s),\Remove(x,s)\right]\left[\Member(y,\Remove(x,s))\right]\\
y=x\vee\neg\Member(y,s)\vee\Member(y,\Remove(x,s))
\end{array}
\]

\noindent remove-intro-2 
\[
\forall s,x.\neg\Member(x,\Remove(x,s))
\]
\[
\forall s,x.\left[\Remove(x,s)\right]\;\neg\Member(x,\Remove(x,s))
\]

\noindent remove-intro-3
\[
\forall s,x.\neg\Member(y,\Remove(x,s))\leftarrow y=x
\]
\[
\begin{array}{r}
\forall s,x.\left[\Member(y,\Remove(x,s))\right]\left[\Member(y,s),\Remove(x,s)\right]\\
\neg\Member(y,\Remove(x,s))\vee y\neq x
\end{array}
\]

\noindent remove-elim
\[
\forall s,x,y.\Member(y,\Remove(x,s))\rightarrow y\neq x\wedge\Member(y,s)
\]
\[
\begin{array}{r}
\forall s,x,y.\left[\Member(y,\Remove(x,s))\right]\left[\Member(y,s),\Remove(x,s)\right]\\
\left(\neg\Member(y,\Remove(x,s))\vee y\neq x\right)\wedge\\
\left(\neg\Member(y,\Remove(x,s))\vee\Member(y,s)\right)
\end{array}
\]

\subsubsection{IsEmpty \\}

\noindent isEmpty-intro-1 
\[
\forall s.\left(\IsEmpty(s)\leftarrow\forall x.\neg\Member(x,s)\right)
\]
\[
\forall s.\left[\IsEmpty(s)\right]\;\IsEmpty(s)\vee\Member(\Skolem_{ie}(s),s)
\]

\noindent isEmpty-intro-2 
\[
\forall s.\;\IsEmpty(s)\leftarrow\Equal(s,\Empty())
\]
\[
\forall s.\left[\IsEmpty(s)\right]\left[\Equal(s,\Empty())\right]\;\IsEmpty(s)\vee\neg\Equal(s,\Empty())
\]

\noindent isEmpty-elim-1  
\[
\forall s.\left(\IsEmpty(s)\rightarrow\forall x.\neg\Member(x,s)\right)
\]
\[
\forall s.\left[\IsEmpty(s)\right]\;\neg\IsEmpty(s)\vee\forall x.[\Member(x,s)]\;\neg\Member(x,s)
\]

\noindent isEmpty-elim-2 
\[
\forall s.\;\IsEmpty(s)\rightarrow\Equal(s,\Empty())
\]
\[
\forall s.\left[\IsEmpty(s)\right]\left[\Equal(s,\Empty())\right]\;\neg\IsEmpty(s)\vee\Equal(s,\Empty())
\]

\subsection{Comparison of Our Axiomatisation for Set Theory with Dafny's and
Viper's \label{subsec:appx-comparison-axiomatisations-set-theory}}

We compare our axiomatisation for set theory with the counterparts
from Dafny and Viper. We perform the comparison by groups of axioms,
with each group focusing on one operation. For each axiom of ours,
we typically include two versions: one without triggers, and one in
extended CNF with triggers. 

We use the following labels to indicate where each axiom comes from. 
\begin{enumerate}
\item Label {[}all{]} means the axiom of question is part of our axiomatisation,
Dafny's and Viper's. 
\item Label {[}dafny, viper{]} means the axiom of question is only included
in the axiomatisations from Dafny and Viper. 
\item Label {[}dafny{]} means the axiom of question is only included in
the axiomatisation from Dafny. 
\item Label {[}viper{]} means the axiom of question is only included in
the axiomatisation from Viper. 
\item Label {[}dafny, ours{]} means the axiom of question is only included
in our axiomatisation, and the axiomatisation from Dafny. 
\item No presence of labels indicates the axiom of question is only included
in our axiomatisation. 
\end{enumerate}

\subsubsection{Empty \\ }

\noindent empty {[}all{]} 
\[
\forall x.\;\neg\Member(x,\Empty())
\]
\[
\forall x.\left[\Member(x,\Empty())\right]\;\neg\Member(x,\Empty())
\]

\subsubsection{Singleton \\}

\noindent singleton-id {[}dafny, viper{]} 
\[
\forall x.\left[\Singleton(x)\right]\Member(x,\Singleton(x))
\]

\noindent singleton-bi {[}dafny, viper{]} 
\[
\forall x,y.\left[\Member(y,\Singleton(x))\right]\Member(y,\Singleton(x))\leftrightarrow x=y
\]

\noindent singleton-intro-1
\[
\forall x.\Member(x,\Singleton(x))
\]
\[
\forall x.\left[\Singleton(x)\right]\Member(x,\Singleton(x))
\]

\noindent singleton-intro-2
\[
\forall x,y.\Member(y,\Singleton(x))\leftarrow x=y
\]
\[
\forall x,y.\left[\Member(y,\Singleton(x))\right]\Member(y,\Singleton(x))\vee x\neq y
\]

\noindent singleton-elim
\[
\forall x,y.\Member(y,\Singleton(x))\rightarrow x=y
\]
\[
\forall x,y.\left[\Member(y,\Singleton(x))\right]\neg\Member(y,\Singleton(x))\vee x=y
\]

\subsubsection{Add \\}

\noindent add-bi {[}dafny, viper{]}
\[
\forall s,x,y.\left[\Member(y,\Add(x,s))\right]\Member(y,\Add(x,s))\leftrightarrow\left(y=x\right)\vee\Member(y,s)
\]

\noindent add-intro-1 {[}all{]}
\[
\forall s,x,y.\;\Member(y,\Add(x,s))\leftarrow\Member(y,s)
\]
\[
\begin{array}{r}
\forall s,x,y.\left[\Member(y,s),\Add(x,s)\right]\left[\Member(y,\Add(x,s))\right]\\
\;\Member(y,\Add(x,s))\vee\neg\Member(y,s)
\end{array}
\]
Note that we added the trigger $\left[\Member(y,\Add(x,s))\right]$. 

\noindent add-intro-2 {[}all{]}
\[
\forall s,x.\;\Member(x,\Add(x,s))
\]
\[
\forall s,x.\left[\Add(x,s)\right]\;\Member(x,\Add(x,s))
\]

\noindent add-intro-3 
\[
\forall s,x,y.\;\Member(y,\Add(x,s))\leftarrow y=x
\]
\[
\begin{array}{r}
\forall s,x,y.\left[\Member(y,\Add(x,s))\right]\left[\Member(y,s),\Add(x,s)\right]\\
\Member(y,\Add(x,s))\vee y\neq x
\end{array}
\]

\noindent add-elim
\[
\forall s,x,y.\;\Member(y,\Add(x,s))\rightarrow(x=y)\vee\Member(y,s)
\]
\[
\begin{array}{r}
\forall s,x,y.\left[\Member(y,\Add(x,s))\right]\left[\Member(y,s),\Add(x,s)\right]\\
\neg\Member(y,\Add(x,s))\vee(x=y)\vee\Member(y,s)
\end{array}
\]

\subsubsection{Union \\}

\noindent union-bi {[}dafny, viper{]}
\[
\begin{array}{l}
\forall s_{1},s_{2},x.\left[\Member\left(x,\Union\left(s_{1},s_{2}\right)\right)\right]\\
\Member\left(x,\Union\left(s_{1},s_{2}\right)\right)\leftrightarrow\Member(x,s_{1})\vee\Member(x,s_{2})
\end{array}
\]

\noindent union-intro-1 {[}all{]}
\[
\forall s_{1},s_{2},x.\Member\left(x,\Union\left(s_{1},s_{2}\right)\right)\leftarrow\Member\left(x,s_{1}\right)
\]
\[
\begin{array}{r}
\forall s_{1},s_{2},x.\left[\Union(s_{1},s_{2}),\Member(x,s_{1})\right]\left[\Member\left(x,\Union\left(s_{1},s_{2}\right)\right)\right]\\
\Member\left(x,\Union\left(s_{1},s_{2}\right)\right)\vee\neg\Member\left(x,s_{1}\right)
\end{array}
\]
Note that we added the trigger $\left[\Member\left(x,\Union\left(s_{1},s_{2}\right)\right)\right]$. 

\noindent union-intro-2 {[}all{]}
\[
\forall s_{1},s_{2},x.\Member\left(x,\Union\left(s_{1},s_{2}\right)\right)\leftarrow\Member\left(x,s_{2}\right)
\]
\[
\begin{array}{r}
\forall s_{1},s_{2},x.\left[\Union(s_{1},s_{2}),\Member(x,s_{2})\right]\left[\Member\left(x,\Union\left(s_{1},s_{2}\right)\right)\right]\\
\Member\left(x,\Union\left(s_{1},s_{2}\right)\right)\vee\neg\Member\left(x,s_{2}\right)
\end{array}
\]
Note that we added the trigger $\left[\Member\left(x,\Union\left(s_{1},s_{2}\right)\right)\right]$. 

\noindent union-elim
\[
\forall s_{1},s_{2},x.\Member\left(x,\Union\left(s_{1},s_{2}\right)\right)\rightarrow\Member(x,s_{1})\vee\Member(x,s_{2})
\]
\[
\begin{array}{l}
\forall s_{1},s_{2},x.\\
\left[\Member(x,\Union(s_{1},s_{2}))\right]\\
\left[\Union(s_{1},s_{2}),\Member(x,s_{1})\right]\left[\Union(s_{1},s_{2}),\Member(x,s_{2})\right]\\
\;\neg\Member\left(x,\Union\left(s_{1},s_{2}\right)\right)\vee\Member(x,s_{1})\vee\Member(x,s_{2})
\end{array}
\]

\noindent union-disjoint {[}dafny, ours{]}
\[
\begin{array}{l}
\forall s_{1},s_{2}.\left[\Union(s_{1},s_{2})\right]\\
\Disjoint(s_{1},s_{2})\rightarrow\left(\Diff(\Union(s_{1},s_{2}),s_{1})=s_{2}\right)\wedge\left(\Diff(\Union(s_{1},s_{2}),s_{2})=s_{1}\right)
\end{array}
\]

\subsubsection{Intersection \\}

\noindent inter-bi {[}dafny{]}
\[
\begin{array}{l}
\forall s_{1},s_{2},x.\left[\Member(x,\Inter(s_{1},s_{2}))\right]\\
\Member(x,\Inter(s_{1},s_{2}))\leftrightarrow\Member(x,s_{1})\wedge\Member(x,s_{2})
\end{array}
\]
inter-bi {[}viper{]}
\[
\begin{array}{l}
\forall s_{1},s_{2},x.\\
\left[\Member(x,\Inter(s_{1},s_{2}))\right]\\
\left[\Inter(s_{1},s_{2}),\Member(x,s_{1})\right]\left[\Inter(s_{1},s_{2}),\Member(x,s_{2})\right]\\
\;\Member(x,\Inter(s_{1},s_{2}))\leftrightarrow\Member(x,s_{1})\wedge\Member(x,s_{2})
\end{array}
\]

\noindent inter-intro
\[
\forall s_{1},s_{2},x.\Member(x,\Inter(s_{1},s_{2}))\leftarrow\Member(x,s_{1})\wedge\Member(x,s_{2})
\]
\[
\begin{array}{l}
\forall s_{1},s_{2},x.\\
\left[\Member(x,s_{1}),\Inter(s_{1},s_{2})\right]\left[\Member(x,s_{2}),\Inter(s_{1},s_{2})\right]\\
\left[\Member(x,\Inter(s_{1},s_{2}))\right]\\
\;\Member(x,\Inter(s_{1},s_{2}))\vee\neg\Member(x,s_{1})\vee\neg\Member(x,s_{2})
\end{array}
\]

\noindent inter-elim
\[
\forall s_{1},s_{2},x.\Member(x,\Inter(s_{1},s_{2}))\rightarrow\Member(x,s_{1})\wedge\Member(x,s_{2})
\]
\[
\begin{array}{l}
\forall s_{1},s_{2},x.\left[\Member(x,\Inter(s_{1},s_{2}))\right]\\
\left[\Inter(s_{1},s_{2}),\Member(x,s_{1})\right]\left[\Inter(s_{1},s_{2}),\Member(x,s_{2})\right]\\
\left(\neg\Member(x,\Inter(s_{1},s_{2}))\vee\Member(x,s_{1})\right)\wedge\\
\left(\neg\Member(x,\Inter(s_{1},s_{2}))\vee\Member(x,s_{2})\right)
\end{array}
\]

\subsubsection{Properties on Union and Intersection \\}

\noindent union-right {[}all{]}
\[
\forall s_{1},s_{2}.\left[\Union(\Union(s_{1},s_{2}),s_{2})\right]\Union(\Union(s_{1},s_{2}),s_{2})=\Union(s_{1},s_{2})
\]

\noindent union-left {[}all{]}
\[
\forall s_{1},s_{2}.\left[\Union(s_{1},\Union(s_{1},s_{2}))\right]\Union(s_{1},\Union(s_{1},s_{2}))=\Union(s_{1},s_{2})
\]

\noindent inter-right {[}all{]}
\[
\forall s_{1},s_{2}.\left[\Inter(\Inter(s_{1},s_{2}),s_{2})\right]\Inter(\Inter(s_{1},s_{2}),s_{2})=\Inter(s_{1},s_{2})
\]

\noindent inter-left {[}all{]} 
\[
\forall s_{1},s_{2}.\left[\Inter(s_{1},\Inter(s_{1},s_{2}))\right]\Inter(s_{1},\Inter(s_{1},s_{2}))=\Inter(s_{1},s_{2})
\]

\subsubsection{Difference \\}

\noindent diff-bi {[}dafny{]}
\[
\begin{array}{l}
\forall s_{1},s_{2},x.\left[\Member(x,\Diff(s_{1},s_{2}))\right]\\
\Member(x,\Diff(s_{1},s_{2}))\leftrightarrow\Member(x,s_{1})\wedge\neg\Member(x,s_{2})
\end{array}
\]

\noindent diff-bi {[}viper{]}
\[
\begin{array}{r}
\forall s_{1},s_{2},x.\left[\Member(x,\Diff(s_{1},s_{2}))\right]\left[\Diff(s_{1},s_{2}),\Member(x,s_{1})\right]\\
\Member(x,\Diff(s_{1},s_{2}))\leftrightarrow\Member(x,s_{1})\wedge\neg\Member(x,s_{2})
\end{array}
\]

\noindent diff-notin {[}dafny, viper{]} 
\[
\begin{array}{l}
\forall s_{1},s_{2},x.\left[\Diff(s_{1},s_{2}),\Member(x,s_{2})\right]\\
\Member(x,s_{2})\rightarrow\neg\Member(x,\Diff(s_{1},s_{2}))
\end{array}
\]

\noindent diff-intro
\[
\forall s_{1},s_{2},x.\Member(x,\Diff(s_{1},s_{2}))\leftarrow\Member(x,s_{1})\wedge\neg\Member(x,s_{2})
\]
\[
\begin{array}{l}
\forall s_{1},s_{2},x.\left[\Member(x,s_{1}),\Diff(s_{1},s_{2})\right]\\
\left[\Member(x,\Diff(s_{1},s_{2}))\right]\left[\Member(x,s_{2}),\Diff(s_{1},s_{2})\right]\\
\Member(x,\Diff(s_{1},s_{2}))\vee\neg\Member(x,s_{1})\vee\Member(x,s_{2})
\end{array}
\]

\noindent diff-elim
\[
\forall s_{1},s_{2},x.\Member(x,\Diff(s_{1},s_{2}))\rightarrow\Member(x,s_{1})\wedge\neg\Member(x,s_{2})
\]
\[
\begin{array}{r}
\forall s_{1},s_{2},x.\left[\Member(x,\Diff(s_{1},s_{2}))\right]\left[\Member(x,s_{2}),\Diff(s_{1},s_{2})\right]\\
\left[\Member(x,s_{1}),\Diff(s_{1},s_{2})\right]\\
\left(\neg\Member(x,\Diff(s_{1},s_{2}))\vee\Member(x,s_{1})\right)\wedge\\
\left(\neg\Member(x,\Diff(s_{1},s_{2}))\vee\neg\Member(x,s_{2})\right)
\end{array}
\]

\subsubsection{Subset  \\}

\noindent subset-bi {[}dafny, viper{]} 
\[
\begin{array}{l}
\forall s_{1},s_{2}.\left[\Subset(s_{1},s_{2})\right]\Subset(s_{1},s_{2})\leftrightarrow\\
\left(\forall x.\left[\Member(x,s_{1})\right]\left[\Member(x,s_{2})\right]\Member(x,s_{1})\rightarrow\Member(x,s_{2})\right)
\end{array}
\]

\noindent subset-intro 
\[
\forall s_{1},s_{2}.\Subset(s_{1},s_{2})\leftarrow\left(\forall x.\Member(x,s_{1})\rightarrow\Member(x,s_{2})\right)
\]
\[
\begin{array}{l}
\forall s_{1},s_{2}.\left[\Subset(s_{1},s_{2})\right]\\
\left(\Subset(s_{1},s_{2})\vee\Member(\Skolem_{ss}(s_{1},s_{2}),s_{1})\right)\wedge\\
\left(\Subset(s_{1},s_{2})\vee\neg\Member(\Skolem_{ss}(s_{1},s_{2}),s_{2})\right)
\end{array}
\]

\noindent subset-elim 
\[
\forall s_{1},s_{2}.\Subset(s_{1},s_{2})\rightarrow\left(\forall x.\Member(x,s_{1})\rightarrow\Member(x,s_{2})\right)
\]
\[
\begin{array}{l}
\forall s_{1},s_{2}.\left[\Subset(s_{1},s_{2})\right]\;\neg\Subset(s_{1},s_{2})\vee\\
\left(\forall x.[\Member(x,s_{1})][\Member(x,s_{2})]\;\neg\Member(x,s_{1})\vee\Member(x,s_{2})\right)
\end{array}
\]

\subsubsection{Extensionality \\}

\noindent equal-sets-bi {[}dafny, viper{]} 
\[
\begin{array}{l}
\forall s_{1},s_{2}.\left[\Equal(s_{1},s_{2})\right]\Equal(s_{1},s_{2})\leftrightarrow\\
\left(\forall x.\left[\Member(x,s_{1})\right]\left[\Member(x,s_{2})\right]\Member(x,s_{1})\leftrightarrow\Member(x,s_{2})\right)
\end{array}
\]

\noindent equal-sets-extensionality {[}dafny, viper{]} 
\[
\forall s_{1},s_{2}.\left[\Equal(s_{1},s_{2})\right]\Equal(s_{1},s_{2})\rightarrow s_{1}=s_{2}
\]

\noindent equal-sets-intro
\[
\begin{array}{l}
\forall s_{1},s_{2}.\Equal(s_{1},s_{2})\leftarrow\\
\left(\forall x.\Member(x,s_{1})\leftrightarrow\Member(x,s_{2})\right)
\end{array}
\]
\[
\begin{array}{l}
\forall s_{1},s_{2}.\left[\Equal(s_{1},s_{2})\right]\\
\left(\Equal(s_{1},s_{2})\vee\Member(\Skolem_{\textit{eq}}(s_{1},s_{2}),s_{1})\vee\Member(\Skolem_{\textit{eq}}(s_{1},s_{2}),s_{2})\right)\wedge\\
\left(\Equal(s_{1},s_{2})\vee\neg\Member(\Skolem_{\textit{eq}}(s_{1},s_{2}),s_{1})\vee\neg\Member(\Skolem_{\textit{eq}}(s_{1},s_{2}),s_{2})\right)
\end{array}
\]

\noindent equal-sets-extensionality 
\[
\forall s_{1},s_{2}.\left[\Equal(s_{1},s_{2})\right]\Equal(s_{1},s_{2})\rightarrow s_{1}=s_{2}
\]
\[
\forall s_{1},s_{2}.\left[\Equal(s_{1},s_{2})\right]\neg\Equal(s_{1},s_{2})\vee s_{1}=s_{2}
\]

\subsubsection{Disjoint \\}

\noindent disjoint-bi {[}dafny{]} 
\[
\begin{array}{l}
\forall s_{1},s_{2}.\left[\Disjoint(s_{1},s_{2})\right]\Disjoint(s_{1},s_{2})\leftrightarrow\\
\left(\forall x.\left[\Member(x,s_{1})\right]\left[\Member(x,s_{2})\right]\neg\Member(x,s_{1})\vee\neg\Member(x,s_{2})\right)
\end{array}
\]

\noindent disjoint-intro 
\[
\forall s_{1},s_{2}.\Disjoint(s_{1},s_{2})\leftarrow\left(\forall x.\neg\Member(x,s_{1})\vee\neg\Member(x,s_{2})\right)
\]
\[
\begin{array}{r}
\forall s_{1},s_{2}.\left[\Disjoint(s_{1},s_{2})\right]\\
\left(\Disjoint(s_{1},s_{2})\vee\Member(\Skolem_{\textit{dj}}(s_{1},s_{2}),s_{1})\right)\wedge\\
\left(\Disjoint(s_{1},s_{2})\vee\Member(\Skolem_{\textit{dj}}(s_{1},s_{2}),s_{2})\right)
\end{array}
\]

\noindent disjoint-elim  
\[
\forall s_{1},s_{2}.\Disjoint(s_{1},s_{2})\rightarrow\left(\forall x.\neg\Member(x,s_{1})\vee\neg\Member(x,s_{2})\right)
\]
\[
\begin{array}{l}
\forall s_{1},s_{2}.\left[\Disjoint(s_{1},s_{2})\right]\neg\Disjoint(s_{1},s_{2})\vee\\
\left(\forall x.\left[\Member(x,s_{1})\right]\left[\Member(x,s_{2})\right]\neg\Member(x,s_{1})\vee\neg\Member(x,s_{2})\right)
\end{array}
\]

\subsubsection{Remove \\}

\noindent remove-intro-1
\[
\forall s,x,y.\Member(y,\Remove(x,s))\leftarrow y\neq x\wedge\Member(y,s)
\]
\[
\begin{array}{r}
\forall s,x,y.\left[\Member(y,s),\Remove(x,s)\right]\left[\Member(y,\Remove(x,s))\right]\\
y=x\vee\neg\Member(y,s)\vee\Member(y,\Remove(x,s))
\end{array}
\]

\noindent remove-intro-2 
\[
\forall s,x.\neg\Member(x,\Remove(x,s))
\]
\[
\forall s,x.\left[\Remove(x,s)\right]\;\neg\Member(x,\Remove(x,s))
\]

\noindent remove-intro-3
\[
\forall s,x.\neg\Member(y,\Remove(x,s))\leftarrow y=x
\]
\[
\begin{array}{r}
\forall s,x.\left[\Member(y,\Remove(x,s))\right]\left[\Member(y,s),\Remove(x,s)\right]\\
\neg\Member(y,\Remove(x,s))\vee y\neq x
\end{array}
\]

\noindent remove-elim
\[
\forall s,x,y.\Member(y,\Remove(x,s))\rightarrow y\neq x\wedge\Member(y,s)
\]
\[
\begin{array}{r}
\forall s,x,y.\left[\Member(y,\Remove(x,s))\right]\left[\Member(y,s),\Remove(x,s)\right]\\
\left(\neg\Member(y,\Remove(x,s))\vee y\neq x\right)\wedge\\
\left(\neg\Member(y,\Remove(x,s))\vee\Member(y,s)\right)
\end{array}
\]

\subsubsection{IsEmpty \\}

\noindent isEmpty-intro-1 
\[
\forall s.\left(\IsEmpty(s)\leftarrow\forall x.\neg\Member(x,s)\right)
\]
\[
\forall s.\left[\IsEmpty(s)\right]\;\IsEmpty(s)\vee\Member(\Skolem_{ie}(s),s)
\]

\noindent isEmpty-intro-2 
\[
\forall s.\;\IsEmpty(s)\leftarrow\Equal(s,\Empty())
\]
\[
\forall s.\left[\IsEmpty(s)\right]\left[\Equal(s,\Empty())\right]\;\IsEmpty(s)\vee\neg\Equal(s,\Empty())
\]

\noindent isEmpty-elim-1  
\[
\forall s.\left(\IsEmpty(s)\rightarrow\forall x.\neg\Member(x,s)\right)
\]
\[
\forall s.\left[\IsEmpty(s)\right]\;\neg\IsEmpty(s)\vee\forall x.[\Member(x,s)]\;\neg\Member(x,s)
\]

\noindent isEmpty-elim-2 
\[
\forall s.\;\IsEmpty(s)\rightarrow\Equal(s,\Empty())
\]
\[
\forall s.\left[\IsEmpty(s)\right]\left[\Equal(s,\Empty())\right]\;\neg\IsEmpty(s)\vee\Equal(s,\Empty())
\]

\end{appendix}
%%%%%%%%%%%%% DO NOT REMOVE %%%%%%%%%%%%%
%%%%%%%%%%%%%%%%%%%%%%%%%%%%%%%%%%%%%%%%%

\end{document}